\newif\iflong
\newif\ifshort
    \title{SAT Backdoors: Depth Beats Size}
    \author{Jane {Open Access}}{Dummy University Computing Laboratory,
     [optional: Address], Country}{johnqpublic@dummyuni.org}{https://orcid.org/0000-0002-1825-0097}{(Optional)
       author-specific funding acknowledgements}
    \authorrunning{J. Open Access and J.\,R. Public} 
    \keywords{satisfiability, backdoor (depth)} 
    \title{SAT Backdoors: Depth Beats Size}
    \author[1]{Jan Dreier}
    \author[2]{Sebastian Ordyniak}
    \author[1]{Stefan Szeider}
\affil[1]{\small Algorithms and Complexity Group, TU Wien, Vienna, Austria, \texttt{\{dreier,sz\}@ac.tuwien.ac.at}}
\affil[2]{\small Algorithms and Complexity Group, University of Leeds,
  UK, \texttt{s.ordyniak@leeds.ac.uk}}
\date{}
\newenvironment{theoremOptional}[2]
    { \begin{theoremEnd}[category=#1]{#2} }
    { \end{theoremEnd} }
\newenvironment{theoremOptional}[2]
    { \begin{theoremEnd}[category=#1]{#2}[$\star$] }
    { \end{theoremEnd} }
    \newtheorem{theorem}{Theorem}
    \newtheorem{corollary}[theorem]{Corollary}
    \newtheorem{observation}[theorem]{Observation}
    \newtheorem{lemma}[theorem]{Lemma}
    \newtheorem{proposition}[theorem]{Proposition}
    \theoremstyle{definition}
    \newtheorem{definition}[theorem]{Definition}
\newtheorem{main}{Main Result}
\def\phi{\varphi}
\def\cal{\mathcal}
\def\N{\mathbf N}
\newcommand{\mpara}[1]{\smallskip\noindent\textbf{#1}}
\newcommand{\hy}{\hbox{-}\nobreak\hskip0pt}
\newcommand{\SB}{\{\,}%
\newcommand{\SM}{\;{|}\;}%
\newcommand{\SE}{\,\}}%
\colorlet{MyBlue}{blue!50!black!100!}
\colorlet{MyRed}{red!50!black!100!}
\newcommand{\nat}{\mathbb{N}}
\newcommand{\SSS}{\mathsf{S}}
\newcommand{\CCC}{\mathcal{C}}
 \newcommand{\TTT}{\mathcal{T}}
\newcommand{\WWW}{\mathcal{W}}
\newcommand{\XXX}{\mathcal{X}}
\newcommand{\bad}{$\CCC$-bad\xspace}
\newcommand{\good}{$\CCC$-good\xspace}
\newcommand{\bigoh}{\mathcal{O}}
\newcommand{\Card}[1]{{|#1|}}
\newcommand{\mtext}[1]{\text{\normalfont #1}}
\newcommand{\CNF}{\mathcal{C\hspace{-0.1ex}N\hspace{-0.3ex}F}}
\newcommand{\Horn}{\mtext{\sc Horn}}
\newcommand{\QHorn}{\mtext{\sc Q-Horn}}
\newcommand{\dHorn}{\mtext{\sc dHorn}}
\newcommand{\Krom}{\mtext{\sc Krom}}
\newcommand{\EMPTY}{\mtext{\sc Null}}
\newcommand{\Bassgn}[2]{\ensuremath{#1:#2\rightarrow\{0,1\}}}
\newcommand{\satf}{F}
\newcommand{\tuple}[1]{\langle{#1}\rangle}  
\newcommand{\seq}[1]{\langle #1 \rangle}
\newcommand{\Conn}{\mathrm{Conn}}
\newcommand{\MSV}{M\"{a}hlmann et al.~\cite{MaehlmannSiebertzVigny21}\xspace}
\newcommand{\SAT}{\textsc{CnfSat}\xspace}
\newcommand{\size}[1]{{\|#1\|}}
\newcommand{\var}{\mathit{var}}
\newcommand{\srbd}{\mathrm{depth}}
\newcommand{\sbs}{\mathrm{size}}
\newcommand{\dbs}{\mathrm{del}}
\newcommand{\bdtw}{\mathrm{bdtw}}
\newcommand{\bdt}{\mathrm{leaves}}
\newcommand{\game}{\textsc{Game}(\satf, \CCC)}
\newcommand{\gameC}{\textsc{Game}(\satf, \CCC)}
\newcommand{\gameJOne}{\textsc{Game}(J_1, \CCC)}
\begin{document}
\iflong
\thispagestyle{empty}
\fi
\maketitle

\begin{abstract}
  For several decades, much effort has been put into identifying
  classes of CNF formulas whose satisfiability can be decided in
  polynomial time. Classic results are the linear-time tractability of
  Horn formulas (Aspvall, Plass, and Tarjan, 1979) and Krom (i.e.,
  2CNF) formulas (Dowling and Gallier, 1984). Backdoors, introduced by
  Williams Gomes and Selman (2003), gradually extend such a tractable
  class to all formulas of bounded distance to the class. Backdoor
  size provides a natural but
  rather crude distance measure between a formula and a tractable
  class. Backdoor depth, introduced by M\"{a}hlmann, Siebertz, and
  Vigny (2021), is a more refined distance measure, which admits the
  utilization of different backdoor variables in parallel. Bounded
  backdoor size implies bounded backdoor depth, but there are
  formulas of constant backdoor depth and arbitrarily large backdoor
  size.

  We propose FPT approximation algorithms to compute backdoor depth
  into the classes Horn and Krom. This leads to a linear-time algorithm for deciding the
  satisfiability of formulas of bounded backdoor depth into these
  classes. We base our FPT approximation algorithm on a sophisticated
  notion of obstructions, extending M\"{a}hlmann et al.'s obstruction
  trees in various ways, including the addition of separator
  obstructions. We develop the algorithm through a new game-theoretic
  framework that simplifies the reasoning about backdoors.

  Finally, we show that bounded backdoor depth captures tractable
  classes of CNF formulas not captured by any known method.
\end{abstract}

\section{Introduction}
\enlargethispage*{3mm}
Deciding the satisfiability of a propositional formula in conjunctive
normal form (\SAT) is one of the most important NP-complete problems
\cite{Cook71,Levin73}. Despite its theoretical intractability,
heuristic algorithms work surprisingly fast on real-world \SAT
instances~\cite{GaneshVardi20}. A common explanation for this
discrepancy between theoretical hardness and practical feasibility is
the presence of a certain ``hidden structure'' in realistic \SAT
instances~\cite{GomesKautzSabharwalSelman08}.
There are various approaches to capturing the vague notion of a
``hidden structure'' with a mathematical concept.  One widely studied
approach is to consider the hidden structure in terms of
decomposability. For instance, \SAT can be solved in quadratic time
for classes of CNF formulas of bounded
branchwidth~\cite{BacchusDalmaoPitassi03} or bounded treewidth~\cite{SamerSzeider10}

A complementary approach proposed by Williams et
al. \cite{WilliamsGomesSelman03} considers the hidden structure of a
\SAT instance in terms of a small number of key variables, called
\emph{backdoor variables}, that when instantiated moves the instance
into a polynomial-time solvable class. More precisely, a \emph{backdoor}\footnote{We focus on strong backdoors; we will not consider
  weak backdoors as they only apply to satisfiable formulas.}  of
\emph{size $k$} of a CNF formula $F$ into a polynomial-time solvable
class~$\CCC$ is a set $B$ of $k$ variables such that for all partial
assignments $\tau$ to $B$, the instantiated formula $F[\tau]$ belongs
to $\CCC$.
In fact, \SAT can be solved in linear time for any class of CNF
formulas that admit backdoors of \emph{bounded size} into the class
of \emph{Horn} formulas, \emph{dual Horn} formulas, and \emph{Krom}
formulas (i.e., 2CNF)\footnote{According to Schaefer's Theorem~\cite{Schaefer78}, these three classes are the largest nontrivial classes of CNF formulas defined in terms of a property of clauses, for which \SAT can be solved in polynomial time.}.
 
The size of a smallest backdoor of a CNF formula $F$ into a class $\CCC$
is a fundamental but rather simple distance measure between 
$F$ and $\CCC$.
%
M\"{a}hlmann, Siebertz, and Vigny~\cite{MaehlmannSiebertzVigny21}
proposed to consider instead the smallest \emph{depth} over all
backdoors of a formula $F$ into a class $\CCC$ as distance measure.
It is recursively defined as follows:
\begin{equation}
  \label{eq:defsrbd}
  \srbd_\CCC(F) :=
  \begin{cases}
    0   & \text{if $F\in \CCC$;}\\
    \displaystyle 1+ \min_{x\in \var(F)} \max_{\epsilon\in \{0,1\}}
    \srbd_\CCC(F[x=\epsilon]) & \text{if $F\notin \CCC$ and $F$
      is connected;}\\
    \displaystyle\max_{F'\in \Conn(F)}\srbd_\CCC(F') & \text{otherwise.}
  \end{cases}
\end{equation}
$\Conn(F)$ denotes the set of connected components of $F$; precise
definitions are given in Section~\ref{sec:prelim}.  We can certify
$\srbd_\CCC(F)\leq k$ with a \emph{component $\CCC$\hy backdoor tree}
of depth $\leq k$ which is a decision tree that reflects the choices
made in the above recursive definition.

Backdoor depth is based on the observation that if an instance $F$
decomposes into multiple connected components of $F[x=0]$ and
$F[x=1]$, then each component can be treated independently.  This way,
one is allowed to use in total an unbounded number of backdoor
variables. However, as long as the depth of the component $\CCC$\hy
backdoor tree is bounded, one can still utilize the backdoor variables
to solve the instance efficiently.  In the context of graphs, similar
ideas are used in the study of
tree-depth~\cite{NesetrilMendez06,NesetrilMendez12} and elimination
distance~\cite{bulian2016graph,fomin2021parameterized}.
Bounded backdoor size implies bounded backdoor depth, but there are
classes of formulas of unbounded backdoor size but bounded backdoor
depth.


The challenging algorithmic problem \textsc{$\CCC$\hy Backdoor Depth}
is to find for a fixed base class~$\CCC$ and a given formula $F$, a
component $\CCC$\hy backdoor tree of $F$  of depth $\leq k$.
\MSV gave an FPT-approximation algorithm for this problem, with $k$ as
the parameter) where $\CCC$ is the trivial class $\EMPTY$ for formulas
without variables. A component $\EMPTY$\hy backdoor tree must
instantiate all variables of~$F$.

\subsection*{New Results} In this paper, we give the first
positive algorithmic results for backdoor depth into nontrivial
classes. A minimization problem admits a \emph{standard
  fixed-parameter tractable approximation (FPT-approximation)}~\cite{DBLP:journals/cj/Marx08} if for an
instance of size $n$ and parameter $k$ there is an \emph{FPT-algorithm},
i.e., an algorithm running in time $f(k)n^{\bigoh(1)}$, that either
outputs a solution of size at most $g(k)$ or outputs that the instance
has no solution of size at most $k$, for some computable functions $f$
and $g$; $g(k)$ is also referred to as the performance ratio of
the algorithm.

\begin{main}[Theorem~\ref{thm:main}]
  \textnormal{\textsc{$\CCC$\hy Backdoor Depth}} admits an
  FPT-approximation if $\CCC$ is any of the Schaefer classes Horn, dual
  Horn, or Krom.
\end{main}
   Since our FPT algorithms have linear running time for
fixed~backdoor depth~$k$, we obtain the following corollary:

\begin{main}[Corollary~\ref{cor:horn-dhorn-krom}]
  \SAT can be solved in linear time for formulas of bounded backdoor
  depth into the Schaefer classes Horn, dual Horn, and Krom.
\end{main}


\newcommand{\ort}{orthogonal\xspace}
\newcommand{\sd}{strictly dominated\xspace} 

Backdoor depth is a powerful parameter that is able to capture and
exploit structure in \SAT instances that is not captured by any other
known method. We list here some well-known parameters which render \SAT
fixed-parameter tractable (the list is not complete but covers some of
the most essential parameters). For all these parameters, there exist
CNF formulas with constant backdoor depth (into \emph{Horn},
\emph{dual Horn}, and \emph{Krom}) but where the other parameter is
arbitrarily large.  
If there also exist
formulas where the converse is true, we label the respective parameter
as ``\ort'', otherwise we label it as ``\sd.''
\begin{enumerate}
\item backdoor size into Horn, dual Horn, and Krom
  \cite{NishimuraRagdeSzeider04-informal} (\sd);
\item  number of leaves of backdoor trees into Horn, dual Horn, and
  Krom \cite{SamerSzeider08b,OrdyniakSchidlerSzeider20} (\sd);
\item  backdoor depth into the class of variable-free formulas~\cite{MaehlmannSiebertzVigny21} (\sd);
\item  backdoor treewidth to Horn, dual Horn, and Krom
  \cite{GanianRamanujanSzeider17,GanianRamanujanSzeider17b} (\ort);
\item backdoor size  into  heterogeneous base classes based on
  Horn, dual Horn, and  Krom \cite{GaspersMisraOrdyniakSzeiderZivny17} (\ort);
\item backdoor size into scattered base classes based on Horn, dual
  Horn, and Krom \cite{GanianRamanujanSzeider17a} (\ort);
\item  deletion backdoor size  into the class of quadratic Horn formulas
  \cite{GaspersOrdyniakRamanujanSaurabhSzeider16} (\ort);
\item backdoor size into bounded incidence
  treewidth~\cite{GaspersSzeider13} (\ort).
\end{enumerate}
\ifshort
We will give definitions and separation proofs in
\cref{sec:comp}.
\fi
\subsection*{Approach and Techniques} 

A common approach to construct backdoors is to compute in parallel
both an upper bound and a lower bound.  The upper bounds are obtained by
constructing the backdoor itself, and lower bounds are usually
obtained in the form of so-called \emph{obstructions}.  These are
parts of an instance that are proven to be ``far away'' from the base
class. Our results and techniques build upon the pioneering work by \MSV, who
introduce \emph{obstruction trees} for backdoor depth.  A main
drawback of their approach is that it is limited to the trivial base class $\EMPTY$, where
the obstructions are rather simple because they can contain
only boundedly many variables. Our central technical contribution is overcoming
this limitation by introducing \emph{separator obstructions}.

Separator obstructions allow us to algorithmically work with
obstruction trees containing an unbounded number of variables, an apparent
requirement for dealing with nontrivial base classes different form
$\EMPTY$.  In the context of backdoor depth, it is crucial that an
existing obstruction is disjoint from all potential future
obstructions, so they can later be joined safely into a new
obstruction of increased depth.  \MSV ensure this by placing
the whole current obstruction tree into the backdoor---an approach
that only works for the most trivial base class because only there
the obstructions have a bounded number of variables.  As one considers
more and more general base classes, one needs to construct more and
more complex obstructions to prove lower bounds.  For example, as
instances of the base class no longer have bounded diameter (of the
incidence graph of the formula) or bounded clause length,
neither have the obstructions one needs to consider.  Such
obstructions become increasingly hard to separate.  Our
separator obstructions can separate obstruction trees containing an
unbounded number of variables from all potential future obstruction
trees.  We obtain backdoors of bounded depth by combining the
strengths of separator obstructions and obstruction trees.  We further
introduce a \emph{game-theoretic framework} to reason about backdoors of
bounded depth.  With this notion, we can compute winning
strategies instead of explicitly constructing backdoors, greatly simplifying
the presentation of our algorithms.
\ifshort
\begin{center}
\emph{Due to space constrains, we provide the proofs of statements
  marked with $\star$ in the appendix.}
\end{center}
\fi

\iflong
\goodbreak
\fi

\section{Preliminaries}\label{sec:prelim}

\subsection{Satisfiability}
A \emph{literal} is a propositional variable $x$ or a negated variable
$\neg x$.  A \emph{clause} is a finite set of literals that does not contain a
complementary pair $x$ and $\neg x$ of literals. A propositional formula in
conjunctive normal form, or \emph{CNF formula} for short, is a set of
clauses. We denote by $\CNF$ the class of all CNF formulas. Let $F
\in \CNF$ and $c \in F$. We denote by $\var(c)$ the set of all
variables occurring in $c$, i.e.,
$\var(c)=\SB x \SM x\in c \lor \neg x \in c\SE$ and we set
$\var(F)=\bigcup_{c \in F}\var(c)$. For a set of literals $L$, we
denote by $\overline{L}=\SB \neg l \SM l \in L\SE$, the set of
complementary literals of the literals in $L$. The \emph{size} of a CNF formula $F$ is
$\|F\|=\sum_{c\in F}\Card{c}$.

Let $\tau : X
\rightarrow \{0,1\}$ be an assignment of some set $X$ of propositional
variables. If $X=\{x\}$ and $\tau(x)=\epsilon$, we will sometimes also denote the assignment
$\tau$ by $x=\epsilon$ for brevity. We denote by $\text{true}(\tau)$ ($\text{false}(\tau)$) the
set of all literals satisfied (falsified) by $\tau$, i.e.,
$\text{true}(\tau)=\SB x \in X \SM \tau(x)=1 \SE \cup \SB \neg
x \in \overline{X} \SM \tau(x)=0\SE$ ($\text{false}(\tau)=\overline{\text{true}(\tau)}$).
We denote by
$F[\tau]$ the formula obtained from $F$ after removing all clauses
that are satisfied by $\tau$ and from the remaining clauses removing
all literals that are falsified by $\tau$, i.e., $F[\tau]=\SB
c\setminus \text{false}(\tau) \SM c \in F \land c \cap \text{true}(\tau)=\emptyset\SE$.
We say that an assignment satisfies $F$ if $F[\tau]=\emptyset$.
We say that $F$ is \emph{satisfiable} if there is some
assignment $\tau : \var(F) \rightarrow \{0,1\}$ that satisfies $F$, otherwise $F$ is
\emph{unsatisfiable}.
$\SAT$ denotes the propositional satisfiability problem, which takes
as instance a CNF formula, and asks whether the formula is
satisfiable.

The \emph{incidence graph} of a CNF formula $F$ is the bipartite graph
$G_F$ whose vertices are the variables and clauses of $F$, and where a
variable $x$ and a clause $c$ are adjacent if and only if
$x \in \var(c)$. Via incidence graphs, graph theoretic concepts
directly translate to CNF formulas. For instance, we say that $F$ is
\emph{connected} if $G_F$ is connected, and $F'$ is a \emph{connected
  component} of $F$ if $F'$ is a maximal connected subset of $F$.
$\Conn(F)$ denotes the set of connected components of $F$.
We will also consider the \emph{primal graph} of a CNF
formula $F$, which has as vertex set $\var(F)$, and has pairs of
variables $x,y\in var(F)$ adjacent if and only of $x,y\in \var(c)$ for
some $c\in F$.

\subsection{Base classes}


\newcommand{\CpnsBeAClass}{$\cal C = \cal C_{\alpha,s}$ with $\alpha \subseteq \{+,-\}$, $\alpha \neq \emptyset$, and $s \in \nat$\xspace}

Let $\alpha \subseteq \{+,-\}$ with $\alpha\neq \emptyset$, let
$F \in \CNF$ and $c \in F$. 
We say that a literal $l$ is an \emph{$\alpha$-literal} if is a positive literal and $+\in \alpha$ or it is a negative literal and $-\in\alpha$. 
We say that a variable $v$ of $F$, \emph{$\alpha$-occurs} in $c$, if $v$ or $\neg v$ is an $\alpha$-literal that is contained in $c$.
We denote by $\var_\alpha(c)$ the set of variables that $\alpha$-occur in $c$.
For $\alpha \subseteq \{+,-\}$ with $\alpha\neq \emptyset$ and $s \in \N$, let
$\CCC_{\alpha,s}$ be the class of all CNF formulas $F$ such that every clause of $F$ contains at most $s$ $\alpha$-literals.
For $\CCC\subseteq \CNF$, we say that a clause $c$ is \emph{$\CCC$-good} if $\{c\} \in \CCC$.
Otherwise, $c$ is \emph{$\CCC$-bad}. Let $\tau$ be any (partial) assignment  of the
variables of $F$. We will frequently make use of the fact that $\CCC_{\alpha,s}$ is \emph{closed under
  assignments}, i.e., if $F \in \CCC_{\alpha,s}$, then also $F[\tau]\in \CCC_{\alpha,s}$. Therefore, whenever a clause
$c \in F$ is $\CCC_{\alpha,s}$-good it will remain $\CCC_{\alpha,s}$-good in $F[\tau]$ and
conversely whenever a clause is $\CCC_{\alpha,s}$-bad in $F[\tau]$ it is also $\CCC_{\alpha,s}$-bad in $F$.

The classes $\CCC_{\alpha,s}$ capture (according to Schaefer's Dichotomy Theorem
\cite{Schaefer78}) the largest syntactic classes of CNF formulas 
for
which the satisfiability problem can be solved in polynomial time: The
class $\CCC_{\{+\},1}=\Horn$ of \emph{Horn formulas}, the class of
$\CCC_{\{-\},1}=\dHorn$ of \emph{dual-Horn formulas}, and the class
$\CCC_{\{+,-\},2}=\Krom$ of \emph{Krom} (or \emph{2CNF})
\emph{formulas}. Note also that the class $\EMPTY{}$ of formulas
containing no variables considered by \MSV is equal to
$\CCC_{\{+,-\},0}$. We follow Williams et
al. \cite{WilliamsGomesSelman03} to focus on classes that are
closed under assignments and therefore we do not consider the classes
of 0/1-valid and affine formulas.

\iflong\enlargethispage*{5mm}\fi
Note that every class $\CCC_{\alpha,s}$ (and therefore
also the classes of Krom, Horn, and dual-Horn formulas) is trivially
\emph{linear-time recognizable}, i.e., membership in the class can be tested in linear-time. We say that a class $\CCC$ of formulas
is \emph{tractable} or \emph{linear-time tractable}, if $\SAT$
restricted to formulas in $\CCC$
can be solved in polynomial-time or linear-time, respectively.
The classes  $\Horn,\dHorn,\Krom$ are linear-time
tractable~\cite{AspvallPlassTarjan79,DowlingGallier84}.

\newcommand{\CBeAClass}{$\cal C \subseteq \CNF$\xspace}

\section{Backdoor Depth}\label{sec:srbd}

A \emph{binary decision tree} is a rooted binary tree $T$.  Every
inner node $t$ of $T$ is assigned a propositional variable, denoted by
$\var(t)$, and has exactly one left and one right child, which
correspond to setting the variable to $0$ or $1$,
respectively. Moreover, every variable occurs at most once on any
root-to-leaf path of $T$. We denote by $\var(T)$ the set of all
variables assigned to any node of $T$. Finally, we associate with each
node $t$ of $T$, the truth assignment $\tau_t$ that is defined on all
the variables $\var(P)\setminus \{\var(t)\}$ occurring on the unique path~$P$ from the root
of $T$ to $t$ such that $\tau_t(v)=0$ ($\tau_t(v)=1$) if
$v \in \var(P)\setminus \{\var(t)\}$ and $P$ contains the left child
(right child) of the node $t'$ on $P$ with $\var(t')=v$.
Let $\CCC$ be a base class, $F$ be a CNF formula, and~$T$ be a decision tree
with~$\var(T)\subseteq \var(F)$. Then~$T$ is a \emph{$\CCC$\hy
  backdoor tree of $F$} if~$F[\tau_t]\in \CCC$ for every leaf~$t$
of~$T$ \cite{SamerSzeider08b}.
  
\emph{Component backdoor trees} generalize backdoor trees as
considered by Samer and Szeider \cite{SamerSzeider08b} by allowing an
additional type of nodes, \emph{component nodes}, where the current
instance is split into connected components.  More precisely, let
$\CCC$ be a base class and $\satf$ a CNF formula. A \emph{component
  $\CCC$\hy backdoor tree for $F$} 
  is a pair $(T,\phi)$, where $T$ is a
rooted tree and $\phi$ is a mapping that assigns each node $t$ a CNF
formula $\phi(t)$ such that the following conditions are
satisfied:
\begin{enumerate}
\item For the root $r$ of $T$, we have $\phi(r)=\satf$.
\item For each leaf $\ell$ of $T$, we have $\phi(\ell)\in \CCC$.
\item For each non-leaf $t$ of $T$, there are two possibilities:
  \begin{enumerate}
  \item $t$ has exactly two children $t_0$ and $t_1$, where for some
    variable $x\in \var(\phi(t))$ we have
    $\phi(t_i)=\phi(t)[x=i]$; in this
    case we call $t$ a \emph{variable node}.
  \item $\Conn(\phi(t))=\{F_1,\dots,F_k\}$ for $k\geq 2$ and $t$ has
    exactly $k$ children $t_1,\dots,t_k$ with $\phi(t_i)=F_i$; in this
    case we call $t$ a \emph{component node}.
  \end{enumerate}
\end{enumerate}
Thus, a backdoor tree as considered by Samer and Szeider
\cite{SamerSzeider08b} is just a component backdoor tree without
component nodes. The \emph{depth} of a $\CCC$-backdoor is
the largest number of variable nodes on any root-to-leaf path in the
tree. The \emph{$\CCC$\hy backdoor depth}
$\srbd_\CCC(F)$ of a CNF formula $F$ into a base class $\CCC$ is the
smallest depth over all component $\CCC$\hy backdoor trees
of~$F$. Alternatively, we can define the $\CCC$\hy
backdoor depth recursively as in \eqref{eq:defsrbd}.
For a component backdoor tree $(T,\phi)$ let $\var(T,\phi)$ denote
the set of all variables $x$ such that that some variable node $t$ of
$T$ branches on~$x$. 
The next lemma shows how to use a component $\CCC$\hy backdoor tree to
decide the satsifiability of a formula $F$.
\begin{theoremOptional}{ssrbd}{lemma}\label{lem:runtime}
  Let $\CCC\subseteq \CNF$ be (linear-time) tractable, let $F \in \CNF$, and let $(T,\phi)$ be a component $\CCC$\hy
  backdoor tree of $F$ of depth $d$. Then, we can decide the satisfiability of $F$ in time
  $(2^{d} \size{F})^{\bigoh(1)}$ ($\bigoh(2^{d} \size{F})$). 
\end{theoremOptional}
\begin{proofEnd}
  Let $m=\size{F}$.  We start by showing that
  $\sum_{\ell \in L(T)}\size{\phi(\ell)}|\leq 2^d m$, where
  $L(T)$ denotes the set of leaves of $T$, using induction on $d$ and
  $m$.  The statement holds if $d=0$ or $m\leq 1$.  We show that it
  also holds for larger $d$ and $m$.  If the root is a variable node,
  then it has two children $c_0,c_1$, and the subtree
  rooted at any of these children represents a component $\CCC$-backdoor tree for the
  CNF formula  $\phi(c_i)$ of depth $d-1$. Therefore,
  by the induction hypothesis, we obtain that
  $s_i=\sum_{\ell \in L(T_i)}\size{\phi(\ell)}\leq 2^{d-1}m$, for
  the  subtree $T_i$ rooted at $c_i$, $i\in \{0,1\}$. Consequently,
  $\sum_{\ell \in L(T)}\size{\phi(\ell)}=s_0+s_1\leq
  2 \cdot 2^{d-1}m=2^{d}m$, as required.  If, on the other
  hand, the root is a component node, then its children, say
  $c_1,\dots,c_k$, are labeled with CNF formulas  of sizes
  $m_1+\dots+m_k = m$. Therefore, for every subtree $T_i$ of $T$
  rooted at $c_i$, we have that $T_i$ is a component $\CCC$-backdoor tree
  of depth $d$ for $\phi(c_i)$, which using the induction hypothesis
  implies that $\sum_{\ell \in L(T_i)}\size{\phi(\ell)}\leq
  2^dm_i$. Hence, we obtain
  $\sum_{\ell \in L(T)}\size{\phi(\ell)}\leq 2^dm$ in total.

  To decide the satisfiability of $F$, we first decide the
  satisfiability of all the formulas associated with the leaves of
  $T$. Because, as shown above, their total size is at most $2^dm$,
  this can be achieved in time $(2^dm)^{\bigoh(1)}$ if $\SAT$
  restricted to formulas in $\CCC$ is polynomial-time solvable and in
  time $\bigoh(2^dm)$ if $\SAT$ restricted to formulas in $\CCC$ is
  linear-time solvable. Let us call a leaf true/false if it is labeled
  by a satisfiable/unsatisfiable CNF formula, respectively. We now
  propagate the truth values upwards to the root, considering a
  component node as the logical \emph{and} of its children, and a
  variable node as the logical \emph{or} of its children. $F$ is
  satisfiable if and only if the root of~$T$ is true. We can carry out
  the propagation in time linear in the number of nodes of~$T$, which
  is linear in the number of leaves of $T$, which is at most $2^d m$.
\end{proofEnd}

Let $\CCC \subseteq \CNF$ and $F\in \CNF$.  A
(strong) \emph{$\CCC$-backdoor} of $F$ is a set $B\subseteq \var(F)$
such that $F[\tau]\in \CCC$ for each $\Bassgn{\tau}{B}$.  Assume
$\CCC$ is closed under partial assignments (which is the case for many
natural base classes and the classes $\CCC_{\alpha,s}$) and $(T,\phi)$
a component $\CCC$\hy backdoor tree of $F$. Then $\var(T,\phi)$ is a
$\CCC$\hy backdoor of~$F$.

\section{Technical Overview}\label{sec:overview}

We present all our algorithms in this work within a game-theoretic framework.
This framework builds upon the following equivalent formulation of backdoor depth using splitter games.
Similar games can be used to describe treedepth and other graph classes~\cite{GroheKS2017}.

\begin{definition}
Let \CBeAClass and $\satf \in \CNF$.
We denote by $\gameC$ 
the so-called \emph{$\CCC$-backdoor depth game on $\satf$}.
The game is played between two players, the \emph{connector} and the \emph{splitter}.
The \emph{positions} of the game are CNF formulas.
At first, the connector chooses a connected component of $\satf$ to be the \emph{starting position} of the game.
The game is over once a position in the base class $\CCC$ is reached.
We call these positions the \emph{winning positions} (of the splitter).
In each \emph{round} the game progresses from a \emph{current position} $J$ to a \emph{next position} as follows.
\begin{itemize}
\item The splitter chooses a variable $v \in \var(J)$.
\item 
    The connector chooses an assignment $\tau \colon \{v\} \to \{0,1\}$
  and a connected component $J'$ of $J[\tau]$.
  The \emph{next position} is~$J'$.
\end{itemize}
In the (unusual) case that a position $J$ contains no variables anymore but $J$ is still not in $\CCC$, the splitter looses.
For a position $J$, we denote by $\tau_J$ the assignment of all
variables assigned up to position $J$.
\end{definition}
 The following observation follows easily from the definitions of the
game and the fact that the (strategy) tree obtained by playing all possible plays
of the connector against a given $d$-round winning strategy for the splitter forms a
component backdoor tree of depth $d$, and vice versa. In particular, the
splitter choosing a variable $v$ at position $J$ corresponds to a variable node and
the subsequent choice of the connector for an assignment $\tau$ of $v$ and a
component of $J[\tau]$ corresponds to a component node (and a
subsequent variable or leaf node) in a component backdoor tree.

\begin{observation}\label{lem:strategyEqualsBackdoor}
    The splitter has a strategy for the game $\gameC$ to reach within at most $d$ rounds a winning position
    if and only if $\satf$ has $\CCC$-backdoor depth at most $d$.
\end{observation}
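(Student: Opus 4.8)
The plan is to prove the equivalence in Observation~\ref{lem:strategyEqualsBackdoor} by exhibiting an explicit correspondence between $d$-round winning strategies for the splitter in $\gameC$ and component $\CCC$\hy backdoor trees of $F$ of depth at most $d$, and then invoking the definition of $\srbd_\CCC(F)$ as the minimum depth over all such trees. Since the game and the recursive definition \eqref{eq:defsrbd} of backdoor depth are built from the same two operations — branching on a variable and splitting into connected components — the proof is essentially a careful unpacking of both definitions; the work is bookkeeping rather than insight.

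For the direction from strategies to backdoor trees, I would proceed by induction on $d$. Fix a winning strategy for the splitter that guarantees a winning position within $d$ rounds, starting from the connector's choice of a connected component $F_0 \in \Conn(F)$. If $F_0 \in \CCC$ we are already done (depth $0$). Otherwise the strategy tells the splitter a variable $v \in \var(F_0)$; for each $\epsilon \in \{0,1\}$ and each connected component $J'$ of $F_0[v=\epsilon]$, the connector could move to $J'$, and the remaining strategy is a winning strategy for the subgame on $J'$ within $d-1$ rounds. By induction each such $J'$ has a component $\CCC$\hy backdoor tree of depth $\le d-1$. I assemble these into a single tree for $F$: a variable node labelled $v$ at the top, whose two children are component nodes for $F_0[v=0]$ and $F_0[v=1]$ (taking care of the degenerate cases where $F_0[v=\epsilon]$ is already connected or empty, which is exactly the ``splitter loses'' clause of the game and the $F \notin \CCC$, connected case of \eqref{eq:defsrbd}), and where the children of those component nodes are the inductively obtained trees. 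One has to also handle the initial component split from $F$ to $F_0$: the root of the whole tree is a component node with children the trees built for each connected component of $F$. This yields a component $\CCC$\hy backdoor tree of $F$ of depth $\le d$, so $\srbd_\CCC(F) \le d$.

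The converse direction, from a component $\CCC$\hy backdoor tree $(T,\phi)$ of depth $d$ to a $d$-round winning strategy, is symmetric: the splitter plays according to $T$. Formally, I maintain the invariant that after $i$ rounds the current position $J$ equals $\phi(t)$ for some node $t$ of $T$ reachable from the root by first descending through the top component node, then alternately through component nodes and variable nodes, such that the number of variable nodes strictly above $t$ equals $i$. When the current position is $\phi(t)$ with $t$ a variable node labelled $x$, the splitter plays $x$; whatever assignment $x=\epsilon$ and connected component $J'$ of $\phi(t)[x=\epsilon]$ the connector picks, $J'$ is $\phi(t')$ for the unique child $t'$ of the component node $t_\epsilon$ with $\phi(t')=J'$, re-establishing the invariant with $i$ increased by one. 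Since every root-to-leaf path in $T$ has at most $d$ variable nodes, after at most $d$ rounds the splitter reaches a position $\phi(\ell)$ for a leaf $\ell$, and $\phi(\ell) \in \CCC$ is a winning position. Combining both directions: the splitter wins $\gameC$ within $d$ rounds iff $F$ has a component $\CCC$\hy backdoor tree of depth $\le d$ iff $\srbd_\CCC(F) \le d$.

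The main obstacle, such as it is, is matching the ``shape'' of a play of the game with the ``shape'' of a component backdoor tree: a single round of the game (splitter picks $v$, connector picks $\epsilon$ and a component) corresponds to a variable node immediately followed by a component node in the tree, and one must be scrupulous about the boundary cases — the connector's very first move (the top component node), a variable node whose branch produces a connected formula (no component node, or a trivial one), and the clause where the splitter loses because a non-$\CCC$ position has run out of variables (which corresponds to no valid component backdoor tree existing below that node). None of these is deep, but getting the correspondence exactly right, rather than off by one round or one node, is where care is needed; this is presumably why the authors phrase it as an observation that ``follows easily.''
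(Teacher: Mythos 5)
Your proposal is correct and matches the paper's reasoning: the paper's brief justification is precisely the correspondence you flesh out, with variable choices by the splitter becoming variable nodes and the connector's assignment-plus-component choice becoming a (possible) component node. The minor caveats you flag (the ``alternately'' phrasing is not strictly accurate since two variable nodes can be consecutive when an assignment leaves the formula connected, and the root is a component node only when $F$ is disconnected) are harmless bookkeeping details that you already acknowledge.
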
 

Using backdoor depth games, we no longer have to explicitly construct a backdoor. 
Instead, we present algorithms that play the backdoor depth game from the perspective of the splitter.
Let us start by describing these so called \emph{splitter-algorithms} and how they can be turned into an algorithm to compute backdoor depth.
The algorithms will have some auxiliary internal state that they modify with each move.
Formally, a splitter-algorithm for the $\CCC$-backdoor depth game to a base class $\CCC$ is a procedure that
\begin{itemize}
    \item gets as input a (non-winning) position $J$ of the game, together with an internal state
    \item and returns a valid move for the splitter at position $J$, together with an updated internal state.
\end{itemize}
Assume we have a game $\gameC$ and some additional input $S$.
For a given strategy of the connector, the splitter-algorithm plays the game as one would expect:
In the beginning, the internal state is initialized with $S$ (if no additional input is given, the state is initialized empty).
Whenever the splitter should make its next move, the splitter-algorithm is
queried using the current position and internal state and afterwards the
internal state is updated accordingly.

\begin{definition}
We say a splitter-algorithm \emph{implements a strategy to reach for
a game $\gameC$ and input $S$
within at most $d$ rounds a position and internal
state with some property}
if and only if 
initializing the internal state with $S$ and then
playing $\gameC$ according to the splitter-algorithm leads---no matter what strategy the connector is using---after at most
$d$ rounds to a position and internal state with said property.
\end{definition}

The following observation converts splitter-algorithms into algorithms
for bounded depth backdoors. It builds component backdoor trees by always trying out all possible next
moves of the connector.

\begin{theoremOptional}{soverview}{lemma}\label{lem:ConvertStrategyToBackdoor}
  Let $\CCC \subseteq \CNF$ and $f_\CCC \colon \N \to \N$.
  Assume there exists a splitter-algorithm that implements a strategy to
  reach for each game $\gameC$ and non-negative integer $d$ within at most $f_\CCC(d)$ rounds either:
  \begin{enumerate}[i)]
  \item a winning position, or
  \item (an internal state representing) a proof that the $\CCC$-backdoor depth of $\satf$ is at least $d$.
  \end{enumerate}
  Further assume this splitter-algorithm always takes at most $\bigoh(\|\satf\|)$ time to compute its next move.
  Then there exists an algorithm that, given $\satf$ and $d$,
  in time at most $3^{f_\CCC(d)}\bigoh(\|\satf\|)$ either:
  \begin{enumerate}[i)]
      \item returns a component $\CCC$-backdoor tree of depth at most $f_\CCC(d)$, or
  \item concludes that
    the $\CCC$-backdoor depth of $\satf$ is at least $d$.
  \end{enumerate}
\end{theoremOptional}
\begin{proofEnd}
  We compute a component $\CCC$-backdoor
  tree of depth at most $f_\CCC(d)$ by starting at the root and then
  iteratively expanding the leaves, using the splitter-algorithm to
  compute the next variable to branch over.  For each
  position we reach, we store the internal state of the
  splitter-algorithm in a look-up table, indexed by the position.
  This way, we can easily build the component $\CCC$-backdoor tree, e.g., in a depth-first or
  breadth-first way. If we encounter at any time an internal state
  representing a proof that the $\CCC$-backdoor depth of $\satf$ is at
  least $d$, we can abort. If this is not the case, then we are
  guaranteed that every leaf represents a winning position and
  therefore an instance in $\CCC$. We have therefore found a
  component $\CCC$-backdoor tree of depth at most $f_\CCC(d)$.

  Without loss of generality, we can assume that $\satf$ is connected.  We
  need to expand the root node $\satf$ of the tree $f_\CCC(d)$ times.  We
  show that expanding a node $J$ in our tree $i$ times takes time at
  most $3^{i}c\|J\|$ for some constant $c$.  To expand a node
  $J$, we run the splitter algorithm in time $c\|J\|$ to get the next
  variable and try out both assignments for this variable.  The instance splits after an assignment into some
  components $J_1,\dots,J_k$ with $\|J_1\|+\dots+\|J_k\| \le \|J\|$.  By
  induction, we can expand component $J_j$ $i-1$ times in time
  $3^{i-1} c\|J_j\|$, getting a total run time of at most
  $c\|J\| + 2 \sum_{j} 3^{i-1} c\|J_j\| \le c\|J\| +
  2\cdot3^{i-1} c\|J\| \le 3^{i} c\|J\|$.
\end{proofEnd}

For the sake of readability, we may present splitter-algorithms
as continuously running algorithms that periodically output moves (via
some output channel) and always immediately as a reply get the next
move of the connector (via some input channel).  Such an algorithm can
easily be converted into a procedure that gets as input a position and
internal state and outputs a move and a modified internal state: The
internal state encodes the whole state of the computation, (e.g., the
current state of a Turing machine together with the contents of the
tape and the position of the head).  Whenever the procedure is called,
it ``unfreezes'' this state, performs the computation until it reaches
its next move and then ``freezes'' and returns its state together with
the move.

Our main result is an approximation algorithm (\Cref{thm:main}) that either concludes
that there is no backdoor of depth $d$, or computes a
component backdoor tree of depth at most $2^{2^{\bigoh(d)}}$.
Using \Cref{lem:ConvertStrategyToBackdoor}, we see that this is equivalent to a splitter-algorithm
that plays for $2^{2^{\bigoh(d)}}$ rounds to either reach a winning position or 
a proof that the backdoor depth is larger than $d$.

Following the approach of \MSV, our proofs of high backdoor depth come in the form of so-called \emph{obstruction trees}.
These are trees in the incidence graph of a CNF formula. Their node set therefore consists of both variables and clauses.
Obstruction trees of depth $d$ describe parts of an instance
for which the splitter needs more than $d$ rounds to win the backdoor depth game.
For depth zero, we simply take a single (bad) clause that is not allowed by the base class.
Roughly speaking, an obstruction tree of depth $d > 0$ is built from two ``separated''
obstruction trees $T_1$, $T_2$ of depth $d-1$ that are connected by a
path.
Since the two obstruction trees are separated but in the same component, we know that
for any choice of the splitter (i.e., choice of a variable $v$),
there is a response of the connector (i.e., an assignment of $v$ and a component) in which either $T_1$ or $T_2$ is whole.
Then the splitter needs by induction still more than $d-1$ additional rounds to win the game.
  
\begin{definition}\label{def:obstructionTree}
    Let $\satf \in \CNF$ and \CpnsBeAClass.
    We inductively define $\CCC$-obstruction trees $T$ for $\satf$ of increasing depth.
    \begin{itemize}
        \item 
            Let $c$ be a $\CCC$-bad clause of $\satf$.
            The set $T = \{c\}$ is a \emph{$\CCC$-obstruction tree in $\satf$ of depth 0}.
        \item
            Let $T_1$ be a $\CCC$-obstruction tree of depth~$i$ in
            $\satf$. 
            Let $\beta$ be a partial assignment of the variables in $\satf$.
            Let $T_2$ be an obstruction tree of depth $i$ in $\satf[\beta]$ such that
            no variable $v \in \var(\satf[\beta])$ occurs both in a clause of $T_1$ and $T_2$.
            Let further $P$ be (a CNF formula representing) a path that connects $T_1$ and $T_2$ in~$\satf$.
            Then $T = T_1 \cup T_2 \cup \var(P) \cup P$ is
            a \emph{$\CCC$-obstruction tree in $\satf$ of depth $i+1$}.
    \end{itemize}
\end{definition}

We will prove the following central lemma in~\Cref{sec:obstructionTrees}.

\begin{theoremOptional}{lemobstructToBackdoor}{lemma}\label{lem:obstructToBackdoor}
  Let $\satf \in \CNF$ and \CpnsBeAClass.
  If there is a $\CCC$-obstruction tree of depth $d$ in $\satf$, then
  the $\CCC$-backdoor depth of $\satf$ is larger than $d$.
\end{theoremOptional}

Our splitter-algorithm will construct obstruction trees of increasing depth by a recursive procedure (\Cref{lem:main})
that we outline now.
We say a splitter-algorithm satisfies \emph{property $i$} if it
reaches in each game $\game$ within $g_\CCC(i,d)$ rounds (for some function $g_\CCC(i,d)$) either
\begin{enumerate}[1)]
    \item a winning position, or
    \item a position $J$ and a $\CCC$-obstruction tree $T$ of depth $i$ in $\satf$ such that
          no variable in $\var(J)$ occurs in a clause of $T$, or
    \item a proof that the $\CCC$-backdoor depth of $\satf$ is at least $d$.
\end{enumerate}

If we have a splitter algorithm satisfying property $d+1$ then our main result, the approximation algorithm for backdoor depth,
directly follows from \Cref{lem:obstructToBackdoor} and \Cref{lem:ConvertStrategyToBackdoor}.
Assume we have a strategy satisfying property $i-1$, let us describe how to use it to satisfy property $i$.
If at any point we reach a winning position, or a proof that the $\CCC$-backdoor depth of $\satf$ is at least $d$,
we are done. Let us assume this does not happen, so we can focus on the much more interesting case 2).

We use property $i-1$ to construct a first tree~$T_1$ of depth $i-1$, and reach a position $J_1$.
We use it again, starting at position $J_1$ to construct a second tree $T_2$ of depth $i-1$ that is completely contained in position $J_1$.
Since $T_1$ and $T_2$ are in the same component of $F$, we can find a path $P$ connecting them.
Let $\beta$ be the assignment that assigns all the variables the splitter chose until reaching position $J_1$.
Then $T_2$ is an obstruction tree not only in $J_1$ but also in $F[\beta]$.
In order to join both trees together into an obstruction of depth $i$, we have to show,
according to \Cref{def:obstructionTree} that no variable $v \in \var(\satf[\beta])$ occurs both in a clause of $T_1$ and $T_2$.
Since no variable in $\var(J_1)$ occurs in a clause of $T_1$ (property $i-1$),
and $T_2$ was built only from $J_1$, this is the case.
The trees $T_1$ and $T_2$ are ``separated'' and can be safely joined into a new obstruction tree $T$ of depth~$i$ 
(see also \Cref{fig:main} on page~\pageref{fig:main} and the proof of \Cref{lem:main} for details).

The last thing we need to ensure is that we reach a position $J$ such that 
no variable in $\var(J)$ occurs in a clause of $T$.
This then guarantees that $T$ is ``separated'' from all future obstruction trees that we may want to join it with to satisfy property $i+1$, $i+2$ and so forth.
This is the major difficulty and main technical contribution of this paper.

It is important to note here, that the exact notion of ``separation'' between obstruction trees plays a crucial role for our
approach and is one of the main differences to \MSV.
Mählmann et al.\ solve the separation problem in a ``brute-force'' manner:
If we translate their approach to the language of splitter-algorithms,
then the splitter simply selects all variables that occur in a clause of $T$.
For their base class---the class \EMPTY{} of formulas without variables---there are at most $2^{\bigoh(d)}$ variables that occur in an obstruction tree of depth $d$.
Thus, in only $2^{\bigoh(d)}$ rounds, the splitter can select all of them, fulfilling the separation property.
This completes the proof for the base class \EMPTY.

However, already for backdoor depth to \Krom, this approach cannot work
since instances in the base class have obstruction trees with arbitrarily many clauses.
Moreover, the situation becomes even more difficult for
backdoors to \Horn{}, since additionally clauses are allowed to
contain arbitrary many literals.
Mählmann et al.\ acknowledge this as a central problem and ask for an
alternative approach to the separation problem
that works for more general base classes.


\section{Separator Obstructions}\label{sec:separatorObstructions}

The main technical contribution of this work is a separation technique that
works for the base classes $\CCC=\cal C_{\alpha,s}$. The separation
technique is based on a novel form of obstruction, which we call
\emph{separator obstruction}. 
Obstruction trees are made up of paths, therefore, it is sufficient to separate
each new path $P$ that is added to an obstruction.
Note that $P$ can be arbitrarily long and every clause on $P$ can have
arbitrary many variables and therefore the splitter cannot simply select all
variables in (clauses of) $P$.
Instead, given such a path $P$ that we want to separate, we will use
separator obstructions to develop a splitter-algorithm
(\Cref{lem:separate}) that reaches in
each game $\game$ within a bounded number of rounds either
\begin{enumerate}[1)]
\item a winning position, or
\item a position $J$ such that no variable in $\var(J)$ occurs in a clause of $P$, or
\item a proof that the $\CCC$-backdoor depth of $\satf$ is at least $d$. 
\end{enumerate}

Informally, a separator obstruction is a sequence
$\seq{P_1,\dotsc,P_\ell}$ of paths that form a tree $T_\ell$ together with an
assignment $\tau$ of certain \emph{important} variables occurring in
$T_\ell$. The variables of~$\tau$ correspond to the variables
chosen by the splitter-algorithm and the assignment $\tau$ corresponds to
the assignment chosen by the connector. Each path $P_i$ adds (at least
one) \bad clause $b_i$ to the separator
obstruction, which is an important prerequisite to increase the backdoor depth by
growing the obstruction. Moreover, by choosing the important variables
and the paths carefully, we ensure that for every \emph{outside}
variable, i.e., any variable that is not an important variable assigned by $\tau$, there
is an assignment and a component (which can be chosen by the
connector) that leaves a large enough part of
the separator obstruction intact. 
Thus, if a separator obstruction is sufficiently large, the connector can play such that even after $d$ rounds a non-empty part of the separator obstruction is still intact.
This means a large separator obstruction is a proof that the backdoor depth is larger than $d$.

To illustrate the growth of a separator obstruction (and motivate its
definition) suppose that our splitter-algorithm is at position $J$ of
the game $\game$ and has already build a separator obstruction
$X=\tuple{\seq{P_1,\dotsc,P_i},\tau}$ containing \bad clauses $b_1,\dotsc,b_i$; note that $\tau$ is compatible
with $\tau_J$. If $J$ is
already a winning position, then we are done. Therefore, $J$ has to
contain a \bad{} clause. If no \bad{} clause has a path to $T_i$ in
$J$, then $J$ satisfies 2) and we are also done. Otherwise, let $b_{i+1}$
be a \bad{} clause in $J$ that is closest to $T_i$ and let $P_{i+1}$ be a
shortest path from $b_{i+1}$ to $T_i$ in $J$. Then, we extend our separator
obstruction $X$ by attaching the path $P_{i+1}$ to $T_i$ (and obtain the tree
$T_{i+1}$). Our next
order of business is to choose a bounded number of important
variables occurring on $P_{i+1}$ that we will add to $X$. Those
variables need to be chosen in such a way that no
outside variable can destroy too much of the separator
obstruction. Apart from destroying the paths of the
separator obstruction, we also need to avoid that assigning any outside
variable makes too many of the \bad clauses $b_1,\dotsc,b_{i+1}$ \good.
Therefore, a natural choice would
be to add all variables of $b_{i+1}$ to $X$, i.e., to make those
variables important. Unfortunately, this is not possible since $b_{i+1}$ can
contain arbitrarily many literals. Instead, we will only add the
variables of $b_{i+1}$ to $X$ that $\alpha$-occur in
$b_{i+1}$. By the following lemma, the number of those variables is bounded. 

\begin{theoremOptional}{sseparatorObstructions}{lemma}\label{lem:bc-arity}
  Let $\satf \in \CNF$ and \CpnsBeAClass. If $\satf$ has
  $\CCC$-backdoor depth at most some integer $d$, then every
  clause of $\satf$ contains at most $d+s$ $\alpha$-literals.
\end{theoremOptional}
\begin{proofEnd}
  As stated in the preliminaries, we can assume that every variable occurs at most once in every clause.
  Suppose that $\satf$ contains a clause $c$ containing more than $d+s$ $\alpha$-literals.
  If the splitter chooses a variable from $c$, the connector will assign it to zero if it occurs positively in $c$
  and to one otherwise.
  Thus, the connector can play such that after $d$ rounds, $c$ still has more than $s$ $\alpha$-literals
  and therefore still is \bad.
  By \Cref{lem:strategyEqualsBackdoor}, $\satf$ has backdoor depth larger than $d$.
\end{proofEnd}

While
this still allows for outside variables to occur in many of the \bad
clauses $b_1,\dotsc,b_{i+1}$, it already ensures that no
outside variable can $\alpha$-occur in any of these clauses. This is
very helpful for our purposes, because in the case that
$|\alpha|=1$ (i.e., the only case where $\alpha$-occurs means something different then just occurs),
it provides us with an assignment of any such outside variable
that can be played by the connector without making the \bad{} clauses in which it
occurs \good{}. For instance, if $\alpha=\{+\}$, then any outside
variable $v$ can only occur negatively in a \bad clause and moreover
setting $v$ to $0$ ensures that the \bad clauses remain \bad.

The next thing that we need to ensure is that any outside variable can
not destroy too many paths.
Note that by choosing a \emph{shortest} path $P_{i+1}$, we
have already ensured that no variable occurs on more than
two clauses of $P_{i+1}$ (such a variable would be a shortcut, meaning $P_{i+1}$ was not a shortest path). Moreover, because $P_{i+1}$ is a
shortest path from $b_{i+1}$ to $T_i$, we know that every variable that
occurs on $T_i$ and on $P_{i+1}$ must occur in the clause $c$ in
$P_{i+1}$ that is closest to $T_i$ but not in $T_i$ itself. Similarly, to how we dealt
with the \bad clauses, we will now add all variables that
$\alpha$-occur in $c$ to $X$.
This ensures that no
outside variable can $\alpha$-occur in both $T_i$ and $P_{i+1}$ 
, which
(by induction over $i$) implies that every outside variable
$\alpha$-occurs in at most two clauses (either from $T_i$ or from
$P_{i+1}$) and therefore provides us with an assignment for the
outside variables that removes at most two clauses from $X$. However, since removing any
single clause can be arbitrarily bad if the clause has a high degree in
the separator obstruction, we further need to ensure that all clauses
of the separator obstruction in which outside variables $\alpha$-occur
have small degree.
We achieve this by adding the variables $\alpha$-occurring in any clause
as soon as its degree (in the separator obstruction) becomes larger than two, which happens whenever the endpoint
of $P_{i+1}$ in $T_i$ is a clause. Finally, if the endpoint of
$P_{i+1}$ in $T_i$ is a variable, we also add this variable to the
separator obstruction to ensure that no variable has degree larger
than three in $T_{i+1}$. This leads us to the following
definition of separator obstructions (see also
Figure~\ref{fig:sep-obs} for an illustration).

\begin{figure}
  \begin{center}
      \includegraphics{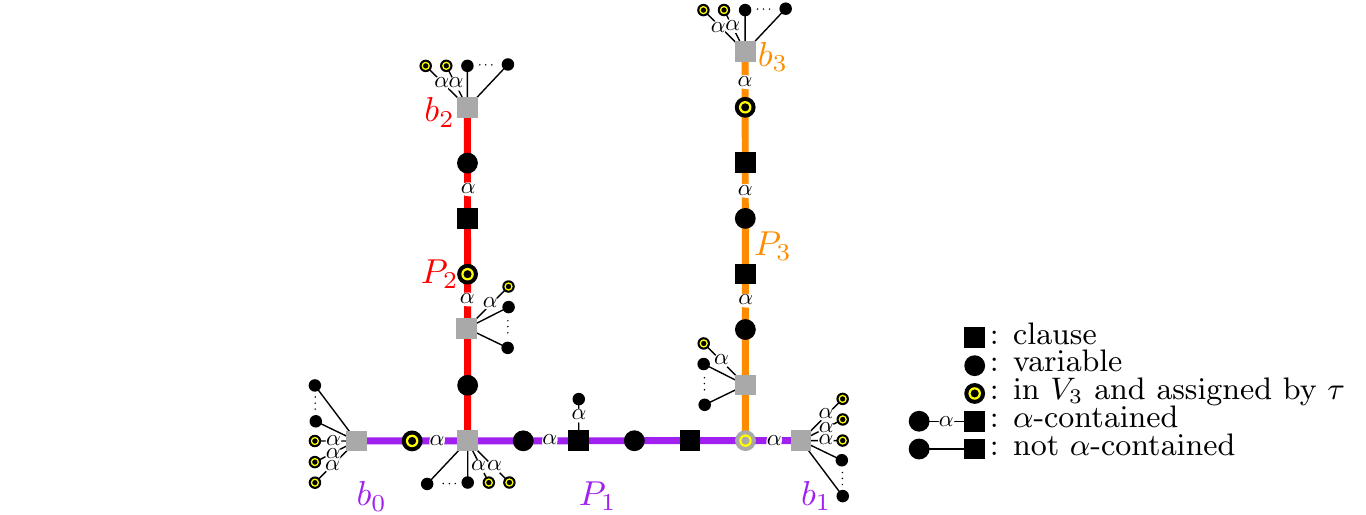}
    \end{center}
  \caption{A separator obstruction containing three paths $P_1$,
    $P_2$, and $P_3$. 
    The figure shows vertices and edges of the incidence graph.
    Only the colorful edges are part of separator obstruction's tree.
Gray variables and clauses are mentioned under the names $b_i$, $a$, and $c$ in \Cref{def:separatorObstruction}.}
  \label{fig:sep-obs}
\end{figure}

\begin{definition}\label{def:separatorObstruction}
Let $F \in \CNF$ and \CpnsBeAClass. A \emph{$\CCC$-separator obstruction} for $F$ is a tuple
$X=\tuple{\seq{P_1,\dotsc,P_\ell},\tau}$ (where $P_1,\dotsc,P_\ell$
are paths in $F$ and $\tau$ is an assignment of 
variables of $F$) satisfying the following recursive definition.
\begin{itemize}
\item $P_1$ is a shortest path between two \bad{} clauses $b_0$ and
  $b_1$ in $F$. Let $B_1=\{b_0,b_1\}$, let $V_1$ be the set of all
  variables that $\alpha$-occur in any clause in $B_1$, let
  $\tau_1 : V_1 \rightarrow \{0,1\}$ be any assignment of the
  variables in $V_1$, and let $T_1=P_1$.
\item For every $i$ with $1 < i \leq \ell$, let $b_i$ be a \bad{}
  clause in $\satf[\tau_{i-1}]$ of minimal distance to
  $T_{i-1}$ in $F[\tau_{i-1}]$. Then, $P_i$ is a shortest path (of possibly length zero) in
  $\satf[\tau_{i-1}]$ from $T_{i-1}$ to $b_i$ and $T_i=T_{i-1}\cup
  P_i$. Moreover, let $a$ be the variable or clause that is both in
  $T_{i-1}$ and $P_i$, then we define $B_i$ and $V_i$ by initially setting $B_i=B_{i-1}\cup \{b_i\}$ and $V_i=V_{i-1}
  \cup \var_\alpha(b_i)$ and distinguishing the following cases:
  \begin{itemize}
  \item If $a$ is a variable, then let $c$ be the clause on $P_i$
    incident with $a$ (note that it is possible that $c=b_i$). Then,
    we add $c$ to $B_i$ and we add $\{a\}\cup \var_\alpha(c)$ to $V_i$.
  \item If $a$ is a clause, then either $a=b_i$ or $a\neq b_i$ and
    there is a clause $c$ that is closest to $a$ on $P_i$ (note that
    it is possible that $c=b_i$). In the
    former case we leave $B_i$ and $V_i$ unchanged and in the latter case, we add $a$ and
    $c$ to $B_i$ and we add $\var_\alpha(a)\cup \var_\alpha(c)$ to $V_i$.
  \end{itemize}
  $\tau_i : V_i \rightarrow \{0,1\}$ is any assignment of the
  variables in $V_i$ that is compatible with $\tau_{i-1}$.
\item Finally, we set $\tau=\tau_\ell$.
\end{itemize}
We define the \emph{size} of $X$ to be the number of leaves in
$T=T_\ell$, i.e., $\ell+1$.
\end{definition}
 
We start by observing some simple but important
properties of separator obstructions.
\begin{theoremOptional}{sseparatorObstructions}{lemma}\label{lem:sep-obs-prop}
  Let $\satf \in \CNF$, \CpnsBeAClass, and let $X=\tuple{\seq{P_1,\dotsc,P_\ell},\tau}$ be a
  $\CCC$-separator obstruction in $\satf$, then for every
  $i \in [\ell]$:
  \begin{enumerate}[(C1)]
  \item $T_i$ is a tree. 
  \item Every variable $v \not\in V_i$ occurs in at most two
    clauses of $P_j$ for every $j$ with $1 \leq j \leq i$ and moreover those clauses are
    consecutive in $P_j$. 
  \item Every variable $v \not\in V_i$ $\alpha$-occurs in at most two
    clauses of $T_i$ and moreover those clauses are
    consecutively contained in one path of $T_i$. 
  \item Every variable $v \in V_i\setminus V_{i-1}$ $\alpha$-occurs
    in most four clauses of $T_i$.
  \item If a variable $v\not\in V_i$ $\alpha$-occurs in a clause
      $c$ of $T_i$, then $c$ has degree at most two in~$T_i$.
  \item Every variable of $\satf$ has degree at most
    three in $T$.
  \item If every clause of $\satf$ contains at most $x$ $\alpha$-literals, then $|V_i\setminus V_{i-1}|\leq 2s+x+1$.
\end{enumerate}
\end{theoremOptional}
\begin{proofEnd}
    We show (C1) by induction on $i$. (C1) clearly holds for $i=1$.
    For $i>1$, note that $T_i$ is obtained from $T_{i-1}$ by
    adding the path $P_i$ that intersects $T_{i-1}$ in at most one variable or clause. Since $T_{i-1}$ is a tree so is $T_i$.

    (C2) follows from the fact that every $P_j$ is a shortest path in $\satf[\tau_{j-1}]$ and
    because $v\notin V_i$ it holds that $v$ is a variable of $\satf[\tau_{j-1}]$ for every $j\leq i$.
    If $v$ occurred in two non-consecutive clauses of $P_j$, then $v$
    would be a ``shortcut'' and $P_j$ would not have been a shortest
    path in $\satf[\tau_{j-1}]$,  contradicting our choice of $P_j$.

    We establish (C3) by induction on $i$. 
    For $i=1$, this follows immediately from (C2) because $T_1=P_1$
    and moreover the observation that if a variable does not occur
    in a clause then it also does not $\alpha$-occur in a clause.
    Now suppose that the
    claim holds for $i-1$. Then, $v$ $\alpha$-occurs in at most two
    consecutive clauses of some path $P^v$ of $T_{i-1}$.
    Moreover, because of (C1), $v$ $\alpha$-occurs (or
    occurs ) in at most two consecutive clauses of $P_i$. We claim
    that only one of these conditions can be true, i.e., either $v$ occurs in $P_i$ but not in $T_{i-1}$ or $v$ occurs in
    $T_{i-1}$ but not in $P_i$, which shows (C3). So suppose for a
    contradiction that this is not the case and there is a clause
    $c_{i-1}$ in $T_{i-1}$ and a clause $c_i$ of $P_i$ in which $v$ $\alpha$-occurs.
    If $c_{i-1}=c_i$, then $\{c_i\}=T_{i-1}\cap P_i$ and
    therefore $c_i \in B_i$, which contradicts our assumption that $v
    \notin V_i$ (because in this case $\var_\alpha(c_i)\subseteq
    V_i$). Therefore, $c_{i-1}\neq c_i$. Moreover, because $P_i$ is a
    shortest path from $b_i$ to $T_{i-1}$ in $\satf[\tau_{i-1}]$ and
    $v \notin V_i$, it follows that $c_i$ must be the clause on $P_i$ that is
    closest to $T_{i-1}$. But then, $c_i \in B_i$, which again contradicts
    our assumption that $v \notin V_i$.

    Towards showing (C4), first note that because of
    (C3), and the fact that $v \notin V_{i-1}$, we obtain that $v$ can
    $\alpha$-occur in at most two clauses of
    $T_{i-1}$. Moreover, because $v \notin V_{i-1}$ and $P_i$ is a
    shortest path in $\satf[\tau_{i-1}]$, it follows that $v$ can
    ($\alpha$-)occur in at most two (consecutive) clauses of~$P_i$. Therefore,
    $v$ $\alpha$-occurs in at most four clauses of $T_i$.

    Towards showing (C5), first observe that if $c$ is a clause
    with degree larger than $2$ in $T_i$, then $c \in B_i$. This is
    because for $c$ to have degree larger than $2$, it must be
    contained in more than one path of $T_i$, i.e., there must be an
    index $j\leq i$ such that $c$ is contained in both $T_{j-1}$ and
    $P_j$. But then, $c \in B_j\subseteq B_i$. Now suppose for a
    contradiction that there is a clause $c$ with degree larger than
    two in which a variable $v \notin V_i$ $\alpha$-occurs. Then, $c \in
    B_i$ and because $v \in \var_\alpha(c)$, we obtain that $v \in
    V_i$, a contradiction.

    Towards showing (C6), let $v$ be any variable of $\satf$. If
    $v$ occurs in at most one path of $T$, then $v$ has degree at most
    two. Moreover, if not then let $i$ be the smallest number such
    that $v$ contained in two paths of $T_i$. Then $v$ has degree at
    most three in $T_i$ and is the
    endpoint of the path $P_i$ in $T_{i-1}$ and therefore $v$ is added
    to $V_i$. However, this implies
    that $v$ will not appear on any path $P_j$ for $j>i$ (because any
    such path $P_j$ is a path in $\satf[\tau_i]$, which does no longer
    contain $v$) and therefore
    the degree of $v$ in $T$ will be at most three.

    We finish by showing (C7).
    We say that a path $P$ of $\satf$ (i.e., a path of $G_\satf$) is
    \emph{\good} if so are all clauses occurring as inner vertices on $P$.
    Note that the paths $P_i$ for any $i>1$ in the above definition are
    necessarily \good paths due to the definition of $b_i$.
    Because of the definition of
    $\CCC$-separator obstructions, it holds that $V_i\setminus
    V_{i-1}$ is either equal to $\var_\alpha(b_i)\cup \{a\} \cup
    \var_\alpha(c)$ or equal to $\var_\alpha(b_i)\cup \var_\alpha(c') \cup
    \var_\alpha(c)$ for some \bad clause $b_i$, variable $a$, and
    \good clauses $c$ and $c'$; note that we can assume that $c$ and
    $c'$ are \good{} since for every $i>2$, $P_i$ is a \good path and
    therefore all clauses on $P_i$ apart from $b_i$ are \good. Since
    \good clauses contain at most $s$ $\alpha$-literals and by
    assumption every other clause contains at most $x$
    $\alpha$-literals, we obtain that $|V_i\setminus V_{i-1}|\leq x+2s+1$.
\end{proofEnd}


%

\newcommand{\sepobsiz}{(8^d(14^2+2d))^{2^{d}}}

We are now ready to show our main result of this subsection, namely,
that also separator obstructions can be used to obtain a lower bound on
the backdoor depth of a CNF formula.
\begin{lemma}\label{lem:sep-obs-rbd}
  Let \CpnsBeAClass and $\satf \in \CNF$.
  If $\satf$ has a $\CCC$-separator obstruction of size at least
  $\ell=\sepobsiz$, then $\satf$ has
  $\CCC$-backdoor depth at least~$d$. 
\end{lemma}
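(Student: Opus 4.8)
The plan is to play the backdoor depth game $\gameC$ from the splitter's side, maintaining a large separator obstruction as an invariant, and show that the connector can keep a nonempty piece of it alive for $d$ rounds — which by \Cref{lem:strategyEqualsBackdoor} forces backdoor depth $\geq d$. More precisely, I would argue the contrapositive: suppose the splitter has a $d$-round winning strategy. Run the connector against it, but have the connector use the separator obstruction $X=\tuple{\seq{P_1,\dots,P_\ell},\tau}$ as a guide. The key object to track is a subtree (or subforest) of $T=T_\ell$ that is still "intact" — meaning none of its variables has been assigned and none of its \bad{} clauses $b_1,\dots,b_\ell$ has been made \good{} — together with a count of how many of the $\ell+1$ leaves of $T$ lie in that intact part. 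I would show that each round of the game (the splitter picks a variable $v$, the connector picks an assignment and a component) can only shrink this leaf-count by a bounded multiplicative factor, so starting from $\ell+1 = (8^d(14^2+2d))^{2^d}$ leaves we still have $\geq 1$ leaf, hence at least one \bad{} clause, after $d$ rounds, contradicting that a winning position was reached.

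The heart of the argument is the per-round bound, and this is where properties (C1)--(C7) of \Cref{lem:sep-obs-prop} do the work. Fix a round where the splitter picks variable $v$. Case 1: $v \in V_\ell$, i.e. $v$ is an important variable assigned by $\tau$. Then the connector answers with $\tau(v)$; since $\tau$ was chosen compatibly all the way up, no \bad{} clause $b_i$ is falsified-into-good by this (the $b_i$ are \bad{} in $\satf[\tau_{i-1}]$ by construction), and $v$ can destroy paths only at the $\leq 4$ clauses of $T$ it $\alpha$-occurs in (C4) plus, if $v$ was itself a degree-three attachment vertex, one more place; removing a bounded number of edges/vertices from $T$ splits it into a bounded number of subtrees, and the connector keeps the one with the most leaves, losing at most a constant factor. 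Case 2: $v \notin V_\ell$ is an \emph{outside} variable. Here I use (C3)/(C5): $v$ $\alpha$-occurs in at most two clauses of $T$, and those clauses have degree $\leq 2$ in $T$; by (C2) $v$ occurs (not just $\alpha$-occurs) in at most two consecutive clauses of each single path $P_j$. The crucial point, discussed in the text before \Cref{lem:bc-arity}, is that because $v \notin V_\ell$ and $|\alpha|\in\{1,2\}$: if $|\alpha|=1$ the connector sets $v$ to the value under which $v$'s literal is false in every \bad{} clause containing it, so no $b_i$ becomes \good{}; if $|\alpha|=2$ (the Krom case $s=2$) then no outside variable $\alpha$-occurs in a \bad{} clause at all (a \bad{} clause for Krom has $>2$ literals, all $\alpha$-literals, so all its variables are in $V_\ell$), so again no $b_i$ is endangered. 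In either case the connector then picks the component avoiding the $\leq 2$ broken clauses / $\leq 2$ broken path-segments, losing again only a bounded factor of leaves.

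So each round multiplies the surviving-leaf-count by at least some $1/C$ where $C$ depends only on the structure (the degree bound $3$ from (C6), the $\leq 4$ from (C4), the path-length-$2$ bounds from (C2)); chasing the constants through $d$ rounds gives exactly the need for $\ell + 1 \geq C^{2^d}\cdot(\text{lower-order})$, which the stated bound $(8^d(14^2+2d))^{2^d}$ is engineered to beat. I would organize the proof as: (i) set up the connector strategy and the intact-subforest potential $\Phi$ = number of leaves of $T$ in still-alive subtrees; (ii) prove the single-round inequality $\Phi_{\text{new}} \geq \Phi_{\text{old}}/C$ by the two-case analysis above, invoking (C1)--(C7) for the structural bounds and the $\alpha$-occurrence observations to protect the \bad{} clauses; (iii) conclude $\Phi \geq 1$ after $d$ rounds, exhibit the surviving \bad{} clause, and invoke \Cref{lem:strategyEqualsBackdoor}.

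The main obstacle I anticipate is (ii): making "the connector keeps the subtree with the most leaves" rigorous when a single assigned variable can simultaneously kill a high-degree clause (degree $\leq 3$) \emph{and} appear on a path, so $T$ minus $v$'s influence can fragment into several pieces — one has to verify that the number of pieces is genuinely bounded (using that $v$ touches $\leq 4$ clauses and $T$-degree is $\leq 3$, so at most a constant number of tree-edges are deleted) and that the component chosen by the connector really contains an entire such piece (it does, because a connected piece of $T$ lives inside one connected component of $\satf[\tau']$ once we've deleted only $v$'s edges). A secondary subtlety is bookkeeping the \bad{}-clause accounting separately from the path-breaking accounting — an outside variable in the $|\alpha|=1$ case can still \emph{occur} (negatively, say) in many \bad{} clauses, so one must be careful that the connector's chosen assignment keeps \emph{all} of them \bad{}, which is exactly why the $\alpha$-occurrence refinement (rather than plain occurrence) was baked into $V_i$.
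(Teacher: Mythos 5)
Your proof skeleton has the right scaffolding — play the game from the connector's side, maintain a surviving subtree of $T=T_\ell$ with many paths, protect the bad clauses $b_j$ via the $\alpha$-occurrence refinement, and invoke \Cref{lem:strategyEqualsBackdoor} at the end. But the quantitative core of your argument is wrong, and the error is not cosmetic: you claim the surviving count drops by \emph{a bounded multiplicative factor $1/C$ per round}. That would require $\ell$ of size only $C^{\bigoh(d)}$, whereas the stated bound is $(8^d(14^2+2d))^{2^d}$, a doubly-exponential tower. The $2^d$ in the exponent is the signature of a \emph{square-root} recurrence, which is exactly what the paper proves: its invariant is $\ell_i = \sqrt{\ell_{i-1}}/8$ intact paths after $i$ rounds, not $\ell_i = \ell_{i-1}/C$. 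You even write $\ell+1 \ge C^{2^d}$, which is inconsistent with your own per-round claim of constant-factor loss.

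The reason a constant-factor argument fails is a gap in your Case 1 ($v \in V_\ell$). You say the connector can play $\tau(v)$ and destroy only the $\le 4$ clauses where $v$ $\alpha$-occurs (by (C4)). But (C4) bounds only $\alpha$-occurrences. If, say, $\alpha=\{+\}$ and $\tau(v)=0$, then setting $v=0$ satisfies every clause of $T$ containing the literal $\neg v$, and those are \emph{non}-$\alpha$-occurrences of $v$. For $v \in V_o\setminus V_{o-1}$, the variable $v$ can occur (negatively) in up to two clauses of each of the paths $P_1,\dots,P_o$ by (C2), hence in $\Theta(o)$ clauses of $T$ in total — unbounded. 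No single assignment of $v$ simultaneously (i) keeps all $b_j$ bad, (ii) avoids satisfying non-$\alpha$-occurrences across the whole tree, and (iii) preserves the paths $P_j$ for $j>o$ (which were built assuming $v=\tau(v)$). The paper's actual case split is not $v\in V_\ell$ vs.\ $v\notin V_\ell$ but rather on whether $|I_{<o}|\ge\sqrt{\ell_{i-1}}$, where $o$ is the index with $v\in V_o\setminus V_{o-1}$. When $|I_{<o}|$ is large, the connector plays the $\alpha$-safe value (not $\tau(v)$), keeps only $T'_{<o}$, and (C3)/(C5)/(C6) bound the damage to a constant, at the cost of discarding everything past index $o$. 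When $|I_{<o}|$ is small, the connector plays $\tau(v)$, keeps only one path $P\in A_{<o}$ together with the paths weakly attached to it, and the pigeonhole over the $\le|I_{<o}|+z_{i-1}\approx\sqrt{\ell_{i-1}}$ attachment points is precisely where the square-root loss enters. You would also need to track the number of \emph{broken} paths ($z_i=2i$) since they are needed as connecting glue in the surviving subtree and inflate the pigeonhole denominator; your proposal does not account for them. Without the threshold case split, the weakly-attached pigeonhole, and the broken-path bookkeeping, the proof does not go through.
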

\begin{proof}
  Let $X=\tuple{\seq{P_1,\dotsc,P_\ell},\tau}$ be a $\CCC$-separator
  obstruction for $\satf$ of size at least $\ell$
  with $V_i,B_i,T_i,T$ be as in \Cref{def:separatorObstruction}.
  Let $J$ be a position in the game $\game$. We say that a subtree $T'$ of
  $T=T_\ell$ is \emph{contained} in $J$ if every variable and clause of $T'$
  occurs in $J$. Let $T'$ be a subtree of $T$ that is contained in
  $J$. Let $P_j$ be a path of $X$.
  We say that $P_j$ is \emph{active} in $T'$ if either
  $V(P_j)=\{b_j\}$ and $T'$ contains $b_j$ or $T'$ contains a vertex
  in $V(P_j)\setminus V(T_{j-1})$.
  Moreover, we say that $P_j$ is \emph{intact} in $T'$ at position $J$ if
  $V(P_j)\subseteq V(T')$ and $b_j$ is a \bad clause in $J$.
  Otherwise, we say that $P_j$ is \emph{broken} in $T'$ at position $J$. 
  
  We show by induction on the number of rounds that there is a
  strategy $\SSS$ for the connector such that the following holds for
  every position $J$ reached after $i$ rounds in the game $\game$
  against $\SSS$: At position $J$, there is a subtree $T'$ of $T$
  contained in $J$ that contains at least
  $\ell_i=(\ell^{(1/2)^{i}}/8^i)$ intact paths and at most $z_i=2i$
  broken paths of~$X$. This then shows the
  statement of the lemma because $\ell_d=\ell^{1/2^{d}}/8^d=14^2+2d \ge 1$ and
  therefore any position $J$ reached after $d$ rounds in the game $\game$
  contains at least one clause that is \bad in $J$. 

  The claim clearly holds for $i=0$ since $\ell_0=\ell$ and $z_0=0$ and the connector
  can choose the component of $\satf$ containing~$T$.
  Assume now that $i>0$ and let $J$ be the position reached after
  $i-1$ rounds. By the induction hypothesis, at position $J$ there is
  a subtree $T'$ of $T$ contained in $J$
  containing at least $\ell_{i-1}=\ell^{(1/2)^{i-1}}/8^{i-1}$ intact paths
  and at most $z_{i-1}=2(i-1)$ broken paths of $X$. Suppose that the
  splitter chooses variable $v$ as its next move. Moreover, let $o$ be
  the smallest integer such that $v \in V_o$; if $v \notin V_\ell$ we
  set $o=\ell+1$. Note that $v \notin V_{j}$ for every $i < o$. Let $I$ be the set of
  all paths $P_j$ of $X$ that are intact in $T'$ at position $J$ and let $I_{<o}$
  ($I_{>o}$) be
  the subset of $I$ containing only the paths $P_j$ with $j<
  o$ ($j>o$). Finally, let
  $T'_{<o}$ be the subtree of $T'$ restricted to the paths $P_j$ of $X$ with
  $j< o$. Note that at position $J$, $T_{<o}'$ is connected and 
  the paths in $I_{<o}$ are intact also in $T_{<o}'$.
  Then, the connector chooses the assignment $\beta : \{v\} \rightarrow
  \{0,1\}$ such that:
  $$
  \beta(v)=
  \begin{cases}
    \tau(v) & \quad |I_{<o}| < \sqrt{\ell_{i-1}}, \\
    1 & \quad |I_{<o}| \ge \sqrt{\ell_{i-1}}\text{ and }+\in \alpha,\\
    0 & \quad \text{otherwise}.
  \end{cases}
  $$

  As we will show below, $\beta$ is defined in such a manner that
  the position $J'=J[\beta]$ reached after the next round of the game
  $\game$ contains a subtree $T''$ of $T'$
  containing at least $\ell_i=\sqrt{\ell_{i-1}}/8$ paths that are
  intact in $J'$ and at most $z_i=z_{i-1}+2$ broken paths,
  which completes the proof since the
  connector can now chose the component of $J'$ containing $T''$
  to fulfill the induction invariant.
  We distinguish the following cases; refer also to Figure~\ref{fig:sepobslb} for an illustration of the two cases.
  
  \mpara{Case 1: $\boldsymbol{|I_{<o}| \ge \sqrt{\ell_{i-1}}}$.}
  We will show that $T''$ can be obtained as a subtree of $T_{<o}'$.

  Note first that all clauses $b_j$ with $j<o$ that are \bad{} in $J$
  are also \bad{} in $J'$. This is because $v \notin V_{j}$ (because
  $j<o$ and $v \notin V_{o-1}$) and
  therefore $v$ cannot $\alpha$-occur in $b_j$, which implies
  that $b_j$ remains \bad{} and not satisfied after setting $v$ to
  $\beta(v)$.
  
  The tree $T_{<o}'$ in $J$ may decompose into multiple components in $J'$.
  We will argue that one of these components contains many intact
  paths and only at most two more broken paths than $T_{<o}'$.
  Since the \bad{} clauses of an intact path remain \bad{} in $J'$,
  the only way in which an intact path can become broken is if parts
  of the path get removed, i.e., either $v$ or clauses satisfied by
  setting $v$ to $\beta(v)$.

  If $\beta(v)=1$ then $+\in\alpha$.
  If $\beta(v)=0$ then $+\not\in\alpha$, and since $\alpha\neq\emptyset$, then $-\in\alpha$.
  Thus, in $J'=J[\beta]$, the only elements that are removed are the variable $v$
  as well as clauses in which $v$ $\alpha$-occurs.
  By Lemma~\ref{lem:sep-obs-prop} (C3), $v$
  $\alpha$-occurs in at most two clauses of $T_{<o}'$ and
  because of (C5) those clauses have degree at most two in
  $T_{<o}'$. Therefore, setting $v$ to $\beta(v)$ removes at most
  two clauses from $T_{<o}'$, each of which having degree at most two.
  Moreover, according to Lemma~\ref{lem:sep-obs-prop} (C6), $v$ itself
  has degree at most three in $T_{<o}'$.
  This implies that setting $v$ to $\beta(v)$ splits $T_{<o}'$ into at most
  $2\cdot 2+3=7$ components. 

  Moreover, because of
  Lemma~\ref{lem:sep-obs-prop} (C3), the at most two clauses of $T_{<o}'$
  in which $v$ $\alpha$-occurs are located on the same path $P_j$.
  Therefore, at most two paths that are complete in $T_{<o}'$, i.e., the
  path $P_j$ and the at most one path containing $v$, can become broken. 
  Therefore, there is a component of
  $J'$ that contains a subtree of $T_{<o}'$ that contains at least
  $|I_{<o}|/7-2 \ge \sqrt{\ell_{i-1}}/7-2$ intact paths and at most
  $z_{i-1}+2\leq 2i= z_i$ broken paths of $X$.
  Note that $\sqrt{\ell_{i-1}} \ge \ell_d \ge 14^2+2d \ge 14^2$ 
  and therefore $\sqrt{\ell_{i-1}}/7-2 \geq \sqrt{\ell_{i-1}}/8 = \ell_i$.

  \begin{figure}
      \begin{center}
          \includegraphics[width=\textwidth]{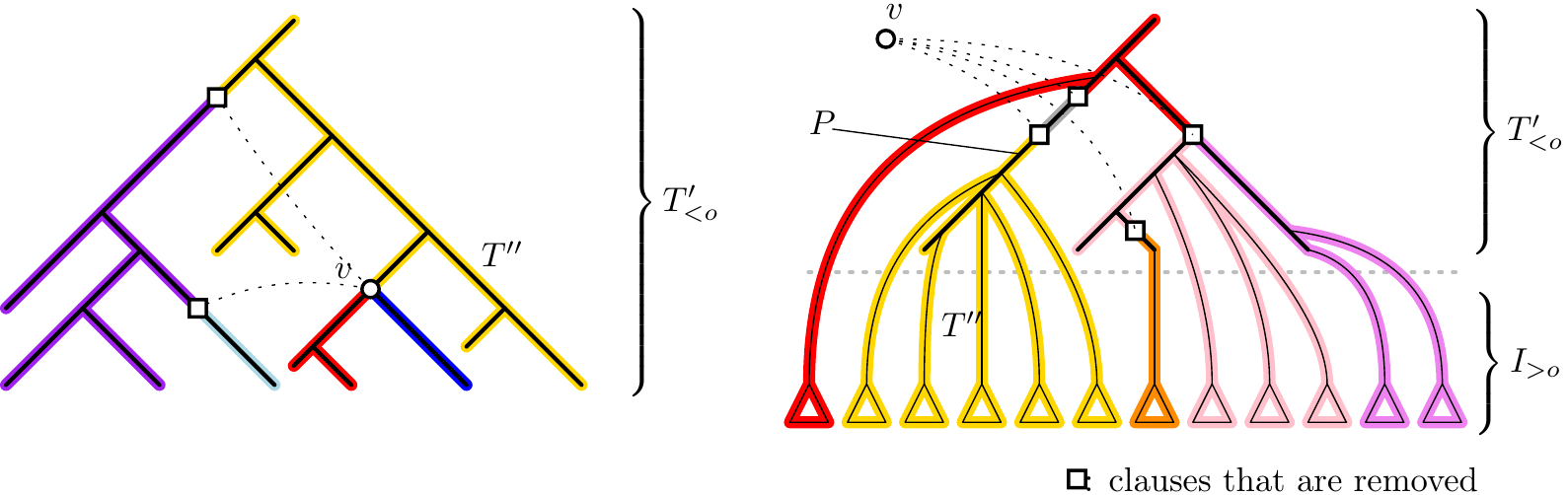}
        \end{center}
      \caption{
          Left: Case 1. The set $I_{<o}$ is large. Assigning $v$ to $\beta(v)$
          decomposes the tree $T'_{<o}$ into at most seven components. The largest component $T''$ is still large.
          Right: Case 2. The set $I_{<o}$ is small. There is a path $P$ to which many paths are weakly attached, forming a tree $T_P$.
          Assigning $v$ to $\beta(v)$ splits $T_P$ in at most three parts. The largest component $T''$ of $T_P$ is still large.
        }
        \label{fig:sepobslb}
  \end{figure}

  \mpara{Case 2: $\boldsymbol{|I_{<o}| < \sqrt{\ell_{i-1}}}$.}
  This means $\beta(v)=\tau(v)$. 
  In this case, we will build the subtree $T''$ 
  by picking only one path from $T_{<o}'$ and the remaining paths from
  $P_{o+1},\dotsc,P_\ell$. Let $A$ be the set of all paths of $X$ that
  are active in $T'$ and let $A_{>o}$ ($A_{<o}$) be the subset of $A$
  containing only the paths $P_j$ with $j>o$ ($j<o$).
  We say that a path $P_a$ of $X$ is \emph{attached} to a path $P_b$  of $X$
  if $a>b$, $V(P_a)\cap V(P_b)\neq \emptyset$ and there is no $b'< b$
  with $V(P_a)\cap V(P_{b'})\neq \emptyset$. We say that a path $P_a$ in
  $A_{>o}$ is \emph{weakly attached} to a path $P_b$ in $A_{<o}$ if
  either:
  \begin{itemize}
  \item $P_a$ is attached to $P_b$ or
  \item $P_a$ is attached to a path $P_c$ in $A_{>o}$ that is in
    turn attached to $P_b$.
  \end{itemize}
  Note that because $T'$ is a tree, every path in $A_{>o}$ is weakly
  attached to exactly one path in $A_{<o}$. Moreover, for the same
  reason any path in $A_{<o}$ together with all paths in $A_{>o}$
  that are weakly attached to it forms a subtree of $T'$.

  Therefore, there is a path $P$ in $A_{<o}$ such
  that at least $|I_{>o}|/|A_{<o}|$ paths in $I_{>o}$ are
  weakly attached to $P$. Moreover, the union $T_P$ of $P$ and all
  paths in $A_{>o}$ that are weakly attached to $P$ is a subtree of $T'$. 
  Note that $T_P$ has at least $|I_{>o}|/|A_{<o}|$
  paths that are intact in $T_P$
  and at most $z_{i-1}$ paths that are broken in $T_P$ at position $J$.
  Since $|I_{<o}| \ge \sqrt{\ell_{i-1}} \ge  \ell_d = 14^2+2d \ge 2d$ and $z_{i-1} \le z_d = 2d$,
  it holds that $|I_{<o}|+z_{i-1} \le 2|I_{<o}|$. Therefore,
  \iflong%
  \[\begin{array}{rl}
      |I_{>o}|/|A_{<o}| & \geq (\ell_{i-1}-|I_{<o}|)/(|I_{<o}|+z_{i-1})\\
                         & \geq (\ell_{i-1}-|I_{<o}|)/(2|I_{<o}|)\\
                         & \geq (\ell_{i-1})/(2\sqrt{\ell_{i-1}}) - 1/2\\
                         & \geq \sqrt{\ell_{i-1}}/2 -  1/2 \\
                         & = 8\ell_i/2 -  1 \geq 3 \ell_i.
  \end{array}\]
  \fi%
  \ifshort%
  \begin{multline*}
      |I_{>o}|/|A_{<o}| \geq (\ell_{i-1}-|I_{<o}|)/(|I_{<o}|+z_{i-1})
                         \geq (\ell_{i-1}-|I_{<o}|)/(2|I_{<o}|)
                         \geq \\ (\ell_{i-1})/(2\sqrt{\ell_{i-1}}) - 1/2
                         \geq \sqrt{\ell_{i-1}}/2 -  1/2
                         = 8\ell_i/2 -  1 \geq 3 \ell_i.
  \end{multline*}
  \fi

  Because $\beta(v)=\tau(v)$, all paths $P_j$ with $j>o$ that are
  active in $T_P$ are still contained in $J'$ and
  moreover if $P_j$ is intact in $J$, then it is still intact in $J'$. 
  Moreover, because of
  Lemma~\ref{lem:sep-obs-prop} (C2), $v$ occurs in at most two
  clauses of $P$ and because $\beta(v)=\tau(v)$ all paths
  $P_{o+1},\cdots, P_\ell$ that are attached to $P$ are still attached
  to $P$ after setting $v$ to $\beta(v)$. It follows that setting $v$
  to $\beta(v)$ removes at most two clauses and at most one variable
  (i.e., the variable $v$) from $P$ and also from $T_P$. Therefore,
  $J'=J[\beta]$ contains a component that contains a subtree $T''$ of $T_P$
  with at least $3\ell_i/3 = \ell_i$ paths
  that are intact in $T''$ and at most $z_{i-1}+1 \le z_i$ paths that are
  broken in $T''$.

\end{proof}

\textEnd[category=sobstructionTrees]{
\section{Basic Properties of Obstruction Trees}\label{sec:obstructionTrees}


In this section, we prove with \Cref{lem:obstructToBackdoor} that obstruction trees yield indeed a lower bound on the backdoor depth.
\begin{lemma}\label{lem:undoAssignment}
  Let $\satf \in \CNF$ and \CpnsBeAClass.
  Let $\beta$ be a partial assignment of the variables in $\satf$
  and $T$ be a $\CCC$-obstruction tree of depth $d$ in $\satf[\beta]$.
  Then, $T$ is also a $\CCC$-obstruction tree of depth $d$ in $\satf$.
\end{lemma}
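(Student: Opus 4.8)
The plan is to prove the statement by induction on the depth $d$, following the recursive structure of Definition~\ref{def:obstructionTree}. Throughout I use the paper's implicit convention that a clause of a reduced formula $\satf[\beta]$ is identified with the clause of $\satf$ it originates from; under this identification the incidence graph of $\satf[\beta]$ is a subgraph of that of $\satf$, and an obstruction tree in $\satf[\beta]$ is literally a set of variables and clauses of $\satf$.

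For the base case $d = 0$, the tree $T = \{c\}$ consists of a single clause that is $\CCC$-bad in $\satf[\beta]$. Here I would simply invoke the fact recorded in Section~\ref{sec:prelim} that $\CCC = \CCC_{\alpha,s}$ is closed under assignments, so a clause that is $\CCC$-bad after applying $\beta$ is already $\CCC$-bad beforehand (its original version contains at least as many $\alpha$-literals). Hence $\{c\}$ is a $\CCC$-obstruction tree of depth $0$ in $\satf$.

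For the inductive step $d = i+1$, I would unpack the definition: $T = T_1 \cup T_2 \cup \var(P) \cup P$, where $T_1$ is a depth-$i$ obstruction tree in $\satf[\beta]$, $\gamma$ is a partial assignment of the variables of $\satf[\beta]$, $T_2$ is a depth-$i$ obstruction tree in $\satf[\beta][\gamma]$ for which no variable of $\satf[\beta][\gamma]$ occurs in a clause of both $T_1$ and $T_2$, and $P$ connects $T_1$ and $T_2$ in $\satf[\beta]$. The idea is to exhibit $T$ as a depth-$(i+1)$ obstruction tree in $\satf$ by collapsing the two layers of assignments into the single partial assignment $\beta' := \beta \cup \gamma$; this is well defined since $\gamma$ assigns only variables untouched by $\beta$, and $\satf[\beta'] = \satf[\beta][\gamma]$. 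Then $T_1$ becomes a depth-$i$ obstruction tree in $\satf$ by the induction hypothesis, $T_2$ is already a depth-$i$ obstruction tree in $\satf[\beta']$, and $P$---being a path in the incidence graph of $\satf[\beta]$---is still a path connecting $T_1$ and $T_2$ in the larger incidence graph of $\satf$.

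The only point needing a genuine argument, and hence the main (minor) obstacle, is the separation condition for $\satf$: I must check that no variable $v \in \var(\satf[\beta'])$ occurs in a clause of both $T_1$ and $T_2$ when these clauses are read as clauses of $\satf$ rather than as their reductions. The key observation is that for any $v \in \var(\satf[\beta'])$ and any clause $c$ that survives in $\satf[\beta']$ (in particular any clause of $T_1$ or $T_2$), $v$ occurs in the original clause $c$ of $\satf$ if and only if it occurs in $c$ after reduction by $\beta$ (or by $\beta'$): reduction deletes only literals over variables in the domain of the assignment, and $v$ lies outside that domain. Thus the relations ``$v$ occurs in a clause of $T_1$'' and ``$v$ occurs in a clause of $T_2$'' do not depend on whether we read them in $\satf$, $\satf[\beta]$, or $\satf[\beta']$, and the separation condition transfers verbatim from the hypothesis. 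With every condition of Definition~\ref{def:obstructionTree} verified for ambient formula $\satf$ and assignment $\beta'$, the claim follows.
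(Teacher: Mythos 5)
Your proof is correct and follows essentially the same route as the paper's: induction on $d$, with the base case via closure of $\CCC_{\alpha,s}$ under assignments, and the inductive step via collapsing the two assignment layers into $\beta\cup\gamma$, applying the induction hypothesis to $T_1$, and observing that $P$ remains a path in the larger incidence graph. You spell out the transfer of the separation condition more carefully than the paper (which leaves this implicit), and that verification is sound; the only small imprecision is the parenthetical claim that every clause of $T_1$ survives in $\satf[\beta']=\satf[\beta][\gamma]$, which need not hold since $\gamma$ could satisfy such a clause---but your actual argument only uses that a variable $v\in\var(\satf[\beta'])$ lies outside the domain of $\beta$, so its occurrence in a given clause is unchanged by the reduction, and that reasoning does not depend on the clause's survival.
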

\begin{proof}
  We use induction on $d$. If $d=0$, then there is a \bad
  clause $c$ of $\satf[\beta]$ such $T=\{c\}$. Therefore, $c$
  is also a \bad clause in $\satf$ and $T$ is a
  $\CCC$-obstruction tree of depth $0$ in $\satf$.

  Towards showing the induction step, let $d>0$. Then there is a
  $\CCC$-obstruction tree $T_1$ in $\satf[\beta]$ of depth $d-1$, an
  assignment $\beta'$ of the variables in $\satf[\beta]$ and a
  $\CCC$-obstruction tree $T_2$ in $\satf[\beta \cup \beta']$ of
  depth $d-1$ such that no variable of $\satf[\beta \cup \beta']$
  occurs both in a clause of $T_1$ and $T_2$. Moreover, there
  is a path $P$ connecting $T_1$ and $T_2$ in $\satf$. Because of the
  induction hypothesis, $T_1$ is a $\CCC$-obstruction tree in $\satf$
  of depth $d-1$. Therefore, $T=T_1\cup T_2\cup \var(P)\cup P$ is a
  $\CCC$-obstruction tree in $\satf$ of depth $d$,

\end{proof}

\begin{lemma}\label{lem:component}
    Let $\satf \in \CNF$ and \CpnsBeAClass.
  A set $T$ is a $\CCC$-obstruction tree of depth $d$ in $\satf$ if and only if
  there is a component $\satf'$ of $\satf$ such that~$T$ is a $\CCC$-obstruction tree of depth $d$ in $\satf'$.
\end{lemma}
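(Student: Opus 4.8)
The plan is to prove the equivalence by induction on the depth $d$, establishing both directions simultaneously, and I would first isolate two structural facts to lean on throughout. (a) Every $\CCC$-obstruction tree is connected in the incidence graph: this is a trivial induction, using that---after identifying a clause of $\satf[\beta]$ with its parent clause in $\satf$---the incidence graph of $\satf[\beta]$ is a subgraph of that of $\satf$, so the two depth-$(d{-}1)$ subtrees joined by the path $P$ in \Cref{def:obstructionTree} sit inside a single connected subgraph; hence every $\CCC$-obstruction tree in $\satf$ is confined to one connected component of $\satf$. (b) The connected components of a CNF formula partition its variables; consequently, for any partial assignment $\beta$ and any component $\satf'$ of $\satf$, the formula $\satf'[\beta]$ is a union of connected components of $\satf[\beta]$ (restricting $\beta$ to $\var(\satf')$, clauses from $\satf\setminus\satf'$ never share a variable with clauses from $\satf'$, hence not with their residues either).

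For the \emph{forward direction} I would take a $\CCC$-obstruction tree $T$ of depth $d$ in $\satf$, let $\satf'$ be the (by~(a)) unique component containing it, and show by induction on $d$ that $T$ is a $\CCC$-obstruction tree of depth $d$ in $\satf'$. Depth $0$ is immediate because $\{c\}\notin\CCC$ does not depend on the ambient formula. For the step I would unfold $T=T_1\cup T_2\cup\var(P)\cup P$ as in \Cref{def:obstructionTree}: $T_1\subseteq T$ already lives in $\satf'$, so the induction hypothesis rebases it; $T_2$ lives in $\satf'[\beta]$, which by~(b) is a union of components of $\satf[\beta]$, so applying the induction hypothesis first inside $\satf[\beta]$ (to localize $T_2$ to the component containing it, which is also a component of $\satf'[\beta]$) and then inside $\satf'[\beta]$ rebases $T_2$; the path $P\subseteq T$ trivially stays inside $\satf'$; and the separation condition only gets weaker when $\var(\satf[\beta])$ shrinks to $\var(\satf'[\beta])$. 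Hence $T$ is a $\CCC$-obstruction tree of depth $d$ in $\satf'$.

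For the \emph{reverse direction}, with $T$ a $\CCC$-obstruction tree of depth $d$ in some component $\satf'$ of $\satf$, I would again induct on $d$; the base case is trivial since $\satf'\subseteq\satf$ and $\CCC$-badness is intrinsic. In the step I would rebase $T_1$ by the induction hypothesis directly, and rebase $T_2$ by going $\satf'[\beta]\to$ (component of $\satf'[\beta]$) $=$ (component of $\satf[\beta]$) $\to\satf[\beta]$, i.e.\ two applications of the induction hypothesis using~(b); the path $P$ transfers since $\satf'\subseteq\satf$. The one step needing care---and the step I expect to be the main obstacle---is the separation condition, which in $\satf$ must now be checked for \emph{all} $v\in\var(\satf[\beta])$, a superset of $\var(\satf'[\beta])$. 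I would dispatch it with the variable-disjointness in~(b): a variable in $\var(\satf[\beta])\setminus\var(\satf'[\beta])$ is witnessed only by a clause of $\satf\setminus\satf'$, hence lies outside $\var(\satf')$ and therefore cannot occur in any clause of $T_1$ or $T_2$, so the requirement is vacuously preserved for these extra variables.

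Apart from that separation check, the proof is a mechanical unfolding of \Cref{def:obstructionTree}; the only conceptual ingredient is fact~(b), which simultaneously confines an obstruction tree to a single component, makes $\satf'[\beta]$ a union of components of $\satf[\beta]$, and forces the ``no shared variable'' requirement to survive enlargement of $\satf'$ to $\satf$. A secondary, purely notational nuisance I would flag up front is the identification of a clause of $\satf[\beta]$ with its parent in $\satf$, which is what licenses speaking of ``the same'' set $T$ as an obstruction tree in several different formulas.
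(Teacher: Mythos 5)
Your proof is correct and rests on the same core observation as the paper's one-line proof, namely that the variables and clauses of an obstruction tree induce a connected subgraph of $\satf$ and are therefore confined to a single component. The paper treats the rest (transferring the recursive definition between $\satf$ and $\satf'$, including the handling of $\satf'[\beta]$ as a union of components of $\satf[\beta]$ and the preservation of the separation condition) as immediate, whereas you carefully unwind it by induction on $d$; this is a more verbose rendering of the same argument, not a different route.
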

\begin{proof}
  This follows because all variables and clauses belonging to $T$
  induce a connected subgraph of $\satf$.
\end{proof}
\begin{lemma}\label{lem:variableOutsideT}
    Let $\satf \in \CNF$ and \CpnsBeAClass.
    Let $T$ be a $\CCC$-obstruction tree of depth $d$ in $\satf$, $v \in
    \var(\satf)$ be a variable not occurring in any clause of $T$,
    and $\tau$ be an assignment to that variable.
    Then, $T$ is a $\CCC$-obstruction tree of depth $d$ in $\satf[\tau]$.
\end{lemma}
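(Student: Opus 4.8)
The plan is to prove the statement by induction on the depth $d$, following the recursive structure of \Cref{def:obstructionTree} and using \Cref{lem:undoAssignment} as the one non-routine ingredient. For the base case $d=0$ we have $T=\{c\}$ for some $\CCC$-bad clause $c$ of $\satf$; since $v$ does not occur in $c$, the assignment $\tau$ neither satisfies $c$ nor removes a literal from it, so $c$ is still a clause of $\satf[\tau]$ with the same set of $\alpha$-literals and is therefore still $\CCC$-bad, giving that $T=\{c\}$ is a $\CCC$-obstruction tree of depth $0$ in $\satf[\tau]$.

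For the inductive step $d>0$, I would unfold $T$ according to \Cref{def:obstructionTree} as $T=T_1\cup T_2\cup\var(P)\cup P$, where $T_1$ is a $\CCC$-obstruction tree of depth $d-1$ in $\satf$, $\beta$ is a partial assignment of $\var(\satf)$, $T_2$ is a $\CCC$-obstruction tree of depth $d-1$ in $\satf[\beta]$ such that no variable of $\satf[\beta]$ occurs both in a clause of $T_1$ and a clause of $T_2$, and $P$ is a path connecting $T_1$ and $T_2$ in $\satf$. Since $v$ occurs in no clause of $T$, it occurs in no clause of $T_1$, $T_2$, or $P$, and hence is not a vertex of $P$ either, because a variable on a path occurs in an adjacent clause of that path, which lies in $T$. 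As a preliminary step, if $\beta$ assigns $v$ I would replace $\beta$ by its restriction to $\var(\satf)\setminus\{v\}$; by \Cref{lem:undoAssignment} this still yields a valid decomposition of $T$ as a depth-$d$ obstruction tree in $\satf$, and now $\beta$ and $\tau$ have disjoint domains, so in particular they are compatible. I would then reassemble the witness over $\satf[\tau]$: by the induction hypothesis applied to $T_1$ and the variable $v$, the tree $T_1$ is a $\CCC$-obstruction tree of depth $d-1$ in $\satf[\tau]$; and by the induction hypothesis applied to $T_2$ with the ambient formula $\satf[\beta]$ and the variable $v$ — or trivially if $v\notin\var(\satf[\beta])$ — the tree $T_2$ is a $\CCC$-obstruction tree of depth $d-1$ in $\satf[\beta][\tau]$, which equals $\satf[\tau][\beta']$ for $\beta'$ the restriction of $\beta$ to $\var(\satf[\tau])$. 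Because $v$ occurs in no clause of $T_1$, $T_2$, or $P$, the assignment $\tau$ leaves all these clauses untouched, so the variable-disjointness between $T_1$ and $T_2$ is inherited and $P$ still connects $T_1$ and $T_2$ in $\satf[\tau]$. Therefore $T=T_1\cup T_2\cup\var(P)\cup P$ is a $\CCC$-obstruction tree of depth $d$ in $\satf[\tau]$, which closes the induction.

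I expect the only delicate point to be the bookkeeping of which partially instantiated formula each of $T_1$ and $T_2$ lives in, together with arranging that the auxiliary assignment $\beta$ be compatible with $\tau$ — which is exactly what \Cref{lem:undoAssignment} provides. The base case, the survival of the connecting path $P$, and the persistence of the variable-disjointness condition are all routine once the formulas are lined up correctly.
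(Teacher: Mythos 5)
Your proof is correct, but it takes a more elaborate route than the paper's. The paper's argument is direct and non-inductive: since $v$ occurs in no clause of $T$, every clause of $T$ survives into $\satf[\tau]$ with exactly the same literals (so $\CCC$-good/$\CCC$-bad status is preserved), and since every variable vertex of $T$ lies on some path inside $T$ and hence occurs in an adjacent clause of $T$, $v$ is not a vertex of $T$ either; thus $T$ is literally the same labeled tree with the same clause contents in $\satf[\tau]$, and one is done. You instead perform a structural induction on $d$, explicitly reassembling the witness $(T_1,\beta,T_2,P)$ over $\satf[\tau]$, with a careful detour via \Cref{lem:undoAssignment} to restrict $\beta$ so that it cannot conflict with $\tau$ on $v$ and so that $\satf[\beta][\tau]=\satf[\tau][\beta']$ commutes. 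Both hinge on the same observation that $v$ touches nothing in $T$; the paper's version is shorter and leaves the ``same object, hence same status'' step implicit, whereas yours is longer but spells out exactly how the certifying decomposition transfers. One small caveat in your writeup: invoking \Cref{lem:undoAssignment} by itself only shows that $T_2$ remains an obstruction tree in $\satf[\beta']$; you should also note (as you implicitly use later) that the disjointness condition of \Cref{def:obstructionTree} still holds for the restricted $\beta'$, which is easy since the only potentially new variable to check is $v$ itself, and $v$ occurs in no clause of $T$.
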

\begin{proof}
  Because $v$ does not appear in any clause of $T$, all clauses
  of $T$ in $\satf$ are still present in $\satf[\tau]$ and contain the
  same literals as in $\satf$; this also implies that every such
  clause is \good (\bad) in $\satf$ if and only if
  it is in $\satf[\tau]$. Moreover, because every variable of $T$ 
  occurs in some clause of $T$, it also follows that $v$ is not
  contained as a variable in $T$. Therefore, $T$ has the same
  variables and clauses in $\satf$ as in $\satf[\tau]$ and
  moreover all clauses remain the same, which shows the lemma.
\end{proof}
We are now ready to show the most crucial property of obstruction
trees, namely, that they can be used to obtain lower bounds for the
backdoor depth of a CNF formula.

\begin{proof}[Proof of \Cref{lem:obstructToBackdoor}]
  Assume there exists a $\CCC$-obstruction tree of depth $d$ in $\satf$.
  We will show that the connector has a strategy for the game $\game$
  to reach within $i\leq d$ rounds a position $J$ such
  that there is a $\CCC$-obstruction tree in $J$ of depth $d-i$. 
  This allows the connector to reach after $d$ rounds a position that contains
  a $\CCC$-obstruction tree of depth $0$, i.e., a \bad clause.
  Thus, the splitter has no strategy to win the game after at most $d$ rounds
  and by \Cref{lem:strategyEqualsBackdoor}, the statement of this lemma is proven.

  We show the claim by induction on $i$. The claim trivially holds for
  $i=0$. So suppose that $i>0$ and let $J$ be the position
  reached by the connector after $i-1$ rounds. Then, by the induction
  hypothesis, $J$ contains a $\CCC$-obstruction tree $T$ of depth at
  least $d-i+1>1$. 

  Since the depth is at least one,
  there further exist
  a $\CCC$-obstruction tree $T_1$ of depth $d-i$ in $J$,
  a partial assignment $\beta$ of the variables in $J$, and
  a $\CCC$-obstruction tree $T_2$ of depth $d-i$ in $J[\beta]$ such that
  that no variable $v \in \var(J[\beta])$ occurs both in a clause of $T_1$ and $T_2$.
  Now, let $v$ be the next variable chosen by the
  splitter. We distinguish the following cases:
  \begin{enumerate}
  \item 
    Assume $v$ does not occur in any clause of $T_1$.
    Then, as the connector, we assign $v$ an arbitrary value. Let $\tau$ be this assignment.
    By \Cref{lem:variableOutsideT}, $T_1$ is a $\CCC$-obstruction tree of depth $d-i$ in $J[\tau]$.
    We choose the connected component $J'$ containing $T_1$ in $J[\tau]$.
    By \Cref{lem:component} $T_1$ is also a $\CCC$-obstruction tree of depth $d-i$ in $J'$.
  \item
    Assume $v$ does not occur in any clause of $T_2$.
    By \Cref{lem:undoAssignment}, $T_2$ is a $\CCC$-obstruction tree of depth $d-i$ in $J$.
    We proceed analogously to the previous case, and reach a position
    in which $T_2$ is a $\CCC$-obstruction tree of depth $d-i$.
  \item
    Otherwise, $v$ occurs both in a clause of $T_1$ and
    $T_2$. Since $T$ is a $\CCC$-obstruction tree in~$J$, it
    follows by \Cref{def:obstructionTree} that $v \notin \var(J[\beta])$ and therefore $v \in \var(\beta)$.
    The connector can now choose the assignment $v=\beta(v)$ for $v$.
    Because of
    \Cref{lem:undoAssignment}, $T_2$ is a $\CCC$-obstruction tree of
    depth $d-i$ in $J[v=\beta(v)]$.
    Next, the connector chooses the component of $J[v=\beta(v)]$ containing $T_2$.
    By \Cref{lem:component}, $T_2$ is also a $\CCC$-obstruction tree in this component.
  \end{enumerate}%
\end{proof}

}

\section{Winning Strategies and Algorithms}\label{sec:find}

We are ready to present our algorithmic results.
Earlier, we discussed that separator obstructions are used to separate existing
obstruction trees from future obstruction trees.
As all obstruction trees are built only from shortest paths,
it is sufficient to derive a splitter-algorithm
that takes a shortest path $P$ and separates it from all future obstructions.
By reaching a position $J$ such that no variable in $\var(J)$ occurs in a clause of $P$,
we are guaranteed that all future obstructions are separated from $P$,
as future obstructions will only contain clauses and variables from $J$.

\begin{lemma}\label{lem:separate}
  Let \CpnsBeAClass.
  There exists a splitter-algorithm that implements a strategy
  to reach
  for each game $\game$, non-negative integer $d$, and shortest path $P$ between two \bad clauses in $\satf$
  within at most $(3s+d+1)\sepobsiz$ rounds either:
  %
  %
  \begin{enumerate}[1)]
  \item a winning position, or
  \item a position $J$ such that no variable in $\var(J)$ occurs in a clause of $P$, or
  \item a proof that the $\CCC$-backdoor depth of $\satf$ is at least $d$. 
  \end{enumerate}
  This algorithm takes at most $\bigoh(\|\satf\|)$ time per move.
\end{lemma}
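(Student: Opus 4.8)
The plan is to design a splitter-algorithm that incrementally builds a $\CCC$-separator obstruction $X=\tuple{\seq{P_1,\dotsc,P_\ell},\tau}$ with $P_1=P$, following exactly the recursive construction in \Cref{def:separatorObstruction}. The internal state of the algorithm stores the current separator obstruction $X$ together with the tree $T_i$, the sets $B_i$, $V_i$, and the assignment $\tau_i$ (which is compatible with the assignment $\tau_J$ forced by the connector's moves so far). At a position $J$, the algorithm first checks the three cases analogous to what is described in the text before \Cref{lem:bc-arity}: if $J\in\CCC$ we have reached a winning position (case 1); if $J$ contains no \bad{} clause having a path to $T_i$ in $J$, then since every variable the splitter will ever be able to touch lies in $\var(J)$, we will argue $\var(J)$ contains no variable of $P=P_1$ and output case 2; otherwise, pick a \bad{} clause $b_{i+1}$ of $J$ closest to $T_i$, a shortest path $P_{i+1}$ from $T_i$ to $b_{i+1}$ in $J$, and extend $X$ to obtain $T_{i+1}$, $B_{i+1}$, $V_{i+1}$. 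The newly added important variables $V_{i+1}\setminus V_i$ are then played one at a time by the splitter over the next rounds; by \Cref{lem:sep-obs-prop}(C7) together with \Cref{lem:bc-arity} (which bounds the number of $\alpha$-literals in any clause by $d+s$ as long as backdoor depth is $\le d$), we have $|V_{i+1}\setminus V_i|\le 2s+(d+s)+1=3s+d+1$, so each path contributes at most $3s+d+1$ splitter moves.

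The termination argument is the core of the round bound. Each round the splitter makes a move, and either we are already in case 1 or 2, or a new important variable from some $V_{i+1}\setminus V_i$ is being played out, or a new path $P_{i+1}$ has just been appended. The key point is that the size of the separator obstruction $X$ strictly increases whenever a new path is added, and it can be added at most $\sepobsiz - 1$ times before its size reaches $\ell=\sepobsiz$; at that moment, by \Cref{lem:sep-obs-rbd}, $\satf$ has $\CCC$-backdoor depth at least $d$, so we can stop and output case 3 (the separator obstruction itself, stored in the internal state, is the promised "proof"). Between two consecutive path additions the splitter spends at most $3s+d+1$ rounds playing out the important variables of the newly added path. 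Hence the total number of rounds before one of the three outcomes is reached is at most $(3s+d+1)\cdot\sepobsiz$, matching the claimed bound. When we do reach case 1 or 2 earlier, we are done immediately.

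The correctness of case 2 needs a small separate argument: I claim that when $J$ contains no \bad{} clause with a path in $J$ to $T_i$, then in fact no variable of $\var(J)$ occurs in a clause of $P=P_1\subseteq T_i$. Indeed $P$ is a path between two \bad{} clauses $b_0,b_1$; these clauses are \bad{} in $\satf$ and hence, being closed under assignments, if either still survives in $J$ it is \bad{} in $J$ and lies on $T_i$, contradicting the case assumption, so both $b_0$ and $b_1$ have been removed (satisfied) by $\tau_J$. More generally, suppose some variable $v\in\var(J)$ occurs in a clause $c$ of $P$. Either $c$ has been removed — then $c$ no longer constrains $v$, but $v\in\var(J)$ forces $v$ not to have been assigned, and one follows the surviving portion of $P$ in $J$ from $v$ towards a \bad{} clause — or $c$ survives in $J$, in which case $c$ is a clause of $T_i$ present in $J$ and $v$ is adjacent to $T_i$ in $J$; but $J$ still has some \bad{} clause (else we are in case 1), and the component structure needs to be tracked carefully here. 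This is the one place where I expect the bookkeeping to be most delicate, and the cleanest route is probably to maintain as an explicit invariant of the internal state that after processing, every variable in $\var(J)$ that occurs in a clause of $P_1\cup\dotsb\cup P_i$ has already been placed into $V_i$ and hence assigned by $\tau_i\preceq\tau_J$ — which contradicts $v\in\var(J)$. Establishing and preserving this invariant across the "play out $V_{i+1}\setminus V_i$" phase, using \Cref{lem:sep-obs-prop}(C2)–(C6) to control which clauses and variables of the obstruction can be disturbed by the connector's responses, is the main technical obstacle; everything else is the routine translation of \Cref{def:separatorObstruction} into an online procedure together with the per-move $\bigoh(\|\satf\|)$ time bound, which holds because each move only requires a shortest-path / BFS computation in the incidence graph $G_\satf$ and constant-size updates to $X$.
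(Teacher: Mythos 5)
Your proposal follows the same overall approach as the paper: incrementally build a $\CCC$-separator obstruction $X=\langle\langle P_1,\dots,P_\ell\rangle,\tau\rangle$ with $P_1=P$, play out the new important variables one at a time, bound $|V_{i+1}\setminus V_i|\le 3s+d+1$ via Lemma~\ref{lem:sep-obs-prop}(C7) and Lemma~\ref{lem:bc-arity}, and invoke Lemma~\ref{lem:sep-obs-rbd} once the obstruction grows to size $\sepobsiz$. That part of your write-up is on target, including the per-move BFS time bound.

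However, the place you flag as the ``main technical obstacle'' --- correctness of case~2 --- is where your reasoning goes off the rails, and the paper's argument is substantially simpler than the invariant machinery you are contemplating. There are two concrete problems. First, your claim that $b_0,b_1$ being $\CCC$-bad in $\satf$ implies they are $\CCC$-bad in $J$ if they survive is backwards: closure under assignments only gives that \emph{good} clauses stay good under assignment (equivalently, a clause that is bad in $F[\tau]$ was already bad in $F$); a bad clause may very well become good. Indeed, once all of $V_1\supseteq\var_\alpha(b_0)\cup\var_\alpha(b_1)$ has been played out, if $b_0$ still survives it has lost all its $\alpha$-literals and is therefore $\CCC$-good. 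So the deduction ``hence both $b_0,b_1$ have been removed'' is not justified. Second, the heavy invariant you propose (``every variable of $\var(J)$ that occurs in a clause of $P_1\cup\dots\cup P_i$ has already been placed into $V_i$'') is not what the paper establishes, and would in fact be false in general since only $\alpha$-occurring variables of a few designated clauses enter $V_i$; nothing prevents a negative-only variable of a path clause from surviving.

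The paper's argument for case~2 is short and exploits the key fact that you did not use: \emph{every position $J$ of the game is a single connected component}. Given that $J$ contains at least one bad clause $b$ (otherwise we are in case~1), and $b$ has no path in $J$ to any vertex of $T_\ell$, connectedness of $J$ immediately gives that $J$ and $T_\ell$ are vertex-disjoint. Now suppose $v\in\var(J)$ occurs in some clause $c$ of $T_\ell$. Since $v\in\var(J)$, $v$ is unassigned by $\tau_J$, so the literal of $v$ in $c$ is not falsified; hence if $c$ is still present in $\satf[\tau_J]$, the edge $v$--$c$ exists and $c$ would lie in $J$'s component, i.e.\ $c\in J\cap T_\ell$, a contradiction. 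This is all the paper needs; it requires no auxiliary invariant, no tracking of which variables of $P$ have been placed in $V_i$, and no claim that $b_0,b_1$ get satisfied. Your proposal also spends effort on the case where $c$ has been satisfied --- a situation the paper simply does not need to analyze at the level of detail you attempt --- and that is where you got stuck. Finally, one organizational point the paper makes that tightens your exposition: the check ``some clause has more than $d+s$ $\alpha$-literals $\Rightarrow$ output case~3'' is done once up front, so that the bound $|V_{i+1}\setminus V_i|\le 3s+d+1$ is available uniformly throughout; you bury it mid-argument, which makes the per-path accounting harder to state cleanly.
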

\begin{proof}
If a clause of $F$ contains more than $d+s$ $\alpha$-literals, then this constitutes
by Lemma~\ref{lem:bc-arity} a proof that the $\CCC$-backdoor depth of
$\satf$ is at least $d$ and we archive case 3).
Thus, we can assume that every clause in every position of the game contains at most $d+s$ $\alpha$-literals.

  Let $\tuple{\seq{P_1,\dotsc,P_\ell},\tau}$ be a $\CCC$-separator
  obstruction for $\satf$ and let $\tau'$ be a sub-assignment of~$\tau$ assigning
  at least all variables in $V_{\ell-1}$. Then, we call
  $\tuple{\seq{P_1,\dotsc,P_\ell},\tau'}$ a \emph{partial $\CCC$-separator obstruction} for $\satf$.
  Consider the following splitter-algorithm, where
  we associate with each position $J$ of the game $\game$ a
  partial $\CCC$-separator obstruction $X(J)$ of the form $X(J)=\tuple{\seq{P_1,\dotsc,P_\ell},\tau_J}$ with $P_1=P$.
  We set $X(S)=\tuple{\seq{P},\emptyset}$ for the starting position $S$ of the game.

  Then, the splitter-algorithm does the following for a position $J$ in $\game$. 
  Let $X(J)=\tuple{\seq{P_1,\dotsc,P_\ell},\tau_J}$ and $V_i,B_i,T_i,T$ as in \Cref{def:separatorObstruction}.
  If there is at least one variable in $V_\ell\setminus V_{\ell-1}$ (where we set $V_0=\emptyset$) that has not
  yet been assigned by $\tau_J$, the splitter chooses any such
  variable. Otherwise, $X(J)$ is a $\CCC$-separator obstruction and
  we distinguish the following cases:
  \begin{enumerate}
  \item If there is a \bad clause in $J$ that has a path to some
      vertex of $T_\ell$, then let $b_{\ell+1}$ be a \bad clause that is
      closest to any vertex of $T_\ell$ in $J$ and let $P_{\ell+1}$ be a
    shortest path from $b_{\ell+1}$ to some vertex of $T_\ell$ in $J$. 
    Note that $\tuple{\seq{P_1,\dotsc,P_\ell,P_{\ell+1}},\tau_J}$ is 
    a partial $\CCC$-separator obstruction for $\satf$. The splitter
    now chooses any variable in $V_{\ell+1}\setminus V_{\ell}$ and
    assigns $X(J')=\tuple{\seq{P_1,\dotsc,P_\ell,P_{\ell+1}},\tau_{J'}}$
    for the position $J'$ resulting from this move.
  \item Otherwise, $X(J)$ can no longer be extended and either: (1)
    there is no \bad clause in $J$, in which case we reached a
    winning position (i.e., we achieved case 1)), or (2) every \bad clause of $J$ has no path
    to $T_\ell$, which implies that no variable of $J$ occurs
    in a clause of $T_\ell$ and therefore also of $P$ (i.e., we achieved case 2)).
  \end{enumerate}
  This completes the description of the splitter-algorithm.
  Moreover, if every play against the splitter-algorithm ends after
  at most $(3s+d+1)\sepobsiz$ rounds, every position is
  either of type~1) or type 2) and we are done.
  Otherwise, after playing for $(3s+d+1)\sepobsiz$ rounds we reach a position $J$.
  As stated at the beginning of the proof, every clause contains at most $d+s$ $\alpha$-literals
  and therefore, we obtain from Lemma~\ref{lem:sep-obs-prop} (C7) that $|V_{i+1}\setminus V_{i}| \le 3s+d+1$.
  This means that the size of the $\CCC$-separator obstruction increases by at least $1$ after at
  most $3s+d+1$ rounds.
  This means, at position $J$, reached after
  $(3s+d+1)\sepobsiz$ rounds,
  there is a partial $\CCC$-separator obstruction
  $X(J)$ of size at least $\sepobsiz$.
  By \Cref{lem:sep-obs-rbd}, this is a proof that $\satf$ has
  $\CCC$-backdoor depth at least $d$.

  Finally, the splitter-algorithm takes time at most $\bigoh(\|\satf\|)$
  per round since a \bad clause that is closest to the
  current $\CCC$-separator obstruction and the associated shortest
  path can be found using a simple breadth-first search.
\end{proof}

\textEnd[category=sfind]{
Since selecting more variables can only help the splitter in archiving
their goal, we immediately also get the following statement from \Cref{lem:separate}.
\begin{corollary}\label{cor:separate}
Consider \CpnsBeAClass,
a game $\game$ and a position $J'$ in this game, a non-negative
integer $d$ and shortest path $P$ between two \bad clauses in $\satf$.
There exists a splitter-algorithm that implements a strategy that continues the
game from position $J'$ and reaches within at most
$(3s+d+1)\sepobsiz$ rounds either:
\begin{enumerate}[1)]
  \item a winning position, or
  \item a position $J$ such that no variable in $\var(J)$ occurs in a clause of $P$, or
  \item a proof that the $\CCC$-backdoor depth of $\satf$ is at least $d$.
\end{enumerate}
This algorithm takes at most $\bigoh(\|\satf\|)$ time per move.
\end{corollary}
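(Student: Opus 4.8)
The plan is to derive \Cref{cor:separate} directly from \Cref{lem:separate} by a simple ``relativization'' argument: a splitter-algorithm that works for the whole game, when restarted at an arbitrary position $J'$, still does its job, provided we are willing to pay for the rounds already played and to remember in the internal state everything relevant about the prefix of the play. The key observation is that the bound $(3s+d+1)\sepobsiz$ in \Cref{lem:separate} counts rounds from the start of the game, and the internal state of the splitter-algorithm from that lemma is exactly a partial $\CCC$-separator obstruction $X$, which does not depend on how we arrived at the current position, only on the current position itself. So continuing from $J'$ is no different from continuing from any intermediate position the algorithm of \Cref{lem:separate} might have reached on its own.

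First I would recall the statement ``selecting more variables can only help the splitter'': if at position $J'$ the variables already assigned by $\tau_{J'}$ happen to include variables that the algorithm of \Cref{lem:separate} would not yet have chosen, this only shrinks the remaining work — clauses may already be satisfied, the instance may already be more decomposed, and in particular the three target conditions (winning position, $\var(J)$ disjoint from $\var(P)$'s clauses, or a proof of backdoor depth $\ge d$) are all downward-closed under further assignments in the appropriate sense. Concretely, I would initialize the internal state at $J'$ with the partial separator obstruction $X(J') = \tuple{\seq{P},\tau_{J'}}$ (the same initialization rule as in the proof of \Cref{lem:separate}, with $\tau_{J'}$ in place of the empty assignment), and then run the identical splitter-algorithm. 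Every step of the analysis in the proof of \Cref{lem:separate} — the case distinction on whether $V_\ell\setminus V_{\ell-1}$ is fully assigned, the extension of the separator obstruction by a new shortest path $P_{\ell+1}$ to a closest \bad clause, the fall-through to cases 1) or 2) — goes through verbatim, because none of it used that the game was played from the very beginning; it only used properties of the current position and the bookkeeping $X(J)$.

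The only point that needs a word of care is the round count. In \Cref{lem:separate} the bound $(3s+d+1)\sepobsiz$ is obtained by noting that every $3s+d+1$ rounds the size of the separator obstruction grows by at least one, and size $\sepobsiz$ already forces backdoor depth $\ge d$ via \Cref{lem:sep-obs-rbd}. Starting from $J'$ we begin with a separator obstruction of size (at least) $2$ rather than building it from scratch, so the same counting argument yields a bound of at most $(3s+d+1)\sepobsiz$ rounds \emph{from position $J'$} — in fact no worse, since any progress already encoded in $\tau_{J'}$ only helps. Hence the claimed bound holds. The per-move running time is unchanged: each move still amounts to one breadth-first search in the current incidence graph to locate a closest \bad clause and a shortest path to $T_\ell$, costing $\bigoh(\|\satf\|)$.

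I expect the main (minor) obstacle to be purely a matter of careful statement rather than real mathematical content: one must phrase ``continues the game from position $J'$'' precisely in the splitter-algorithm formalism of \Cref{sec:overview}, i.e.\ exhibit the algorithm as a procedure taking $(J,\text{state})$ pairs whose state at $J'$ is seeded appropriately, and verify that the ``more assignments help'' intuition is not violated by any of the three outcomes — in particular that reaching a winning position or a disjointness position earlier than the worst case is fine, and that a proof of backdoor depth $\ge d$ obtained for $\satf$ remains a proof regardless of which position we are currently sitting at (it is a statement about $\satf$, not about the play). Once that is spelled out, the corollary follows with essentially no new work, which is why the text introduces it as an immediate consequence.
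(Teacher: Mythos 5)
Your proposal is correct and matches the paper's own treatment: the paper gives no formal proof of this corollary, dismissing it with the one-line remark that extra assignments can only help the splitter, and your elaboration — re-seed the internal state at $J'$ with $\tuple{\seq{P},\tau_{J'}}$, run the identical algorithm from Lemma~\ref{lem:separate}, and observe that the round-count and per-move-time analysis carry over unchanged — is exactly the intended reading. Your remarks about spelling out the splitter-algorithm formalism and about a backdoor-depth lower bound being a property of $\satf$ rather than of the current position are the right places to be careful, and they hold.
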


}

As described at the end of Section~\ref{sec:overview},
we can now construct in the following lemma
obstruction trees of growing size,
using the previous \ifshort lemma \fi\iflong corollary \fi to separate them
from potential future obstruction trees.

\begin{theoremOptional}{sfind}{lemma}\label{lem:main}
  Let \CpnsBeAClass.
  There is a splitter-algorithm that implements a strategy to reach
  for a game $\game$ and non-negative integers $i$, $d$ with $1 \le i \le d$ within at most
  $(2^{i}-1)(3s+d+1)\sepobsiz$ rounds either:
  \begin{enumerate}[1)]
        \item a winning position, or
        \item a position $J$ and a $\CCC$-obstruction tree $T$ of depth $i$ in $\satf$ such that
              no variable in $\var(J)$ occurs in a clause of $T$, or
        \item a proof that the $\CCC$-backdoor depth of $\satf$ is at least $d$.
  \end{enumerate}
  This algorithm takes at most $\bigoh(\|\satf\|)$ time per move.
\end{theoremOptional}
\begin{proofEnd}
    We will prove this lemma by induction over $i$.
    Our splitter-algorithm will try construct an obstruction tree of depth $i$
    by first using the induction hypothesis to build two obstruction trees $T_1$ and $T_2$ of depth $i-1$
    and then joining them together.
    After the construction of the first tree~$T_1$, we reach a position $J_1$ and 
    by our induction hypothesis no variable in $\var(J_1)$ occurs in a clause of $T_1$.
    This encapsulates the core idea behind our approach, as it
    means that $T_1$ is separated from all potential future obstruction trees $T_2$ that we build from position $J_1$.
    Therefore, we can compute the next tree $T_2$ in $J_1$ and join $T_1$ and $T_2$ together in accordance with \Cref{def:obstructionTree} by a path $P$.
    At last, we use \Cref{cor:separate} to also separate this path from all future obstructions.
    If at any point of this process we reach a winning position or a proof
    that the $\CCC$-backdoor depth of $\satf$ is at least $d$, we can stop.
    Let us now describe this approach in detail.

    For convenience, let $x = (3s+d+1)\sepobsiz$.
    We start our induction with $i=1$.
    If there is no \bad clause in $\satf$, then it is a winning position and we can stop.
    Assume there is exactly one \bad clause $c$ in $\satf$.
    By Lemma~\ref{lem:bc-arity}, if $c$ contains more then $d+s$
    $\alpha$-literals, we have a proof that the $\CCC$-backdoor depth
    of $\satf$ is at least $d$ and we archive case 3) of the lemma.
    On the other hand, if $c$ contains at most $d+s$ $\alpha$-literals,
    the splitter can obtain a winning
    position in $\game$ after at most $d+s\leq (2^{i}-1)x$ rounds by choosing a new variable
    $\alpha$-occurring in $c$ at every round.
    Assume there is more than one \bad clause in $\satf$.
    Thus, we pick \bad clauses $c_1$ and $c_2$ and
    compute a shortest path~$P$ between $c_1$ and $c_2$ in $\satf$.
    By \Cref{def:obstructionTree}, $T = \{c_1\} \cup \{c_2\} \cup \var(P) \cup P$ is a
    $\CCC$-obstruction tree of depth $1$ in~$\satf$.
    We then
    continue the game using \Cref{cor:separate} (for the path $P$) to reach a position
    $J'$ satisfying (1), (2), or (3) after at most $x\leq
    (2^{i}-1)x$ rounds, with each round taking at most $\bigoh(\|\satf\|)$ time. 
    
    We now assume the statement of this lemma to hold for $i-1$ and we show it also holds for~$i$.
    To this end, we start playing the game $\game$ according to the existing splitter-algorithm for $i-1$.
    If we reach (within at most $(2^{i-1}-1) x$ rounds) a winning position or a
    proof that the $\CCC$-backdoor depth of $\satf$ is
    at least $d$ then we are done.
    Assuming this is not the case, we reach a position $J_1$ and a
    $\CCC$-obstruction tree $T_1$ of depth $i-1$ in $\satf$ such that 
    no variable $v \in \var(J_1)$ occurs in a clause of $T_1$.

    We continue playing the game at position $J_1$ according to the existing splitter-algorithm for $\gameJOne$ and $i-1$.
    The $\CCC$-backdoor depth of $\satf$ is larger or equal to the $\CCC$-backdoor depth of $J_1$.
    Thus again (after at most $(2^{i-1}-1)x$ rounds) we either are done
    (because we reach a winning position or can conclude that the
    $\CCC$-backdoor depth of $J_1$ is at least $d$)
    or we reach a position $J_2$ and a $\CCC$-obstruction tree $T_2$ of depth $i-1$ in $J_1$ such that 
    no variable $v \in \var(J_2)$ occurs in a clause of $T_2$.

\begin{figure}
    \begin{center}
    \includegraphics{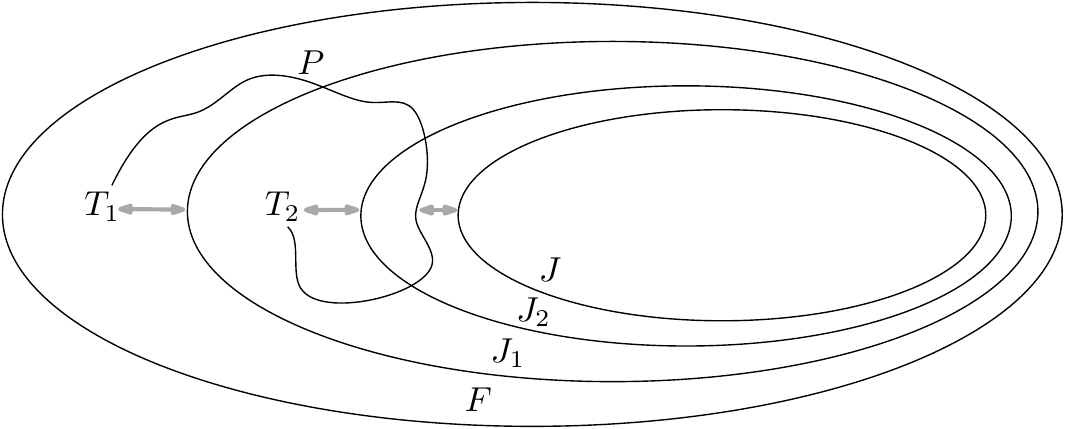}
    \end{center}
    \caption{Overview of the construction in \Cref{lem:main}.
    First, $T_1$ is chosen in $F$, yielding $J_1$.
    Then, $T_2$ is chosen in $J_1$, yielding $J_2$.
    In the end the connecting path $P$ is chosen yielding $J$.
    A gray doublesided arrow between a position $\widehat J$ and structure $\widehat T$ symbolizes
    that no variable $v \in \var(\widehat J)$ occurs in a clause of $\widehat T$.}
    \label{fig:main}
\end{figure}

    We pick two clauses $c_1 \in T_1$ and $c_2 \in T_2$ that are
    \bad in $\satf$ and compute a shortest path~$P$ between $c_1$ and $c_2$ in $\satf$.
    We now argue that $T = T_1 \cup T_2 \cup \var(P) \cup P$ is a
    $\CCC$-obstruction tree of depth $i$ in~$\satf$.
    Let $\beta = \tau_{J_1}$ be the assignment that assigns all the variables
    the splitter chose until reaching position $J_1$ to the value given by the connector.
    Note that $J_1$ is a connected component of $\satf[\beta]$.

    Since all variables and clauses belonging to $T_2$ induce a connected subgraph of $J_1$,
    $T_2$ is a $\CCC$-obstruction tree of depth $i-1$ not only in $J_1$, but also in $\satf[\beta]$.
    Let $v \in \var(\satf[\beta])$.
    We show that $v$ does not occur both in some clause of $T_1$ and of $T_2$.
    To this end, assume $v$ is contained in a clause of $T_2$.
    Since all clauses of $T_2$ are in $J_1$ and $J_1$ is a connected component of
    $\satf[\beta]$, we further have $v \in \var(J_1)$.
    On the other hand (as discussed earlier), no variable $v \in \var(J_1)$ is contained in a clause of $T_1$.
    By \Cref{def:obstructionTree}, $T = T_1 \cup T_2 \cup \var(P) \cup P$ is a
    $\CCC$-obstruction tree of depth $i$ in~$\satf$.

    We use \Cref{cor:separate} to continue playing the game at position $J_2$.
    Again, if we reach a winning position or a proof that the
    $\CCC$-backdoor depth of $\satf$ is at least $d$ we are done.
    So we focus on the third case that we reach (within at most $x$ rounds) 
    a position $J$ such that no variable $v \in \var(J)$ is contained in a clause of $P$.
    We know already that no variable $v \in \var(J_1)$ is contained in a clause of $T_1$
    and no variable $v \in \var(J_2)$ is contained in a clause of
    $T_2$.
    Since $\var(J) \subseteq \var(J_2) \subseteq \var(J_1)$,
    and $T = T_1 \cup T_2 \cup \var(P) \cup P$,
    we can conclude that
    no variable $v \in \var(J)$ is contained in a clause of $T$.

    In total, we played for $(2^{i-1}-1)x + (2^{i-1}-1)x + x = (2^{i}-1)x$ rounds.
    The splitter-algorithm in \Cref{cor:separate} takes at most $\bigoh(\|\satf\|)$ time per move.
    The same holds for the splitter-algorithm for $i-1$ that we use as a subroutine.
    Thus, the whole algorithm takes at most $\bigoh(\|\satf\|)$ time per move.
\end{proofEnd}

The main results now follow easily by combining
\Cref{lem:ConvertStrategyToBackdoor,lem:obstructToBackdoor,lem:main,lem:runtime}.
\begin{theoremOptional}{sfind}{theorem}\label{thm:main}
  Let \CpnsBeAClass. We can, for a
  given $\satf \in \CNF$ and a non-negative integer $d$, in time at most
  $2^{2^{2^{\bigoh(d)}}} \|\satf\|$ either
  \begin{enumerate}[1)]
  \item  compute a component $\CCC$-backdoor tree
    of $\satf$ of depth at most
    $2^{2^{\bigoh(d)}}$, or
  \item conclude that the $\CCC$-backdoor depth of
    $\satf$ is larger than $d$.
  \end{enumerate}
\end{theoremOptional}
\begin{proofEnd}
  We apply \Cref{lem:main} with its parameters $i$ and $d$ both set to $d+1$.
  An obstruction tree of depth $d$ is, according to \Cref{lem:obstructToBackdoor}, a proof that the
  backdoor depth is at least $d+1$, thus
  the output of the splitter-algorithm in \Cref{lem:main} after
  $2^{2^{\bigoh(d)}}$ rounds reduces to either a winning
  position, or a proof that the $\CCC$-backdoor depth of $\satf$ is
  at least $d+1$.  The algorithm takes at most $\bigoh(\|\satf\|)$ time per
  move.  The statement then follows from
  \Cref{lem:ConvertStrategyToBackdoor}.
\end{proofEnd}

\begin{theoremOptional}{sfind}{corollary}\label{cor:horn-dhorn-krom}
  Let $\CCC\in \{\Horn,\dHorn,\Krom\}$. The \SAT{} problem can be solved
  in linear time for any class of formulas of bounded $\CCC$\hy
  backdoor depth.
\end{theoremOptional}
\begin{proofEnd}
  Let $\satf \in \CNF$.
  We use
  \Cref{thm:main} to compute a component $\CCC$-backdoor tree for
  $\satf$ of
  depth at most $2^{2^{\bigoh(d)}}$ and then use \Cref{lem:runtime} to
  decide the satisfiability of
  $\satf$ in time $2^{2^{2^{\bigoh(d)}}}\size{\satf}$.
\end{proofEnd}

\textEnd[category=scomp]{
\section{Comparison with other Approaches}\label{sec:comp}

In this section, we compare the generality of backdoor depth with
other parameters that admit a fixed-parameter tractable solution of
the problem, as we have listed in the introduction. Our comparison is
based on the concept of domination~\cite{SamerSzeider10a}. For two
integer-valued parameters $p$ and $q$, we say that $p$
\emph{dominates} $q$ if every class of instances for which $q$ is
bounded, also $p$ is bounded; $p$ \emph{strictly dominates} $q$ if $p$
dominates $q$ but $q$ does not dominate $p$.  If $p$ and $q$ dominate
each other, they are \emph{domination equivalent}. If neither of them
dominates the other, they are \emph{domination orthogonal}.

\newcommand{\Chain}{\mathsf{Q}}
\newcommand{\ChaiN}{\mathsf{Q}'}

We first define two sequences of formulas that we will use for several
separation results below.  For any integer $d\geq 1$, the CNF
formula $\Chain_d$, consists of $n=3 \cdot 2^d-2$ clauses
$c_1,\dots,c_n$ over the variables $x_0,\dots,x_n$ and
$y_1,\dots,y_n$, where $c_i=\{x_{i-1},\neg y_i,x_i\}$.  The formula
$\ChaiN_d$ is defined similarly, except that the $y_i$ variables are
omitted, and hence $c_i=\{x_{i-1},x_i\}$. By construction, we have
 $\Chain_d \in \dHorn \setminus (\Horn \cup \Krom)$ and $\ChaiN_d \in \Krom \cap \dHorn \setminus \Horn$.

\begin{lemma}\label{lem:chain}
  $\srbd_\Horn(\Chain_d)=\srbd_\Krom(\Chain_d)=\srbd_\Horn(\ChaiN_d)=d$.
\end{lemma}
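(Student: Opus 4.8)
The plan is to collapse all three quantities to one one-parameter quantity, and then pin that quantity down by a clean halving recurrence.

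First I would observe that, for our purposes, $\Chain_d$ and $\ChaiN_d$ are essentially \emph{path formulas}: the incidence graph is a path $x_0-c_1-x_1-c_2-\dots-c_n-x_n$, with each $y_i$ hanging off $c_i$ as a pendant variable. In all three cases the $\CCC$-bad clauses are exactly $c_1,\dots,c_n$ ($c_i$ has two positive literals, so $c_i\notin\Horn$; $c_i$ has three literals, so $c_i\notin\Krom$). The $y_i$ are useless to the splitter: assigning a $y_i$ either satisfies $c_i$ or, for $\Horn$, leaves $c_i$ with the same two positive literals, and for $\Krom$ turns $c_i$ into the good $2$-clause $\{x_{i-1},x_i\}$ that still joins $x_{i-1}$ and $x_i$; so picking a $y_i$ neither disconnects the chain nor removes a bad clause that a backbone variable would not. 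Likewise the good unit / $2$-clauses produced by backbone assignments are pendants and can be ignored. Writing $g(m)$ for the $\CCC$-backdoor depth of a chain of $m$ consecutive bad clauses, and using monotonicity of $g$ (a shorter chain occurs as a connected component of some $F[\tau]$ of a longer one), I get $\srbd_\Horn(\Chain_d)=\srbd_\Krom(\Chain_d)=\srbd_\Horn(\ChaiN_d)=g(n)$.

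Next I would determine $g$, claiming $g(0)=0$ and $g(m)=1+g(\lceil (m-2)/2\rceil)$ for $m\ge 1$. For the upper bound the splitter picks the backbone variable $x_j$ with $j\approx m/2$: for either connector response the two clauses $c_j,c_{j+1}$ incident to $x_j$ are removed (if satisfied) or become good pendant clauses, and $x_j$ itself disappears, so the instance splits into two chains carrying $j-1$ and $m-j-1\le\lceil (m-2)/2\rceil$ bad clauses, on which the splitter recurses; this builds a component $\CCC$-backdoor tree of depth $1+g(\lceil (m-2)/2\rceil)$ (equivalently a splitter strategy, via \Cref{lem:strategyEqualsBackdoor}). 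For the matching lower bound I would run the connector: against \emph{any} splitter move on a chain with $m$ bad clauses, the connector can keep a connected sub-chain with at least $\lceil (m-2)/2\rceil$ bad clauses — an interior backbone $x_j$ is the worst case, where the two sides carry $j-1$ and $m-j-1$ bad clauses summing to $m-2$ and the connector keeps the larger side, while every other move (a pendant $y_i$, or an endpoint $x_0/x_n$) leaves at least $m-1$ bad clauses in one component. Alternatively, and perhaps more transparently, I would route the lower bound through obstruction trees (\Cref{def:obstructionTree}, \Cref{lem:obstructToBackdoor}): an induction shows that a long enough block of consecutive bad clauses contains a $\CCC$-obstruction tree of depth $k$, formed from two depth-$(k-1)$ obstruction trees placed on variable-disjoint sub-blocks (so one may take $\beta=\emptyset$ in the definition) joined by the sub-path of the incidence graph between them.

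Finally I would solve the recurrence $g(m)=1+g(\lceil (m-2)/2\rceil)$ and substitute $m=n$ from the definition of $\Chain_d$, whose number of clauses is calibrated precisely so that $g(n)=d$; this gives all three equalities. I expect the main obstacle to be the lower bound: the case analysis showing that from \emph{every} splitter move the connector retains a connected chunk of $\lceil (m-2)/2\rceil$ bad clauses (backbone-interior, backbone-endpoint, pendant), and — if the lower bound goes via obstruction trees — checking that the two depth-$(k-1)$ sub-trees can be placed on variable-disjoint sub-blocks far enough apart to accommodate the joining path. The exact arithmetic matching the halving recurrence against $n$ is then routine bookkeeping, and the upper bound is essentially mechanical once the middle-split move is identified.
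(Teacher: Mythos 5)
Your approach is structurally the same as the paper's. The paper also proceeds by induction, using precisely the observation that after any single splitter move the connector can retain a connected sub-chain of at least $\lceil (n-2)/2 \rceil$ bad clauses (which it recognizes as a copy of $\Chain_{d-1}$); you phrase the same fact as a recurrence $g(m)=1+g(\lceil(m-2)/2\rceil)$, with the middle split giving the matching upper bound. If anything you are cleaner than the paper, whose written proof only argues the $\geq d$ direction explicitly and takes the upper bound for granted, and your reduction of all three quantities to a single function $g$ of the number of bad clauses is a tidy unification. Your remark about routing the lower bound through obstruction trees is a valid alternative but not what the paper does.

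The genuine problem is the step you wave off as ``routine bookkeeping''. Solving $g(0)=0$, $g(m)=1+g(\lceil(m-2)/2\rceil)$ gives $g(m)=\lfloor\log_2(m+1)\rfloor$, and therefore $g(3\cdot 2^d-2)=d+1$, not $d$ (for instance $g(4)=1+g(1)=2$). A direct check confirms this at $d=1$: $\Chain_1$ has $n=3\cdot2^1-2=4$ clauses $c_1,\dots,c_4$, all with two positive literals; any single variable assignment leaves at least two of them, so $\srbd_\Horn(\Chain_1)=2\neq 1$. The paper's own induction basis, which reads ``$\Chain_1=\{c_0\}$'' (a one-clause formula), is inconsistent with $n=4$, which strongly suggests the intended clause count is $n=3\cdot2^{d-1}-2$; with that correction your recurrence does give $g(n)=d$. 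You should carry out the arithmetic rather than assert it: your recurrence analysis is exactly the right tool to catch this off-by-one, and by deferring it you would have signed off on an equality that, under the definition as written, is false.
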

\begin{proof}
  We focus on showing that
  $\srbd_\CCC(\Chain_d)=\srbd_\CCC(\Chain_d)=d$ for
  $\CCC\in \{\Horn,\Krom\}$, the proof for $\srbd_\Horn(\ChaiN_d)=d$
  is similar.  We proceed by induction on $d$.  Since
  $\Chain_1=\{ c_0 \}$, the induction basis holds. Now assume $d>1$.
  Since $\Chain_d$ is connected, the root~$r$ of any component
  $\CCC$\hy backdoor tree for it is a variable node. Assume $r$
  branches on variable $x_i$. By the symmetry of $\Chain_n$, we can
  assume, w.l.o.g., that $i\geq n/2$.  We observe that
  $\Chain_d[x_i=1]=\Chain_d\setminus \{c_{i},c_{i+1}\}$, hence
  $\Chain_{d-1} \subseteq \Chain_d[x_i=1]$, where $\Chain_{d-1}$ has
  $3 \cdot 2^{d-1}-2$ clauses.  If $r$ branches on $y_i$, then
  $\Chain_d[x_i]\subseteq \Chain_d[y_i=0]$, and so
  $\Chain_{d-1} \subseteq \Chain_d[y_i=0]$ as well.  Consequently, in
  any case, at least one child of $r$ is labeled with a formula that
  contains $\Chain_{d-1}$. By induction hypothesis
  $\srbd_\CCC(\Chain_{d-1})=d-1$. Thus $\srbd_\CCC(\Chain_d)=d$ as
  claimed.
\end{proof}

\subsection{Backdoor size into \Horn, \dHorn, and \Krom}

Let $\CCC\subseteq \CNF$. The \emph{$\CCC$\hy backdoor size} of a
CNF formula $F$, denoted $\sbs_\CCC(F)$ , is the size of a smallest
(strong) $\CCC$\hy backdoor of $\satf$.  Nishimura et
al.~\cite{NishimuraRagdeSzeider04-informal} have shown that \SAT is
fixed-parameter tractable parameterized by $\CCC$\hy backdoor size
for $\CCC\in \{\Horn,\dHorn,\Krom\}$.
A class $\CCC$ of CNF formulas is
\emph{nontrivial} if $\CCC\neq \CNF$.

\begin{proposition}\label{pro:srbd-sbs}
  For every nontrivial class $\CCC$ of CNF formulas, $\srbd_\CCC$
  strictly dominates $\sbs_\CCC$.
\end{proposition}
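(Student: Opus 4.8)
The plan is to establish two things: that $\srbd_\CCC$ dominates $\sbs_\CCC$ (the easy direction), and that $\sbs_\CCC$ does not dominate $\srbd_\CCC$ for any nontrivial $\CCC$ (the witnessing-family direction). For domination, recall that any (strong) $\CCC$-backdoor $B$ of $F$ with $\Card{B}=k$ yields a $\CCC$-backdoor tree of depth at most $k$: simply branch on the variables of $B$ one after another, ignoring component splits; every leaf corresponds to a full assignment of $B$ and hence a formula in $\CCC$. Thus $\srbd_\CCC(F)\leq \sbs_\CCC(F)$ always, so if $\sbs_\CCC$ is bounded on a class then so is $\srbd_\CCC$. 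This shows $\srbd_\CCC$ dominates $\sbs_\CCC$ for every class $\CCC$.

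For strictness, I would exhibit, for a nontrivial $\CCC$, a family of formulas of bounded $\CCC$-backdoor depth but unbounded $\CCC$-backdoor size. The key observation is that $\CCC$ being nontrivial means $\CNF\setminus\CCC\neq\emptyset$, so there is at least one CNF formula $G$ not in $\CCC$; by taking a clause of $G$ that witnesses non-membership we may assume $G$ is a single ``bad'' clause, or more robustly we fix any finite $G\notin\CCC$ and let $x_G$ range over its variables. The construction: take $n$ disjoint variable-renamed copies $G_1,\dots,G_n$ of $G$, so $F_n=G_1\cup\cdots\cup G_n$ has $n$ connected components (after renaming variables to be disjoint across copies), each equal to a copy of $G\notin\CCC$. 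Then $F_n\notin\CCC$ is not the point; what matters is: (i) $\srbd_\CCC(F_n)=\srbd_\CCC(G)$, a constant independent of $n$, because the recursive definition~\eqref{eq:defsrbd} takes the \emph{max} over connected components, each of which is just a copy of $G$; and (ii) $\sbs_\CCC(F_n)\geq n\cdot\sbs_\CCC(G)\geq n$, because any strong $\CCC$-backdoor of $F_n$ must, restricted to each component $G_i$, be a strong $\CCC$-backdoor of $G_i$ (a partial assignment to the backdoor variables missing some component leaves that whole component untouched, so that component must already be a backdoor of $G_i$ on its own), and each such restriction is nonempty since $G_i\notin\CCC$ and $\CCC$ is closed under removal/presence appropriately — at minimum it has size at least $1$.

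The main obstacle I anticipate is the subtle point in (ii): one must argue that a backdoor of the disjoint union decomposes into backdoors of the components. This uses that components share no variables, so for any assignment $\tau$ to $B\cap\var(G_i)$ we can extend it arbitrarily on the rest of $B$, and $F_n[\tau\cup\sigma]\in\CCC$ forces $G_i[\tau]$ to be a "sub-instance" of something in $\CCC$; if $\CCC$ is closed under taking subsets of clause-sets (or at least we pick $G$ minimal), then $G_i[\tau]\in\CCC$, so $B\cap\var(G_i)$ is a strong $\CCC$-backdoor of $G_i$, which is nonempty since $G_i\notin\CCC$. For full generality without closure assumptions, I would instead pick $G$ to be a single $\CCC$-bad clause when $\CCC$ is one of the syntactic Schaefer classes, or more generally observe that $\CCC\neq\CNF$ gives \emph{some} formula outside $\CCC$, and that a single variable from each copy being forced into the backdoor already yields $\sbs_\CCC(F_n)\geq n$; combined with the constant depth bound this gives $\srbd_\CCC(F_n)=O(1)$ while $\sbs_\CCC(F_n)\to\infty$, so $\sbs_\CCC$ does not dominate $\srbd_\CCC$, completing the proof that the domination is strict.
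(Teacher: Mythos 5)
Your proof takes essentially the same route as the paper: the easy direction is the observation that a size-$k$ backdoor yields a backdoor tree of depth at most $k$, and strictness is shown by taking $n$ variable-disjoint copies of some fixed $F\notin\CCC$, using a component node at the root (equivalently the $\max$-over-components case of the recursive definition) to keep $\srbd_\CCC$ constant while $\sbs_\CCC$ grows linearly. Your extra care about whether a backdoor of a disjoint union must restrict to a nonempty backdoor of each copy is a reasonable caveat (the paper tacitly relies on $\CCC$ being closed under components and partial assignments, as it assumes throughout), but it does not change the argument.
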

\begin{proof}
  Clearly, $\srbd_\CCC$ dominates $\sbs_\CCC$, since if we have a
  $\CCC$\hy backdoor $B$ of some~$\satf\in \CNF$, we can build a backdoor
  tree with at most $2^\Card{B}$ variable nodes and depth
  $\leq \Card{B}$.

  To show that the domination is strict, let $F\in \CNF\setminus \CCC$
  (it exists, since $\CCC$ is nontrivial) and let $F_1,\dots,F_n$ be
  variable-disjoint copies of $F$.  Since $F\notin \CCC$, we have
  $\sbs_\CCC(F)\geq \srbd_\CCC(F) >0$. Let $\sbs_\CCC(F)=s$ and
  $\srbd_\CCC(F)=d$.  Now $\sbs_\CCC(F_n)=ns$, so we can choose $n$ to
  make $\sbs_\CCC(F_n)$ arbitrarily large.  However,
  $\srbd_\CCC(F_n)=d$ remains bounded, since in a backdoor tree we can
  use a component node at the root which branches into (at least) $n$
  components, each of $\CCC$\hy backdoor depth at most~$d$.
\end{proof}

\subsection{Number of leaves of backdoor trees into  \Horn, \dHorn, and \Krom}
Backdoor trees, introduced by Samer and Szeider
\cite{SamerSzeider08b,OrdyniakSchidlerSzeider20}, are a special case
of component backdoor trees as defined in Section~\ref{sec:srbd}.
A \emph{$\CCC$\hy backdoor tree} for $F\in\CNF$ is a component
$\CCC$\hy backdoor tree for $F$ without component nodes.
For a base class $\CCC$ and $F\in \CNF$, let $\bdt_\CCC(\satf)$ denote
the smallest number of leaves of any $\CCC$\hy backdoor tree (recall
the definition in \Cref{sec:srbd}).  \SAT is fixed-parameter tractable
parameterized by $\bdt_\CCC(F)$~\cite{OrdyniakSchidlerSzeider20}.  Since
$\sbs_\CCC(\satf)+1 \leq \bdt_\CCC(\satf) \leq 2^{\sbs_\CCC(\satf)}$,
$\bdt_\CCC$ and $\sbs_\CCC$ are domination equivalent
\cite{SamerSzeider08b}. Hence we have the following corollary to Proposition~\ref{pro:srbd-sbs}.

\begin{corollary}
  For every nontrivial class $\CCC$ of CNF formulas, $\srbd_\CCC$
  strictly dominates $\bdt_\CCC$.
\end{corollary}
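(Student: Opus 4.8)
The plan is to derive this corollary from Proposition~\ref{pro:srbd-sbs} together with the fact, recalled immediately above, that $\bdt_\CCC$ and $\sbs_\CCC$ are domination equivalent (via $\sbs_\CCC(\satf)+1 \le \bdt_\CCC(\satf) \le 2^{\sbs_\CCC(\satf)}$). Since domination is transitive, it suffices to check two things: that $\srbd_\CCC$ dominates $\bdt_\CCC$, and that $\bdt_\CCC$ does not dominate $\srbd_\CCC$.

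For the first direction I would observe that a $\CCC$\hy backdoor tree is in particular a component $\CCC$\hy backdoor tree, and a rooted tree with $\bdt_\CCC(\satf)$ leaves has depth (the maximum number of variable nodes on a root-to-leaf path) at most $\bdt_\CCC(\satf)-1$. Hence $\srbd_\CCC(\satf)\le \bdt_\CCC(\satf)-1$ holds pointwise, so in particular on any class of bounded $\bdt_\CCC$ the parameter $\srbd_\CCC$ is bounded. One could equally well chain the domination of $\sbs_\CCC$ by $\srbd_\CCC$ from Proposition~\ref{pro:srbd-sbs} with the inequality $\sbs_\CCC(\satf) < \bdt_\CCC(\satf)$.

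For strictness I would reuse the separating construction from the proof of Proposition~\ref{pro:srbd-sbs}: fix some $F\in\CNF\setminus\CCC$ and, for $n\ge 1$, let $F_n$ be the variable-disjoint union of $n$ copies of $F$. A component node at the root splits $F_n$ into $n$ components each of $\CCC$\hy backdoor depth $\srbd_\CCC(F)$, so $\srbd_\CCC(F_n)=\srbd_\CCC(F)$ remains bounded, while $\bdt_\CCC(F_n)\ge \sbs_\CCC(F_n)+1 = n\cdot\sbs_\CCC(F)+1$ grows without bound (using $\sbs_\CCC(F)\ge 1$ since $F\notin\CCC$). Thus $\bdt_\CCC$ does not dominate $\srbd_\CCC$, and together with the first part this yields that $\srbd_\CCC$ strictly dominates $\bdt_\CCC$. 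I do not expect any genuine obstacle here; the only point needing a moment's care is that the very family witnessing the strictness of Proposition~\ref{pro:srbd-sbs} still does the job for $\bdt_\CCC$, which is immediate from $\bdt_\CCC \ge \sbs_\CCC+1$.
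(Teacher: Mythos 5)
Your proposal is correct and takes essentially the same route as the paper: the paper derives the corollary immediately from Proposition~\ref{pro:srbd-sbs} together with the domination equivalence of $\bdt_\CCC$ and $\sbs_\CCC$, which is exactly your chain of reasoning. Your extra work (the pointwise bound $\srbd_\CCC(\satf)\le \bdt_\CCC(\satf)-1$ and re-verifying the padding family) is a more explicit unfolding of the same transitivity argument but not a different approach.
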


\subsection{Backdoor depth into \EMPTY}

\MSV considered the base class $\EMPTY=\{\emptyset, \{\emptyset\}\}$
and showed that \SAT is fixed-parameter tractable parameterized by
$\srbd_\EMPTY$.
\begin{proposition}\label{pro:dom-empty}
  $\srbd_\CCC$ strictly dominates $\srbd_\EMPTY$ for
  $\CCC\in \{\Horn,\dHorn,\Krom\}$.
\end{proposition}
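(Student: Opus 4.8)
$\textbf{Proof proposal for Proposition~\ref{pro:dom-empty}.}$

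The plan is to establish domination and strictness separately. For domination, I would argue that $\srbd_\CCC$ dominates $\srbd_\EMPTY$ for any base class $\CCC \supseteq \EMPTY$ (which holds for $\Horn$, $\dHorn$, and $\Krom$, since each contains the empty clause set and the singleton containing the empty clause). Indeed, any component $\EMPTY$\hy backdoor tree of $\satf$ is in particular a component $\CCC$\hy backdoor tree of $\satf$: every leaf formula lies in $\EMPTY \subseteq \CCC$, and the variable/component node structure is unchanged. Hence $\srbd_\CCC(\satf) \le \srbd_\EMPTY(\satf)$ for every formula, which gives domination.

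For strictness, I need a family of formulas on which $\srbd_\CCC$ stays bounded while $\srbd_\EMPTY$ grows without bound. The natural candidates are the formulas $\Chain_d$ and $\ChaiN_d$ already introduced in this section: we have $\ChaiN_d \in \Krom \cap \dHorn$ and $\Chain_d \in \dHorn$, so $\srbd_\Krom(\ChaiN_d) = \srbd_\dHorn(\ChaiN_d) = 0$ and $\srbd_\dHorn(\Chain_d) = 0$. Meanwhile, I claim $\srbd_\EMPTY(\Chain_d)$ and $\srbd_\EMPTY(\ChaiN_d)$ are unbounded in $d$ — in fact growing like $\Theta(d)$ or even like $\Omega(\log n)$ in the number of clauses $n = 3\cdot 2^d - 2$. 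The argument mirrors the proof of Lemma~\ref{lem:chain}: these formulas are connected, so the root of any component $\EMPTY$\hy backdoor tree is a variable node; branching on any single variable $x_i$ (or $y_i$) deletes only a bounded number of clauses from the chain, so one child still contains a subchain isomorphic to $\Chain_{d-1}$ (resp.\ $\ChaiN_{d-1}$), and the component structure cannot help because a path remains connected. By induction, $\srbd_\EMPTY(\Chain_d) \ge d$. This handles $\CCC = \dHorn$ and $\CCC = \Krom$; for $\CCC = \Horn$, I would instead use the dual construction $c_i = \{\neg x_{i-1}, x_i\}$ (or, more simply, the formulas obtained by negating all literals in $\ChaiN_d$), which lies in $\Horn$ while the same chain-surgery argument shows $\srbd_\EMPTY$ is unbounded. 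Thus $\srbd_\EMPTY$ does not dominate $\srbd_\CCC$, and the domination is strict.

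The main obstacle is the lower bound $\srbd_\EMPTY(\Chain_d) \ge d$: I must be careful that $\EMPTY$\hy backdoor trees may use component nodes, and one has to verify that splitting off components never buys a shortcut. This is true because all the $\Chain$-type formulas are connected and remain ``chain-like'' under partial assignments — deleting a constant number of clauses from a path of length $n$ leaves a path of length $\ge n - O(1)$, which still contains a scaled-down copy of the formula. Since this is the same structural fact already exploited in Lemma~\ref{lem:chain}, I expect to be able to invoke or lightly adapt that lemma rather than redo the induction from scratch; indeed $\srbd_\EMPTY(\Chain_d) \ge \srbd_\Horn(\Chain_d) = d$ by domination, so the lower bound is actually immediate. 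The remaining routine check is just that $\ChaiN_d$ (and the Horn variant) sit in the right Schaefer classes while $\Chain_{3\cdot 2^d-2}$ witnesses unbounded $\EMPTY$-depth, which follows by the same token.
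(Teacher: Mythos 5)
Your proposal is correct and matches the paper's proof in essence: domination follows from $\EMPTY\subseteq\CCC$, strictness uses $\Chain_d$ and $\ChaiN_d$ with $\srbd_\EMPTY \ge \srbd_\Horn = d$ (by Lemma~\ref{lem:chain} and $\EMPTY\subseteq\Horn$) while $\srbd_\dHorn(\Chain_d)=\srbd_\Krom(\ChaiN_d)=0$, and the $\Horn$ case is handled by symmetry via literal flipping. You briefly entertain redoing the induction for $\srbd_\EMPTY$ before noting it is immediate from domination — the paper goes directly to that shortcut, which is the only (minor) difference.
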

\begin{proof}
  It suffices to consider $\CCC\in \{\dHorn,\Krom\}$, since the cases
  $\dHorn$ and $\Horn$ are symmetric.  Since
  $\EMPTY\subseteq \dHorn\cap\Krom$, $\srbd_\CCC$ dominates
  $\srbd_\EMPTY$. To show that the domination is strict, consider the
  CNF formulas $\Chain_d$ and $\Chain_d'$ from above, with
  $\srbd_\Horn(\Chain_d)=\srbd_\Horn(\Chain_d')=d$ by
  Proposition~\ref{lem:chain}.  Since $\EMPTY\subseteq \Horn$, 
  $\srbd_\EMPTY(\Chain_d)=\srbd_\EMPTY(\ChaiN_d)\geq d$.  However,
  $\srbd_\dHorn(\Chain_d)=\srbd_\Krom(\ChaiN_d)=0$ since
  $\Chain_d\in \dHorn$ and $\ChaiN_d\in \Krom$.
  \end{proof}

\subsection{Backdoor treewidth into  \Horn, \dHorn, and \Krom}

\emph{Backdoor treewidth} is another general parameter for \SAT (and
CSP) defined with respect to a base
class~$\CCC$~\cite{GanianRamanujanSzeider17,GanianRamanujanSzeider17b}.
\newcommand{\torso}{\TTT}  
  Let $\satf$ be a CSP instance and $X\subseteq V(\satf)$ a subset of
  its variables.  The \emph{torso graph} of $\satf$ with respect to $X$, denoted
  $\torso_\satf(X)$, has as vertices the variables in $X$ and contains
  an edge $\{x,y\}$ if and only if  $x$ and $y$ appear
  together in the scope of a constraint of $\satf$ or $x$ and $y$ are in the
  same connected component of the graph obtained from the incidence
  graph of $\satf$ after deleting all the variables in~$X$.
  Let $G=(V,E)$ be a graph. A tree decomposition of $G$ is a pair
  $(T,\XXX)$, $\XXX=\{X_{t}\}_{t\in V(T)}$, where $T$ is a tree and
  $\XXX$ is a collection of subsets of $V$ such that: (i)~for each
  edge $\{u,v \}\in E$ there exists a node $t$ of $T$ such that
  $\{u,v\} \subseteq X_{t}$, and (ii)~for each $v\in V$, the set
  $\{t\mid v\in X_{t}\}$ induces in $T$ a nonempty connected
  subtree. The width of $(T,\XXX)$ is equal to
  $\max\{|X_t|-1\mid {t\in V(T)}\}$ and the \emph{treewidth} of $G$ is
  the minimum width over all tree decompositions of $G$.
  Let $\satf \in \CNF$ and $X$ a $\CCC$\hy backdoor set of
  $\satf$. The \emph{torso treewidth} of $X$ is the treewidth of the
  torso graph $\torso_F(X)$, and the \emph{$\CCC$\hy backdoor
    treewidth} of $\satf$, denoted $\bdtw_\CCC(\satf)$, is the
  smallest torso treewidth over all $\CCC$\hy backdoor sets of
  $\satf$.  It is known that $\bdtw_\CCC$ dominates the treewidth of
  the formula's primal graph and
  $\sbs_\CCC$~\cite{GanianRamanujanSzeider17,GanianRamanujanSzeider17b}.

  \begin{proposition}\label{pro:orthogonal}
    For $\CCC\in \{\Horn,\dHorn,\Krom\}$, the parameters $\srbd_\CCC$
    and $\bdtw_\CCC$ are domination orthogonal.
\end{proposition}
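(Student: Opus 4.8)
The statement asks for domination orthogonality, so the plan is to exhibit two families of CNF formulas: one on which $\srbd_\CCC$ is unbounded but $\bdtw_\CCC$ stays bounded, and a second one on which these roles are swapped.

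For the first family I would reuse the chain formulas of \cref{lem:chain}, taking $\ChaiN_d$ when $\CCC=\Horn$, the variable-polarity flip of $\ChaiN_d$ when $\CCC=\dHorn$, and $\Chain_d$ when $\CCC=\Krom$; by \cref{lem:chain} the $\CCC$-backdoor depth of these formulas equals $d$, hence is unbounded. To bound $\bdtw_\CCC$ it is enough to exhibit a single $\CCC$-backdoor with tree-like torso: take $X=\{x_0,\dots,x_n\}$, the set of all $x$-variables. One checks that $X$ is a $\CCC$-backdoor (fixing an $x_i$ appearing in a clause either satisfies the clause or deletes its unique problematic literal), and that deleting $X$ from the incidence graph leaves only the (isolated) $y_i$-variables together with the clause vertices, none of which links two distinct $x_i$ through a common component; hence the torso of $X$ is precisely the path $x_0-x_1-\dots-x_n$, of treewidth $1$. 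So $\bdtw_\CCC$ does not dominate $\srbd_\CCC$.

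For the second family I would construct a ``grid with control variables'' $F_k$. Its variables are a $k\times k$ grid $\{v_{ij}\}$, a row control $r_i$ for each row, a column control $c_j$ for each column, and one master variable $z$; its clauses are $A_{ij}=\{z,r_i,v_{ij},\neg v_{i,j+1}\}$ for each horizontal grid edge and $B_{ij}=\{\neg z,c_j,v_{ij},\neg v_{i+1,j}\}$ for each vertical grid edge. A depth-$2$ component backdoor tree shows $\srbd_\CCC(F_k)\le 2$ for $\CCC\in\{\Horn,\Krom\}$: branch on $z$; in the branch $z=0$ all clauses $B_{ij}$ vanish and the remainder splits into $k$ row-components, each made $\CCC$-clean by branching on its control $r_i$; symmetrically, in the branch $z=1$ all $A_{ij}$ vanish and the $k$ column-components are finished by branching on the $c_j$. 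Since every clause contains $z$ the formula is connected, and no single variable is a $\CCC$-backdoor, so $\srbd_\CCC(F_k)=2$; the case $\CCC=\dHorn$ is handled by the polarity flip $F_k^{-}$ together with the symmetry between $\Horn$ and $\dHorn$.

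The crux is then to prove $\bdtw_\CCC(F_k)=\Omega(k)$, that is, that every $\CCC$-backdoor $X$ of $F_k$ induces a torso of treewidth $\Omega(k)$. The case $z\notin X$ is easy for $\CCC=\Horn$: a $\Horn$-backdoor must contain all but one positive literal of each clause, so $A_{ij}$ forces two of $z,r_i,v_{ij}$ into $X$, and with $z\notin X$ this puts $r_i$ and $v_{ij}$ into $X$ for all $i$ and all $j\le k-1$; as the torso contains the primal graph induced on $X$, and these $v_{ij}$ induce a $k\times(k-1)$ grid, the torso has treewidth $\ge k-1$. The hard case is $z\in X$: there each row either contributes $r_i$ or almost its whole row of grid variables to $X$ (and dually for columns), and I would then argue that the control variables---each occurring in linearly many clauses---keep the incidence graph minus $X$ so well connected that one of its components is adjacent to $\Omega(k)$ elements of $X$, forcing an $\Omega(k)$-clique in the torso. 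The $\Krom$ case (and, by symmetry, $\dHorn$) is similar, but a $\Krom$-backdoor is only forced to contain $|c|-2$ variables per clause, so extracting a large clique or grid minor from the torso requires a more careful counting; this last analysis is the technically hardest part. With this lower bound, $F_k$ shows that $\srbd_\CCC$ does not dominate $\bdtw_\CCC$, and together with the first family we obtain domination orthogonality.
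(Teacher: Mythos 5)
Your proof has the right shape---two families of formulas, one separating each direction---and the easy direction matches the paper almost exactly. Reusing $\Chain_d$ (resp.\ $\ChaiN_d$) with the backdoor $X=\{x_0,\dots,x_n\}$ and observing that the torso is a path is precisely the paper's argument for showing that $\srbd_\CCC$ does not dominate $\bdtw_\CCC$. No issues there.

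The other direction is where there is a genuine gap, and you acknowledge it yourself. To show that $\bdtw_\CCC$ does not dominate $\srbd_\CCC$, you must exhibit formulas of bounded $\srbd_\CCC$ for which \emph{every} $\CCC$-backdoor has a torso of high treewidth. Your construction $F_k$ with grid variables $v_{ij}$, row/column controls $r_i,c_j$, and master variable $z$ cleanly gives $\srbd_\CCC(F_k)\le 2$, but the lower bound $\bdtw_\CCC(F_k)=\Omega(k)$ is left as a sketch (``I would then argue\dots'', ``this last analysis is the technically hardest part''). That sketch is not routine: once $z\in X$, a backdoor need not contain any particular grid variable, and an adversarial backdoor could add separators $\{v_{ij}\}$ to try to fragment the incidence graph minus $X$ into pieces each adjacent to only few elements of $X$. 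Ruling this out requires a careful argument, and in fact it is not even obvious from your construction alone that the bound is $\Omega(k)$ rather than, say, $\Omega(\sqrt{k})$; the components of the incidence graph minus $X$ and the common-constraint edges have to be traced explicitly, not just asserted to be ``well connected.''

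The paper sidesteps this difficulty with a cleverer gadget: each grid vertex $v$ gets \emph{three} auxiliary variables $y_v^1,y_v^2,y_v^3$ that appear positively in a clause together with a global variable $x$ and an edge variable $z_{u,v}$. Evaluating the all-zero assignment on $X$ immediately forces at least one $y_v^i$ into every $\CCC$-backdoor $X$ (otherwise a residual clause has three positive literals, which is neither Horn nor Krom). Because the torso always contains the primal edges and the connected-component edges, this forcing makes the torso contain a subdivision of the grid, so treewidth $\ge n$ follows with essentially no case analysis. If you want to complete your own proof you would need to engineer a comparable ``forcing'' property into $F_k$; as written, the crucial half of the orthogonality claim is unproven.
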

\begin{proof}
  Since the cases $\dHorn$ and  $\Horn$ are symmetric, we may assume
  that  $\CCC\in \{\Horn$, $\Krom\}$.

  First we construct a sequence of formulas of constant $\srbd_\CCC$
  but unbounded $\bdtw_\CCC$ (similar constructions have been used
  before~\cite{GaspersSzeider13,MaehlmannSiebertzVigny21}).  For any
  $n\geq 2$ we construct a CNF formula $F_n$ based on an $n\times n$
  grid graph $G_n$ whose edges are oriented arbitrarily. We take a
  special variable $x$. For each vertex $v\in V(G_n)$ we introduce
  three variables $y_v^i$, $1\leq i \leq 3$, and for each oriented
  edge $(u,v) \in E(G_n)$ we introduce a variable $z_{u,v}$ and the
  clauses $c_{u,v}=\{x,y_u^1,y_u^2,y_u^3,z_{u,v}\}$ and
  $d_{u,v}=\{\neg x,\neg y_v^1,\neg y_v^2,\neg y_v^3,\neg
  z_{u,v}\}$. Let $X$ be a $\CCC$\hy backdoor of~$F_n$.  We observe
  that for each $v\in V(G_n)$, $X$ must contain at least one of the
  variables $y_v^1,y_v^2,y_v^3$ since, otherwise, $F[X\mapsto 0]$
  would contain a clause (a subset of $c_{u,v}$) that is neither Horn
  nor Krom. W.l.o.g, we assume $y_v^1\in X$ for all $v\in V(G_n)$.
  Furthermore, for each edge $(u,v) \in E(G_n)$, the torso graph
  $\torso_F(X)$ will contain the edge $\{y_u^1,y_v^1\}$ or the
  edges $\{y_u^1z_{u,v}\}$ and $\{z_{u,v},y_v^1\}$. Consequently,
  $\torso_F(X)$ has a subgraph that is isomorphic to a subdivision
  of~$G_n$. Since the treewidth of $G_n$ is $n$, and subdividing edges
  does not change the treewidth, we conclude that
  $\bdtw_\CCC(F_n)\geq n$.

  It remains to show that $\srbd_\CCC$ does not dominate $\bdtw_\CCC$.
  Consider again the formula~$\Chain_d$ from above with
  $\srbd_\CCC(\Chain_d)=d$ by Lemma~\ref{lem:chain}. The set
  $X=\{x_0,\dots,x_n\}$, $n=3 \cdot 2^d -2$, is a $\CCC$\hy backdoor
  of $F_d$. The torso graph $\torso_X(F_d)$ is a path of length $n$,
  which has treewidth~$1$. Thus $\bdtw_\CCC(F_d)=1$.
  \end{proof}

\subsection{Backdoor size  into  heterogeneous base classes based on \\
   \Horn, \dHorn, and \Krom}\label{subs:hetero}
 Consider the possibility that different assignments to the backdoor
 variables move the formula into different base classes, e.g., for
 $B\subseteq \var(F)$ and $\tau,\tau'\rightarrow \{0,1\}$, we have
 $F[\tau]\in \Horn$ and $F[\tau']\in \Krom$.  We can see such a
 backdoor $B$ as a $\CCC$\hy backdoor for
 $\CCC=\Horn \cup \Krom$\footnote{Observe that for
   $F\in \Horn \cup \Krom$, either all of $F$'s clauses are Horn
   clauses, or all its clauses are Krom clauses, but $F$ does not
   contain a mixture of both (except for clauses that are both Horn
   and Krom)}.  Gaspers et
 al.~\cite{GaspersMisraOrdyniakSzeiderZivny17} have shown that
 $\srbd_{\Horn \cup \Krom}$ strictly dominates $\srbd_{\Horn}$ and
 $\srbd_{\Krom}$, They also showed that computing
 $\srbd_{\Horn \cup \Krom}$ is FPT, but computing
 $\srbd_{\Horn \cup \dHorn}$ is W[2]-hard (i.e., unlikely to be FPT).

\begin{proposition}\label{pro:orthogonal-h}
  For any two different classes
  $\CCC \neq \CCC' \in \{\Horn,\dHorn,\Krom\}$,
  $\sbs_{\CCC \cup \CCC'}$ and $\srbd_\CCC$ are domination orthogonal.
\end{proposition}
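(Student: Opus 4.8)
The plan is to establish the two non-domination statements that together constitute orthogonality.

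For the direction that $\sbs_{\CCC\cup\CCC'}$ does not dominate $\srbd_\CCC$, I would exhibit a family of constant $\CCC$\hy backdoor depth but unbounded $(\CCC\cup\CCC')$\hy backdoor size, built from disjoint copies of a single universally bad clause. Let $F^\star\in\CNF$ consist of the one clause $\{x_1,x_2,\neg x_3,\neg x_4\}$; having two positive literals, two negative literals, and four literals in total, this clause is $\Horn$-bad, $\dHorn$-bad, and $\Krom$-bad, so $F^\star\notin\CCC\cup\CCC'$ for any choice of the two classes. Let $F^\star_n$ be the disjoint union of $n$ variable-disjoint copies of $F^\star$. Since the backdoor depth of a disconnected formula is the maximum over its connected components, $\srbd_\CCC(F^\star_n)=\srbd_\CCC(F^\star)$, which is a constant (at most $2$ for each of the three base classes, as one checks by branching on one or two of the four variables of the clause). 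For the size lower bound I would argue by pigeonhole: if $B$ were a $(\CCC\cup\CCC')$\hy backdoor of $F^\star_n$ with $|B|<n$, then some copy would contain no variable of $B$, so for \emph{every} assignment $\Bassgn{\tau}{B}$ the clause of that copy is neither satisfied nor shortened and hence occurs unchanged in $F^\star_n[\tau]$; being $\CCC$-bad and $\CCC'$-bad it certifies $F^\star_n[\tau]\notin\CCC$ and $F^\star_n[\tau]\notin\CCC'$, contradicting that $B$ is a backdoor. Hence $\sbs_{\CCC\cup\CCC'}(F^\star_n)\ge n$, which is unbounded.

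For the converse direction, that $\srbd_\CCC$ does not dominate $\sbs_{\CCC\cup\CCC'}$, I would first cut down the number of cases using the $\Horn/\dHorn$ symmetry: negating all variables of a formula swaps $\Horn$ with $\dHorn$, fixes $\Krom$, and leaves both $\srbd$ and $\sbs$ invariant (with the base class transformed accordingly), so it suffices to treat the three ordered pairs $(\CCC,\CCC')\in\{(\Horn,\dHorn),(\Horn,\Krom),(\Krom,\dHorn)\}$. The idea in each case is to take a family that already lies in $\CCC'$---so that its $(\CCC\cup\CCC')$\hy backdoor size is $0$---but whose $\srbd_\CCC$ is unbounded, which is exactly what the chain formulas of Lemma~\ref{lem:chain} provide: for $(\Horn,\dHorn)$ take $\Chain_d\in\dHorn$ with $\srbd_\Horn(\Chain_d)=d$; for $(\Horn,\Krom)$ take $\ChaiN_d\in\Krom$ with $\srbd_\Horn(\ChaiN_d)=d$; and for $(\Krom,\dHorn)$ take $\Chain_d\in\dHorn$ with $\srbd_\Krom(\Chain_d)=d$. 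The remaining three ordered pairs follow from these by the symmetry.

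I do not expect a genuine obstacle here. The only substantive ingredient---that the chain formulas have large backdoor depth into $\Horn$ and into $\Krom$---is imported wholesale from Lemma~\ref{lem:chain}, and everything else is bookkeeping: checking that the single clause $F^\star$ is simultaneously bad for all three Schaefer classes (so that one construction serves every pair in the first direction), and keeping straight which of the six ordered pairs maps to which of the three representatives under the negation symmetry.
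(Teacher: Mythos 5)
Your proposal is correct, and in the second direction it is actually simpler than the paper's own proof. For the first non-domination, you use the same padding idea the paper imports from Proposition~\ref{pro:srbd-sbs}; the only difference is that you make the seed formula explicit with the single clause $\{x_1,x_2,\neg x_3,\neg x_4\}$, which is simultaneously $\Horn$-bad, $\dHorn$-bad, and $\Krom$-bad, so one construction serves all six pairs. For the second non-domination, the paper builds a more elaborate witness: it picks $F_d\in\CCC\setminus\CCC'$ and $F_d'\in\CCC'\setminus\CCC$ (from $\Chain_d$, $\ChaiN_d$ and their negated variants), introduces a fresh variable $z$, adds $z$ positively to all clauses of $F_d$ and negatively to all clauses of $F_d'$, and takes the union $G^*_d$; this yields $\sbs_{\CCC\cup\CCC'}(G^*_d)=1$ while $\srbd_\CCC(G^*_d)\geq d$. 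You instead simply observe that any formula already in $\CCC'$ has $(\CCC\cup\CCC')$-backdoor size zero, so $\Chain_d\in\dHorn$ (or $\ChaiN_d\in\Krom$, as appropriate) together with Lemma~\ref{lem:chain} immediately gives a formula with $\sbs_{\CCC\cup\CCC'}=0$ and $\srbd_\CCC=d$. That is a shorter route to orthogonality and dispenses with the $z$-gadget. The paper's heavier construction does yield a marginally stronger-looking example (one with nonzero heterogeneous backdoor size that still forces large depth), but nothing in the statement of the proposition requires that extra strength, so your streamlining is a genuine improvement. Your reduction of the six ordered pairs to three via the $\Horn/\dHorn$ negation symmetry is also sound and matches the paper's use of $\overline{\Chain}_d$ and $\overline{\ChaiN}_d$.
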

\begin{proof}
  Let $\CCC \neq \CCC' \in \{\Horn,\dHorn,\Krom\}$.
  
  With the same padding argument as used in the proof of
  Proposition~\ref{pro:srbd-sbs}, we can show that there are formulas
  of constant $\srbd_\CCC$ and arbitrarily large
  $\sbs_{\CCC \cup \CCC'}$.

\newcommand{\NChain}{\overline{\Chain}}
\newcommand{\NChaiN}{\overline{\ChaiN}}

For the converse direction, we utilize the formulas $\Chain$ and
$\ChaiN$ from above, as well as the formulas $\NChain$ and $\NChaiN$
obtained from $\Chain$ and $\ChaiN$, respectively, by flipping all
literals to the opposite polarity.  For any pair $\CCC,\CCC'$, we can
choose formulas $F_d,F_d' \in \{\Chain$, $\ChaiN$, $\NChain$,
$\NChaiN\}$, such that $F_d \in \CCC\setminus \CCC'$ and
$F_d' \in \CCC'\setminus \CCC$.  By Lemma~\ref{lem:chain} (applied
directly or via symmetry) we have
$\srbd_{\CCC'}(F_d)=\srbd_{\CCC}(F_d')=d$.  We may assume that $F_d$
and $F_d'$ have disjoint sets of variables. We pick a new variable $z$
and add it positively to all clauses of $F_d$, and negatively to all
clauses of $F_d'$, obtaining the formulas $G_d$ and $G_d'$,
respectively. Let $G^*_d=G_d \cup G_d'$.  On the one hand,
$\srbd_\CCC(G^*_d)\geq d$, since any component $\CCC$\hy backdoor tree
of $G^*_d$ must contain a subtree which induces a $\CCC$\hy backdoor
tree of $F_d'$, whose depth is $\geq \srbd_{\CCC}(F_d')=d$.  On the
other hand, $\sbs_{\CCC \cup \CCC'}(G^*)=1$, since
$G^*_d[z=0]=F_d\in \CCC \subseteq \CCC \cup \CCC'$ and
$G^*_d[z=1]=F_d'\in \CCC' \subseteq \CCC \cup \CCC'$.
\end{proof}

\subsection{Backdoor size into scattered base classes based on  \Horn, \dHorn, and \Krom}\label{subs:scattered}

For base classes $\CCC_1,\dots,\CCC_r\subseteq \CNF$, let
$\CCC_1 \oplus \dots \oplus \CCC_r\subseteq \CNF$ denote the class of
CNF formulas~$F$ with the property that each $F'\in \Conn(F)$ belongs
to $\CCC_i$ for some $i\in \{1,\dots,r\}$. For example, if
$F\in \Horn \oplus \Krom$, then each connected component of $F$ is
either Horn or Krom. Such \emph{scattered} base classes where
introduced by Ganian et al.~\cite{GanianRamanujanSzeider17a} for
constraint satisfaction, but the concept naturally extends to \SAT.

\begin{proposition}\label{pro:orthogonal-s}
  For any two different classes
  $\CCC \neq \CCC' \in \{\Horn,\dHorn,\Krom\}$,
  $\sbs_{\CCC \oplus \CCC'}$ and $\srbd_\CCC$ are domination orthogonal.
\end{proposition}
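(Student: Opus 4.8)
The plan is to prove the two required non-domination statements independently, reusing the chain formulas $\Chain_d$, $\ChaiN_d$ and their polarity-reversals from the proof of Proposition~\ref{pro:orthogonal-h}, together with a padding argument in the spirit of Proposition~\ref{pro:srbd-sbs}.

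\emph{$\sbs_{\CCC\oplus\CCC'}$ does not dominate $\srbd_\CCC$.} This direction is almost immediate for scattered classes: since $\CCC'\subseteq\CCC\oplus\CCC'$, every formula that lies entirely in $\CCC'$ has $\CCC\oplus\CCC'$-backdoor size $0$. So it suffices to produce, for each $d$, a formula $F_d\in\CCC'\setminus\CCC$ with $\srbd_\CCC(F_d)=d$. Exactly as in the proof of Proposition~\ref{pro:orthogonal-h}, for every ordered pair of distinct Schaefer classes $(\CCC,\CCC')$ one of $\Chain_d$, $\ChaiN_d$, $\overline{\Chain_d}$, $\overline{\ChaiN_d}$ lies in $\CCC'\setminus\CCC$; invoking Lemma~\ref{lem:chain}, together with the fact that flipping all literals exchanges $\srbd_\Horn$ and $\srbd_\dHorn$ and leaves $\srbd_\Krom$ unchanged, that formula has $\CCC$-backdoor depth exactly $d$. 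Hence $\{F_d\}_{d\geq1}$ has $\sbs_{\CCC\oplus\CCC'}$ bounded (by $0$) while $\srbd_\CCC$ is unbounded.

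\emph{$\srbd_\CCC$ does not dominate $\sbs_{\CCC\oplus\CCC'}$.} Here I would use a padding argument, but one has to be slightly careful: simply duplicating a formula $F\notin\CCC$ is not enough, because $F$ might belong to $\CCC'$, in which case every disjoint union of copies of $F$ already lies in $\CCC\oplus\CCC'$. The remedy is to start from a \emph{connected} CNF formula $F$ with $F\notin\CCC\cup\CCC'$, which exists for any two distinct Schaefer classes (for instance a single clause of suitable length and polarity, such as $\{x,y,z\}$ for $\{\CCC,\CCC'\}=\{\Horn,\Krom\}$, or $\{x,y,\neg u,\neg v\}$ for $\{\CCC,\CCC'\}=\{\Horn,\dHorn\}$). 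As $F$ is connected, $F\notin\CCC\cup\CCC'$ is equivalent to $F\notin\CCC\oplus\CCC'$, so $\sbs_{\CCC\oplus\CCC'}(F)\geq1$. Let $F^{(n)}$ be the disjoint union of $n$ variable-disjoint copies $F_1,\dots,F_n$ of $F$. Because the copies share no variables, for any assignment $\tau$ of a set of variables we have $F^{(n)}[\tau]=\bigsqcup_{i}F_i[\tau\!\restriction\!\var(F_i)]$, so $F^{(n)}[\tau]\in\CCC\oplus\CCC'$ forces each $F_i[\tau\!\restriction\!\var(F_i)]\in\CCC\oplus\CCC'$. Consequently the intersection of any $\CCC\oplus\CCC'$-backdoor $B$ of $F^{(n)}$ with $\var(F_i)$ is a $\CCC\oplus\CCC'$-backdoor of $F_i$, whence $|B|=\sum_i|B\cap\var(F_i)|\geq n\cdot\sbs_{\CCC\oplus\CCC'}(F)\geq n$. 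On the other hand $F^{(n)}$ is disconnected, so $\srbd_\CCC(F^{(n)})=\max_i\srbd_\CCC(F_i)=\srbd_\CCC(F)$ is a constant independent of $n$. Thus $\{F^{(n)}\}_{n\geq1}$ has $\srbd_\CCC$ bounded while $\sbs_{\CCC\oplus\CCC'}$ is unbounded.

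The only genuinely new ingredient over the earlier separation proofs is the behaviour of backdoor \emph{size} under the scattered operation in the second direction; the two facts that need care are that a base formula outside $\CCC\cup\CCC'$ is available (otherwise the padding collapses) and that a $\CCC\oplus\CCC'$-backdoor of a disjoint union splits across the components, both of which are routine once stated correctly. Everything about backdoor \emph{depth} is inherited from Lemma~\ref{lem:chain} and the polarity symmetry already used in Proposition~\ref{pro:orthogonal-h}.
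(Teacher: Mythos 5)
Your proof is correct and follows essentially the same two-pronged approach as the paper: a padding argument for one direction and the observation that a single formula entirely inside $\CCC'$ has scattered backdoor size $0$ for the other. Two small remarks. First, in the direction that $\sbs_{\CCC\oplus\CCC'}$ does not dominate $\srbd_\CCC$, the paper takes $F^*_d=F_d\cup F_d'$ with $F_d\in\CCC\setminus\CCC'$ and $F_d'\in\CCC'\setminus\CCC$, whereas you observe that $F_d'$ alone already suffices since $F_d'\in\CCC'\subseteq\CCC\oplus\CCC'$; your version is a harmless simplification. Second, your caveat in the padding direction — that the base formula must lie outside $\CCC\cup\CCC'$ (equivalently, outside $\CCC\oplus\CCC'$ for connected $F$), since a formula in $\CCC'$ would make all disjoint copies free for the scattered class — is a genuine detail the paper passes over when it simply cites ``the padding argument from Proposition~\ref{pro:srbd-sbs}''; you are right to make it explicit, and the argument that a $\CCC\oplus\CCC'$-backdoor of a disjoint union decomposes across components is exactly what is needed to finish that step.
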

\begin{proof}
  Let $\CCC \neq \CCC' \in \{\Horn,\dHorn,\Krom\}$.  Again,
  we can use the padding argument from the proof of
  Proposition~\ref{pro:srbd-sbs} to show that there are formulas of constant
  $\srbd_\CCC$ and arbitrarily large $\sbs_{\CCC \oplus \CCC'}$.  For
  the converse direction, we argue as in the proof of
  Proposition~\ref{pro:orthogonal-s} that there are variable-disjoint formulas
  $F_d \in \CCC\setminus \CCC'$ and $F_d' \in \CCC'\setminus \CCC$
  with $\srbd_{\CCC'}(F_d)=\srbd_{\CCC}(F_d')=d$.
  For $F^*_d=F_d \cup F_d'$ we have  $\srbd_\CCC(F^*_d)\geq d$ but 
  $\sbs_{\CCC \oplus \CCC'}(F^*)=0$ since $F^*_d \in \CCC \oplus \CCC'$.
\end{proof}

\subsection{Deletion backdoor size  into \QHorn}\label{ssec:qhorn}
A CNF formula $F$ is \emph{quadratic Horn} if there exist a mapping
$f:\var(F)\rightarrow [0,1]$ such that for each clause $c\in F$ we
have
$\sum_{x\in C\cap \var(C)}f(x)+\sum_{\neg x \in C \setminus
  \var(C)}1-f(x)\leq 1$. The class $\QHorn$ of all quadratic Horn
formulas properly contains $\Horn$, $\dHorn$, and $\Krom$ (which can
be seen by taking $f$ to be the constant mapping to 1, to 0, and to
$1/2$, respectively). Satisfiability and recognition of quadratic Horn
formulas can be decided in polynomial time
\cite{BorosCramaHammer90,BorosHammerSun94}.  Gaspers et
al.~\cite{GaspersOrdyniakRamanujanSaurabhSzeider16} considered
\emph{deletion} $\QHorn$\hy backdoors, since computing $\sbs_\QHorn$
is W[2]-hard.  A deletion $\QHorn$\hy backdoor of $F\in \CNF$ is a set
$B \subseteq \var(F)$ such that
$F-B:= \SB c\setminus (B\cup \overline B) \SM c\in F\SE \in
\QHorn$. We denote the size of a smallest deletion $\QHorn$\hy
backdoor of $F$, i.e., the deletion $\QHorn$\hy backdoor size of $F$,
by $\dbs_\QHorn(F)$. Ramanujan and Saurabh \cite{RamanujanSaurabh17}
showed that computing $\dbs_\QHorn(F)$ is fixed-parameter tractable,
improving upon an FPT-approximation result by Gaspers et
al.~\cite{GaspersOrdyniakRamanujanSaurabhSzeider16}.  Every deletion
$\QHorn$\hy backdoor is a $\QHorn$\hy backdoor, but the
converse does not hold. Hence $\sbs_\QHorn$ strictly dominates
$\dbs_\QHorn$.

\begin{proposition}\label{pro:orthogonal-q}
  For $\CCC\in \{\Horn,\dHorn,\Krom\}$, the parameters $\srbd_\CCC$
  and $\dbs_\QHorn$ are domination orthogonal.
\end{proposition}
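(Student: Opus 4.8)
The plan is to follow the two-part template of the earlier orthogonality proofs. Observe first that the polarity flip --- replacing every literal by its negation --- is a bijection on $\CNF$ that preserves connectivity and satisfiability, exchanges $\Horn$ with $\dHorn$, and fixes $\Krom$ and $\QHorn$ (in the defining inequality of $\QHorn$ the flip amounts to replacing $f$ by $1-f$). Hence it leaves $\dbs_\QHorn$ unchanged and carries $\srbd_{\Horn}$ to $\srbd_{\dHorn}$, so it suffices to treat $\CCC\in\{\Horn,\Krom\}$.

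For the direction ``constant $\srbd_\CCC$, unbounded $\dbs_\QHorn$'' I would reuse the padding argument from the proof of Proposition~\ref{pro:srbd-sbs}. Fix the connected two-clause formula $F_0=\{\{x,y,z\},\{\neg x,\neg y,\neg z\}\}$, which is not in $\QHorn$: any witnessing map $f$ would have to satisfy both $f(x)+f(y)+f(z)\le 1$ and $(1-f(x))+(1-f(y))+(1-f(z))\le 1$ simultaneously, which is impossible. Thus $\dbs_\QHorn(F_0)\ge 1$, while $\srbd_\CCC(F_0)\le|\var(F_0)|=3$ is finite since $\EMPTY\subseteq\CCC$. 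Let $F_n$ be a disjoint union of $n$ variable-disjoint copies of $F_0$. By the recursive definition~\eqref{eq:defsrbd} (a component node at the root), $\srbd_\CCC(F_n)=\srbd_\CCC(F_0)$ stays bounded. For $\dbs_\QHorn(F_n)$: given a deletion $\QHorn$-backdoor $B$ of $F_n$, let $B_j$ be its restriction to the $j$-th copy; then the clauses of $F_n-B$ originating from that copy are exactly $F_0-B_j$, a subset of the $\QHorn$ formula $F_n-B$, hence $F_0-B_j\in\QHorn$ (using closure of $\QHorn$ under deleting clauses), so $B_j$ is a deletion $\QHorn$-backdoor of $F_0$ and is nonempty. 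As the $B_j$ are pairwise disjoint, $|B|\ge n$, so $\dbs_\QHorn(F_n)\ge n$.

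For the direction ``constant $\dbs_\QHorn$, unbounded $\srbd_\CCC$'' I would reuse the chain formulas: since $\Chain_d\in\dHorn\subseteq\QHorn$ we have $\dbs_\QHorn(\Chain_d)=0$, while $\srbd_\Horn(\Chain_d)=\srbd_\Krom(\Chain_d)=d$ by Lemma~\ref{lem:chain}, which handles $\CCC=\Horn$ and $\CCC=\Krom$ at once. The case $\CCC=\dHorn$ follows by applying the polarity flip, using $\overline{\Chain_d}\in\Horn\subseteq\QHorn$ with $\dbs_\QHorn(\overline{\Chain_d})=0$ and $\srbd_{\dHorn}(\overline{\Chain_d})=\srbd_{\Horn}(\Chain_d)=d$.

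The only steps requiring any care --- and the closest thing to an obstacle --- are the elementary closure facts about $\QHorn$ used above: closure under deleting clauses (immediate from the conjunctive-over-clauses form of the definition, which is existentially quantified over the single map $f$) and invariance under the polarity flip. Both are routine, so the proposition then follows in a few lines exactly as sketched.
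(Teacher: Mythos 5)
Your proof follows the same two-part template as the paper's: the padding argument for the ``constant $\srbd_\CCC$, unbounded $\dbs_\QHorn$'' direction, and the chain formulas $\Chain_d$ (together with the polarity-flip symmetry for the $\dHorn$ case) for the reverse direction. The only cosmetic difference is that you bound $\dbs_\QHorn(F_n)$ directly by exhibiting a concrete two-clause formula $F_0\notin\QHorn$ and using closure of $\QHorn$ under clause deletion, whereas the paper routes through ``$\sbs_\QHorn$ large, hence $\dbs_\QHorn$ large''; both are fine, and your version makes the padding step more explicit than the paper does.
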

\begin{proof}
  The padding argument from the proof of
  Proposition~\ref{pro:srbd-sbs} shows that there are formulas of
  constant $\srbd_\CCC$ and arbitrarily large $\sbs_\QHorn$, and
  therefore of arbitrarily large $\dbs_\QHorn$.

  
  For the reverse direction, consider the formula $\Chain_d$. By
  Lemma~\ref{lem:chain}, $\srbd_\CCC(\Chain_d)=d$ for
  $\CCC \in \{\Horn, \Krom\}$. By construction,
  $\Chain_d\in \dHorn \subseteq \QHorn$, thus
  $\dbs_\QHorn(\Chain_d)=0$.  Hence neither $\srbd_\Horn$ nor
  $\srbd_\Krom$ dominates $\dbs_\QHorn$. A symmetric argument shows
  that $\srbd_\dHorn$ does not dominate $\dbs_\QHorn$.
 \end{proof}

\subsection{Backdoor size into bounded incidence
  treewidth}
The \emph{incidence treewidth} of a formula $F$ is the treewidth of
its incidence graph.  Let $\WWW_t$ denote the class of all CNF
formulas of incidence treewidth $\leq t$.  \SAT can be solved in
polynomial time for formulas in $\WWW_t$
\cite{SamerSzeider10,Szeider04b}.  Gaspers and
Szeider~\cite{GaspersSzeider13} showed that for every constant
$t\geq 1$, $\sbs_{\WWW_t}$ can be FPT approximated. Hence \SAT is FPT
parameterized by $\sbs_{\WWW_t}$.

\begin{proposition}\label{pro:orthogonal-t}
  For $\CCC\in \{\Horn,\dHorn,\Krom\}$ and every $t\geq 1$, the
  parameters $\srbd_\CCC$ and $\sbs_{\WWW_t}$ are domination
  orthogonal.
\end{proposition}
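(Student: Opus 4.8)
The plan is to prove orthogonality in the usual two steps: first exhibit a family witnessing that $\sbs_{\WWW_t}$ does not dominate $\srbd_\CCC$, then a family witnessing that $\srbd_\CCC$ does not dominate $\sbs_{\WWW_t}$. As elsewhere in this section, flipping the polarity of all literals leaves the incidence graph (hence $\sbs_{\WWW_t}$) unchanged while swapping $\Horn$ and $\dHorn$, so it suffices to treat $\CCC \in \{\Horn,\Krom\}$.

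For the first step I would reuse the chain formulas $\Chain_d$. The incidence graph of $\Chain_d$ is the path $x_0, c_1, x_1, c_2, \dots, c_n, x_n$ with one extra pendant vertex $y_i$ attached to each clause vertex $c_i$; this is a caterpillar and hence has treewidth $1$. Therefore $\Chain_d \in \WWW_1 \subseteq \WWW_t$ for every $t \geq 1$, so $\sbs_{\WWW_t}(\Chain_d) = 0$, whereas $\srbd_\Horn(\Chain_d) = \srbd_\Krom(\Chain_d) = d$ by Lemma~\ref{lem:chain}; letting $d \to \infty$ gives the claim.

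For the second step I would, for each $n \geq 2$, take the formula $F_n$ whose variables are the vertices of the $n \times n$ grid $G_n$ and whose clauses are the implications $\{\neg u, v\}$, one per edge $\{u,v\}$ of $G_n$ with an arbitrary orientation. Every such clause has exactly one positive and one negative literal and length two, so $F_n \in \Horn \cap \dHorn \cap \Krom$ and $\srbd_\CCC(F_n) = 0$. To lower-bound $\sbs_{\WWW_t}(F_n)$, let $B$ be any $\WWW_t$-backdoor of $F_n$ and fix an arbitrary assignment $\tau \colon B \to \{0,1\}$. Any clause $\{\neg u, v\}$ with $u,v \notin B$ is untouched by $\tau$ and therefore survives in $F_n[\tau]$, so the incidence graph $G_{F_n[\tau]}$ contains the $1$-subdivision of $G_n - B$ as a subgraph. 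Since a graph is a minor of its $1$-subdivision, $\mathrm{tw}(G_{F_n[\tau]}) \geq \mathrm{tw}(G_n - B)$; and since adding the $|B|$ deleted vertices to every bag of a tree decomposition of $G_n - B$ yields one of $G_n$, we have $\mathrm{tw}(G_n - B) \geq \mathrm{tw}(G_n) - |B| = n - |B|$. As $F_n[\tau] \in \WWW_t$ forces $\mathrm{tw}(G_{F_n[\tau]}) \leq t$, this gives $|B| \geq n - t$, hence $\sbs_{\WWW_t}(F_n) \geq n - t$, which is unbounded.

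The only nontrivial part is the second step, and within it the point that needs care is the backdoor-size lower bound, i.e., showing that no bounded set of assigned variables can collapse the incidence treewidth. The key observation making this work is that a length-two clause both of whose variables avoid $B$ is unaffected by every assignment of $B$, so the entire subdivided grid $G_n - B$ persists inside every restriction $F_n[\tau]$; the standard facts that a $1$-subdivision does not decrease treewidth and that deleting $k$ vertices decreases treewidth by at most $k$ then finish the estimate. The remaining ingredients — membership of the implication clauses in all three base classes, and the treewidth of a caterpillar — are immediate.
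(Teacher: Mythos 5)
Your proof is correct, and the second half takes a genuinely different route from the paper's. For the direction that $\srbd_\CCC$ does not dominate $\sbs_{\WWW_t}$, you and the paper both use $\Chain_d$; your observation that its incidence graph is a caterpillar and hence has treewidth exactly $1$ is in fact tighter than the paper's stated bound of $2$, and it is needed to cover the case $t=1$ that the proposition claims. For the other direction, the paper takes $n$ variable-disjoint copies of a fixed formula whose incidence graph contains the $(t+1)\times(t+1)$ grid as a minor, and argues that any $\WWW_t$-backdoor must hit every copy; you instead build a single formula from the $n\times n$ grid with binary implication clauses and derive $|B|\ge n-t$ from the chain $\mathrm{tw}(G_{F_n[\tau]})\ge \mathrm{tw}(G_n-B)\ge \mathrm{tw}(G_n)-|B|$. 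Both are sound: the paper's ``many copies'' argument is slightly more elementary (no quantitative treewidth estimates beyond the grid lower bound), while yours is a single-family construction that makes the deletion-decreases-treewidth-by-at-most-$|B|$ inequality do all the work; your clause choice $\{\neg u,v\}$ also happens to lie in all three base classes simultaneously, though this is moot after the polarity-flip reduction. One cosmetic slip: you announce ``first show $\sbs_{\WWW_t}$ does not dominate $\srbd_\CCC$, then the converse,'' but your ``first step'' ($\Chain_d$, bounded $\sbs_{\WWW_t}$ and unbounded $\srbd_\CCC$) actually establishes the converse non-domination, and your ``second step'' establishes the one you announced as first; the substance is fine, only the labels are swapped.
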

\begin{proof}
  Let $t\geq 1$ be a constant and $\CCC\in \{\Horn,\Krom\}$; the cases
  $\dHorn$ and $\Horn$ are symmetric.

  Consider the $(t+1)\times (t+1)$ grid graph $G_{t+1}$.  We construct the
  CNF formula $T$ by introducing a variable $x_v$ for every
  $v\in V(G_{t+1})$ and a clause $\{\neg x_u, \neg x_v\}$ for every
  edge $uv\in E(G_{t+1})$. Let $F_n$ be the formula consisting of $n$
  variable-disjoint copies of $T$.  By construction, $F_n\in \CCC$,
  hence $\srbd_\CCC(F_n)=0$. The incidence graph of $T$ contains
  $G_{t+1}$ as a minor, hence $T$'s incidence treewidth is
  $\geq t+1$. Consequently, any $\WWW_t$\hy backdoor of $F_n$ must contain
  at least one variable from each of the $n$ copies of $F_{t+1}$. Thus
  $\sbs_{\WWW_t}(F_n)\geq n$.  It follows that $\sbs_{\WWW_t}$ does
  not dominate $\srbd_\CCC$.

  For the converse direction, consider again the formula $\Chain_d$
  with $\srbd_\CCC(\Chain_d)=d$ (Lemma~\ref{lem:chain}). The incidence
  treewidth of $\Chain_d$ is 2, hence $\sbs_{\WWW_t}(\Chain_t)=0$ for
  $t\geq 2$.  Consequently, $\srbd_\CCC$ does not dominate
  $\sbs_{\WWW_t}$.
\end{proof}


}

\section{Conclusion}

We show that \SAT can be solved in linear-time for formulas of bounded
$\CCC$-backdoor depth whenever $\CCC$ is any of the well-known
Schaefer classes. We achieve this by showing that $\CCC$-backdoor depth can be
FPT-approximated for any class $\CCC=\CCC_{\alpha,s}$. This allows us
to extend the results of \MSV for the class of variable-free formulas
to all Schaefer classes of bounded and notably also unbounded clause
lengths. Our results provide an important milestone towards
generalizing and unifying the various tractability results based on
variants of $\CCC$-backdoor size (see also future work below) to the only recently introduced and
significantly more powerful $\CCC$-backdoor depth. 

There further are natural and potentially significant extensions of
backdoor depth that can benefit from our approach based on separator obstructions. Two of the
probably most promising ones that have already been
successfully employed as extensions of backdoor size are the
so-called \emph{scattered} and \emph{heterogeneous} backdoor
sets~\cite{GaspersMisraOrdyniakSzeiderZivny17,GanianRamanujanSzeider17a};
also refer to \cref{subs:hetero,subs:scattered}. Interestingly,
while those two notions lead to orthogonal tractable classes in the
context of backdoor size, they lead to the same tractable class for backdoor
depth. Therefore, lifting these two extensions to backdoor depth,
would result in a unified and significantly more general
approach. While we are hopeful that our techniques can be adapted to
this setting, one of the main remaining obstacles is that obstructions
of depth $0$ are no longer single (bad) clauses. For instance,
consider the heterogeneous class $\CCC=\Horn\cup \Krom$. Then, the
reason that a CNF formula is not in $\CCC$ can be a pair of
clauses, one in $\Horn\setminus \Krom$ and another one in
$\Krom\setminus \Horn$. Finally, another even more general but also more
challenging tractable class to consider for backdoor depth is the
class of \QHorn{} formulas (see \cref{ssec:qhorn}), which
generalizes the heterogeneous class obtained as the union of all considered
Schaefer classes.

\bibliography{literature}

\ifshort
\clearpage
\appendix

\section*{Appendix}

\section{Proofs from~\cref{sec:srbd}}

\printProofs[ssrbd]

\section{Proofs from~\cref{sec:overview}}

\printProofs[soverview]

\printProofs[sobstructionTrees]

\section{Proofs from~\cref{sec:separatorObstructions}}

\printProofs[sseparatorObstructions]

\section{Proofs from~\cref{sec:find}}

\printProofs[sfind]

\printProofs[scomp]

\fi

\end{document}
